\newenvironment{proof}{{\indent  \indent \it Proof:}}{\hfill $\blacksquare$}
\begin{document}
\title{Sensing-Assisted Communication in Vehicular Networks with Intelligent Surface}

\author{
	Kaitao Meng, \textit{Member, IEEE}, Qingqing Wu, \textit{Senior Member, IEEE}, Wen Chen, \textit{Senior Member, IEEE}, and Deshi Li
	\thanks{K. Meng is the Department of Electronic and Electrical Engineering, University College London, London, UK, and also with with the State Key Laboratory of Internet of Things for Smart City, University of Macau, Macau, 999078, China (email: kaitao.meng@ucl.ac.uk). Q. Wu and W. Chen are with the Department of Electronic Engineering, Shanghai Jiao Tong University, Shanghai 201210, China (emails: \{qingqingwu, wenchen\}@sjtu.edu.cn). D. Li is with the Electronic Information School, Wuhan University, Wuhan, 430072, China. (email: dsli@whu.edu.cn). }
}
\maketitle
%%%%%%%%%%%%%%%%%%%%%%%%%%%%%%%%%%%%%%%%%%%%%%%%%%%%%%%%%%%%%%%%%%%%%%%%%%%%%%%%%

\begin{abstract}
 The recent development of integrated sensing and communications (ISAC) technology offers new opportunities to meet high-throughput and low-latency communication as well as high-resolution localization requirements in vehicular networks. However, considering the limited transmit power of the road site units (RSUs) and the relatively small radar cross section (RCS) of vehicles with random reflection coefficients, the power of echo signals may be too weak to be utilized for effective target detection and tracking. Moreover, high-frequency signals usually suffer from large fading loss when penetrating vehicles, which seriously degrades the communication service quality of users inside vehicles. To handle this issue, we propose a novel sensing-assisted communication scheme by employing an intelligent omni-surface (IOS) on the surface of vehicles to enhance both sensing and communication (S\&C) performance. To this end, we first propose a two-stage ISAC protocol, including the joint S\&C stage and the communication-only stage, to fulfil more efficient communication performance improvements benefited from sensing. The achievable communication rate maximization problem is formulated by jointly optimizing the transmit beamforming, the IOS phase shifts, and the duration of the joint S\&C stage. However, solving this ISAC optimization problem is highly non-trivial since inaccurate estimation and measurement information renders the achievable rate lack of closed-form expression. To handle this issue, we first derive a closed-form expression of the approximate achievable rate under uncertain location information, and then unveil a sufficient and necessary condition for the existence of the joint S\&C stage to offer useful insights for practical system design. Moreover, two typical scenarios including interference-limited and noise-limited cases are analyzed to provide a performance bound and a low-complexity algorithm for the considered systems. Finally, simulation results demonstrate the effectiveness of the proposed sensing-assisted communication scheme in achieving a higher achievable rate with lower transmit power requirements.
\end{abstract}   

\begin{IEEEkeywords}
	Intelligent surface, integrated sensing and communication, sensing-assisted communication, beamforming, phase shift design, vehicular networks.
\end{IEEEkeywords}
%%%%%%%%%%%%%%%%%%%%%%%%%%%%%%%%%%%%%%%%%%%%%%%%%%%%%%%%%%%%%%%%%%%%%%%%%%%%%%%%
\newtheorem{thm}{\bf Lemma}
\newtheorem{remark}{\bf Remark}
\newtheorem{Pro}{\bf Proposition}
\newtheorem{theorem}{\bf Theorem}
\newtheorem{Assum}{\bf Assumption}
\newtheorem{Cor}{\bf Corollary}

\section{Introduction}

Vehicle-to-everything (V2X) communications are expected to play an important role in next-generation wireless networks to support promising applications \cite{Challenges2021Gyawali, Interworking2016Abboud}, such as autonomous driving, traffic management, intelligent transportation, etc. In addition to providing high-quality information transmission, the environment sensing capability of vehicles is also of great importance due to the stringent requirements of positioning accuracy and latency in V2X networks, especially for high-density and high-mobility vehicular scenarios \cite{Huang2021MIMORadar, Bayesian2021Yuan}. Although the global positioning system (GPS) can provide mobile users with location and velocity information, it may not meet the accuracy requirement and could even become unavailable due to the potential signal block or interference, especially for urban environments with dense obstructions \cite{Liu2021Design}. Fortunately, recent advances in multiple-input and multiple-output (MIMO) and millimeter-wave (mmWave)/terahertz (THz) technologies offer new opportunities to simultaneously provide high-throughput and low-latency wireless communications, as well as ultra-accurate and high-resolution wireless sensing by sharing the same spectrum resources and wireless infrastructure \cite{Zhang2021Overview, Chiriyath2017Radar}. Following the above-mentioned advantages, the investigation on integrated sensing and communications (ISAC) technology is well underway \cite{Cui2021Integrating, Ericsson2020}. By simultaneously conveying information to the receiver and extracting information from the scattered echoes, ISAC offers exciting opportunities for higher spectral efficiency and lower hardware costs \cite{Yuan2021Integrated, meng2022throughput}. 

On the one hand, through the joint design of sensing and communication (S\&C) on the same infrastructure, ISAC can exploit the integration gain to achieve a flexible trade-off between S\&C over time-, frequency-, code-, and spatial-domain resources \cite{Ren2019Performance, Luong2021Radio}. To pursue such an integration gain, most existing works mainly focus on the investigation of waveform design, resource allocation, antenna deployment, etc. \cite{LiuX2020Joint, Liu2022OptimalBeamformer}. On the other hand, the synergy between S\&C offers the potential to achieve the coordination gain in V2X networks, e.g., sensing-assisted communication and communication-assisted sensing. In particular, some useful sensing-assisted communication mechanisms in vehicular networks are proposed to reduce the overhead caused by the high manoeuvrability of vehicles \cite{liu2020radar, du2021integrated, Yuan2021Bayesian}. For example, the authors in \cite{liu2020radar} proposed a novel extended Kalman filtering (EKF) framework to predict the kinematic parameters of vehicles and allocate the transmit power of the road site units (RSUs) according to the S\&C requirements, which helps reduce the overhead of the communication beam tracking. In \cite{du2021integrated}, the beamwidth was dynamically changed to exploit the beamforming gain of massive antennas of the RSU in a more flexible way, where a narrower beam is designed towards the communication receiver during the communication-only stage while a wider beam is utilized to cover the target during the sensing stage. A novel beamforming scheme based on probabilistic prediction was proposed in \cite{Yuan2021Bayesian}, where a distribution of the estimated parameters is constructed based on the echo signals and the state transition models of the vehicles. However, the echo signals may be too weak to achieve effective target detection and tracking, since the radar cross section (RCS) of the served vehicles in urban environments is generally small and the transmit power of RSUs would be limited. Moreover, high-frequency signals generally suffer from severe fading loss when penetrating into vehicles, which seriously degrades the communication performance of users inside vehicles.

%, Gao2022Beamforming, Chen2022ActiveIRS, Hua2021Intelligent
Recently, intelligent reflecting surfaces (IRSs) have been a promising technology to change signal propagation by exploiting the massive low-cost reflecting elements \cite{Zheng2022Survey, Chen2022ActiveIRS}. However, IRSs can only provide services when both the source and destination nodes lie on the same side of the IRS \cite{wu2019beamformingDiscrete, Hua2021Intelligent}. To address this, intelligent omni-surfaces (IOSs) are proposed to achieve more flexible S\&C coverage \cite{Xu2022Simultaneously, Huang2022Transforming}. Specifically, the signal received at the IOS can be reflected towards the incident side and/or refracted towards the other side of the IOS, thereby achieving a more flexible way to reconfigure wireless channels \cite{Simultaneously2022Mu}. From the perspective of target sensing, it is also demonstrated that the RCS of the IOS can be effectively changed by controlling its phase shifts \cite{zhang2018transmission}. Most existing works mainly focused on exploiting IOSs/IRSs to improve the communication performance of static or low-mobility users, where one or more surfaces are deployed in fixed locations, such as buildings or billboards \cite{ Simultaneously2022Mu, Xu2022Channel}. For instance, in \cite{Xu2022Channel}, an IRS-aided channel estimation method was proposed for V2X networks, where one IRS is deployed on the roadside and one RSU periodically transmits pilot symbols to analyze the correlations of the time-varying frequency-selective channels. However, the communication performance under such deployment strategies may be constrained for high-mobility vehicles due to the severe penetration loss of signals and the limited coverage of IOSs. To resolve this issue, in a most recent work \cite{Huang2022Transforming}, an intelligent refracting surface was deployed on the surface of a high-speed vehicle to aid its data transmission, where a training-based channel estimation technique is adopted to transform the channels from fast fading to slow fading for more reliable transmission. However, the signalling overhead for joint estimation of the massive MIMO channel matrix parameters is relatively high, which would induce transmission delays and limit its practical application in vehicular networks. Overall, how to efficiently reduce signalling overhead and provide stable and efficient data transmission in high-mobility communication scenarios is still an open issue.

\begin{figure}[t]
	\centering
	\includegraphics[width=8.7cm]{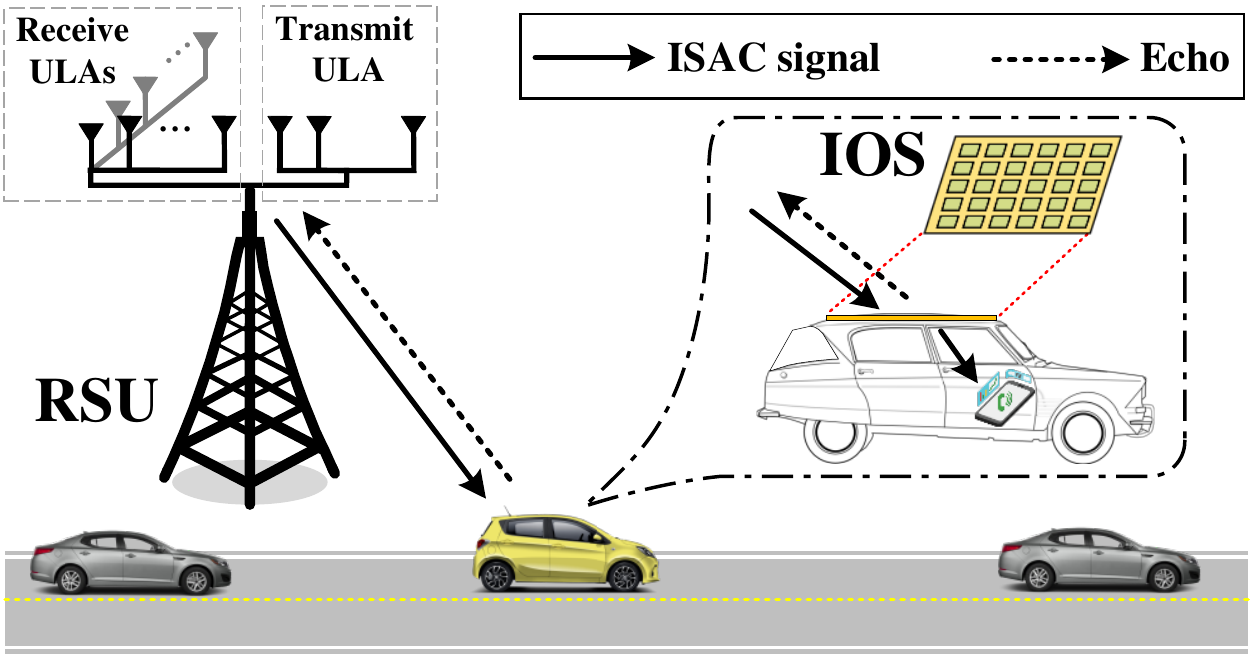}
	\caption{Scenarios of sensing-assisted vehicular
		communication.}
	\label{figure1a}
\end{figure}
In this paper, we propose a novel sensing-assisted communication scheme by employing an IOS on the surface of vehicles to enhance both the S\&C performance, as shown in Fig.~\ref{figure1a}. On the one hand, deploying an IOS on the surface of vehicles can focus the refracted signals toward users inside the vehicles and thereby improve the communication performance; on the other hand, the boosted echo signals reflected by the IOSs can be exploited at the RSU receivers to enhance the detection and tracking performance, and the controllability of echo signals can also reduce potential interference to other wireless systems in the same frequency. Furthermore, by utilizing the sensing results extracted from reflected echoes, the transmit beamforming and IOS phase shifts can be designed more effectively. To fulfil more effective communication improvements brought by sensing, we propose a two-stage ISAC protocol, including the joint S\&C stage and the communication-only stage. Specifically, during the joint S\&C stage, the IOS divides the signal power incident the IOS elements towards the RSU and the communication user respectively, so as to improve both S\&C services; during the communication-only stage, the signal power is concentrated to the communication user with the exploitation of the sensing results obtained in the former stage. This naturally leads to a fundamental trade-off between sensing accuracy and achievable rate. Specifically, if the IOS phase shift design aims at the maximization of the RSU's received power for a longer duration, the estimation/prediction accuracy is higher, which could benefit the subsequent communication; whereas this inevitably reduces the remaining time for communication services. On the other hand, higher signal power reflected towards the RSU can enhance the sensing accuracy and further improve the communication performance during the communication-only stage, while the achievable rate during the former stage inevitably decreases. Thus, it is important to properly allocate time and power resources to achieve an effective balance between S\&C and further provide a better sensing-assisted communication performance improvement.

To investigate the communication improvement brought by sensing, the achievable rate is maximized by jointly optimizing the transmit beamforming, the IOS phase shifts, and the duration of the joint S\&C stage. However, solving this ISAC optimization problem is highly non-trivial since there is no closed-form expression of the rate due to the inaccurate location information of vehicles. To handle this issue, we first derive a closed-form expression under a sufficiently large number of IOS elements, which is verified to achieve a good approximation by Monte Carlo simulations under practical setups. Furthermore, we present analysis for two typical scenarios including interference-limited and noise-limited cases to provide useful insights. The main contributions of this paper can be summarized as follows:
\begin{itemize}
	\item We propose a novel IOS-aided sensing-assisted communication scheme for vehicle communication systems, where an IOS is deployed on vehicles to enhance communication performance by jointly optimizing the transmit beamforming, the phase shift vectors, and the joint S\&C stage ratio. Based on the proposed two-stage protocol, the state estimation and measurement results during the joint S\&C stage are effectively utilized for communication improvement during the communication-only stage.
	\item We derive a closed-form expression of the achievable rate under uncertain angle information, where the relationship between beamforming gain and angle variance is provided. By properly allocating time and power resources to achieve an efficient balance between S\&C, communication improvements brought by sensing can be extremely enhanced. We further derive a sufficient and necessary condition for the existence of the joint S\&C stage, which can offer useful insights for practical system design.
	\item We propose an alternating optimization (AO) algorithm to jointly optimize the transmit power, the IOS phase shift, and the S\&C stage ratio for multi-vehicle cases. Then, two typical scenarios including interference-limited and noise-limited cases are analyzed to provide a performance bound for the considered systems.
	\item Finally, our simulation results verify the trade-off between S\&C performance for IOS-enabled ISAC systems and validate the superiority of the proposed schemes over benchmark schemes. The results reveal that deploying IOS on vehicles can effectively improve the S\&C coverage of the RSU and reduce the requirements of transmit power. 
\end{itemize}

\textit{Notations}: ${\rm{diag}}({\bm{x}})$ denotes a diagonal matrix whose main diagonal elements are the elements of ${\bm{x}}$. For a general matrix ${\bm{X}}$,  ${\bm{X}}^\dag$, ${\bm{X}}^T$, and ${\bm{X}}^H$ respectively denote its conjugate, transpose, and conjugate transpose. $\mathbb{E}_x[\cdot]$ denotes statistical expectation on the random variable $x$. $j$ in $e^{j \theta}$ denotes the imaginary unit. The distribution of a circularly symmetric complex Gaussian (CSCG) random variable with mean $x$ and variance $\sigma^2$ is denoted by $\mathcal{C} \mathcal{N}(x,\sigma^2)$.

\section{System Model and Problem Formulation}
As shown in Fig.~\ref{figure1a}, we consider an IOS-aided sensing-assisted system, where one RSU provides ISAC services for $K$ IOS-mounted mobile vehicles, indexed by $k \in {\cal{K}} = \{1,\cdots,K\}$. Without loss of generality, it is assumed that the RSU employs a general uniform linear array (ULA) with $M_t$ transmit antennas along the $x$-axis, and two perpendicular ULAs with $M_r$ receive antennas along the $x$- and $y$-axis, respectively, as shown in Fig.~\ref{figure1a}, and the vehicles drive along a straight road that is parallel to the $x$-axis.\footnote{By adding a fixed offset to the tracked angles based on the relative location information between the road and the ULA, the proposed scheme can be readily extended to other roads with general geometry.} The IOS is deployed on the top surface of each vehicle to replace the metal panel incurring high penetration loss. In this case, signals arriving at the IOS are partially refracted through the vehicle to improve the communication performance of a device (e.g., an onboard communication device or a mobile phone carried by a passenger), while the remaining parts are reflected towards the RSU to improve the sensing performance (e.g., angle estimation and target localization). In the considered system, the motion parameters, e.g., angles, distances, and velocities of the vehicles, can be estimated by analyzing the echo signals reflected from the IOSs. These parameters can be deemed as functions of time $t \in [0, T]$, with $T$ being the maximum served duration within the available coverage of the RSU. For notational convenience, the time period $T$ is divided into multiple small time slots with equal lengths of $\Delta T$, where the time slot is indexed by $n \in {\cal{N}} = \{1, \cdots, N\}$, and $N = \frac{T}{\Delta T}$. In general, the time slot length $\Delta T$ is set short enough, and thus it is practically assumed that the motion parameters keep constant within each time slot  \cite{Jayaprakasam2017Robust, liu2020radar}. Specifically, during the $n$th time slot, the vehicle $k$'s direction is defined by $\{\psi^x_{k,n}, \psi^z_{k,n}\}$, where $\psi^x_{k,n}$ and $\psi^z_{k,n}$ are respectively the azimuth and elevation angles of the geometric path connecting the IOS of vehicle $k$ (namely IOS $k$) and the RSU, as shown in Fig.~\ref{figure1b}. The Doppler frequency and the round-trip delay of echo signals reflected from IOS $k$ are denoted by $\mu_{k,n}$ and $\tau_{k,n}$, respectively. 

\begin{figure}[t]
	\centering
	\includegraphics[width=6.8cm]{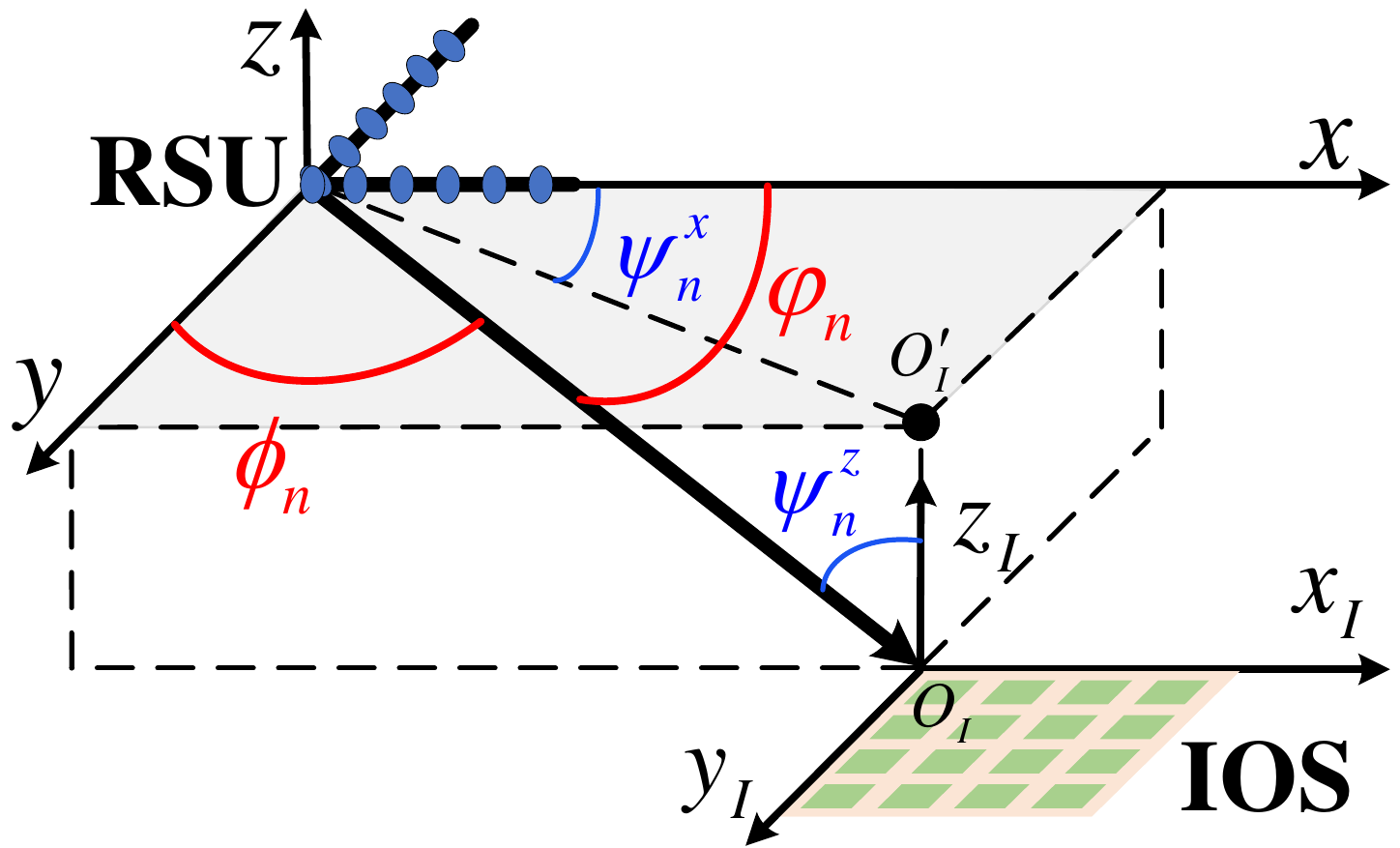}
	\caption{Description of the coordinate system. $O'_I$ is the projection point of $O_I$ on the $x-y$ plane at the coordinate of the RSU.}
	\label{figure1b}
\end{figure}

\subsection{Channel Model}
As shown in Fig.~\ref{figure1a}, the estimated angles of vehicle $k$ relative to the ULAs along the $x$- and $y$-axis are denoted by $\varphi_{k,n}$ and $\phi_{k,n}$, respectively, which are transmitted to the IOS controller for phase shift design.\footnote{It is worth noting that the estimated angles are not the exact azimuth and elevation angles between the IOS and the RSU (i.e., $\psi^x_{k,n}$ and $\psi^z_{k,n}$), which is illustrated in Fig.~\ref{figure1b}, and the relationship of these angle values is given below (\ref{SteeringVector}).} A uniform planar array (UPA) is equipped at the IOS. The number of IOS elements equipped on each vehicle is denoted by $L = L_x \times L_y$, where $L_x$ and $L_y$ denote the number of elements along the $x$- and $y$-axis, respectively. The half-wavelength antenna spacing is assumed for the UPA and ULAs. The channel between the RSU and IOS $k$ follows the free-space pathloss model and the corresponding channel power gain can be expressed as $\beta_{G,k,n} = \beta_0 d_{k,n}^{-2}$, where $\beta_0$ is the channel power at the reference distance 1 meter (m), and $d_{k,n}$ represents the distance from the RSU to IOS $k$ during the $n$th time slot. $\bm{H}^{\mathrm{DL}}_{k,n} \in \mathbb{C}^{M_t \times L}$ is the downlink channel matrix from the RSU to IOS $k$, given by
\begin{equation}\label{SteeringVector}
	\bm{H}^{\mathrm{DL}}_{k,n}=\sqrt{\beta_{G,k,n}} \bm{a}_{\mathrm{IOS}}\left(\phi_{k,n}, -\varphi_{k,n}\right) \bm{a}^T _{\mathrm{RSU}}\left(\varphi_{k,n}\right),
\end{equation}
where ${\bm{a}}_{\mathrm{RSU}}(\varphi_{k,n}) \!=\! \left[1, \cdots, e^{ {-j  \pi(M_t-1) \cos(\varphi_{k,n}) }}\right]^T$, ${\bm{a}}_{\mathrm{IOS}}(\phi_{k,n}, -\varphi_{k,n}) = \left[1, \cdots, e^{ {j \pi(L_{x}-1)  \cos(\varphi_{k,n}) }}\right]^T  \otimes\left[1, \cdots, e^{ { -j \pi(L_{y}-1) \cos({\phi}_{k,n}) }}\right]^T$, $\cos\left(\varphi_{k,n}\right) = \sin\left(\psi^z_{k,n}\right) \cos \left(\psi^x_{k,n}\right)$, $\cos\left({\phi}_{k,n}\right) = \sin\left(\psi^z_{k,n}\right) \sin \left(\psi^x_{k,n}\right)$, $\otimes$ denotes the Kronecker product. 
$\bm{H}^{\mathrm{UL},x}_{k,n}$ and $\bm{H}^{\mathrm{UL},y}_{k,n} \in \mathbb{C}^{L \times M_r}$ are the uplink channel matrices from IOS $k$ to the receive antennas along the $x$- and $y$-axis, respectively, given by
\begin{equation}
	\bm{H}^{\mathrm{UL},x}_{k,n}=\sqrt{\beta_{G,k,n}}\bm{b}_{\mathrm{RSU}}\left(\phi_{k,n}\right) \bm{a}_{\mathrm{IOS}}^{T}\left(\phi_{k,n}, -\varphi_{k,n}\right), 
\end{equation}
\begin{equation} \bm{H}^{\mathrm{UL},y}_{k,n}=\sqrt{\beta_{G,k,n}}\bm{b}_{\mathrm{RSU}}\left(\varphi_{k,n}\right) \bm{a}_{\mathrm{IOS}}^{T}\left(\phi_{k,n}, -\varphi_{k,n}\right),
\end{equation}
where $ {\bm{b}}_{\mathrm{RSU}}\left(x\right) = \left[1, \cdots, e^{ {-j \pi\left(M_r-1\right)  \cos(x) }}\right]^T$. Since the IOS and the communication device inside the vehicle remain relatively stationary and the distance between them is short, the IOS-device channel changes much more slowly as compared to the RSU-IOS channel. Thus, it can be approximately considered to be the line-of-sight (LoS) channel during each transmission frame, which is given by ${\bm{h}}_k
	= \sqrt{\beta_{h,k}} [1, \cdots, e^{ {-j \pi(L_{x}-1) {\Phi}^u_k }}]^T  \otimes[1, \cdots, e^{ { -j \pi(L_{y}-1) {\Omega}^u_k }}]^T$,
where ${\Phi}^u_k \triangleq \sin \left(\psi^{u,z}_{k,n}\right) \cos \left(\psi^{u,x}_{k,n}\right)$, ${\Omega}^u_k \triangleq \sin \left(\psi^{u,z}_{k,n}\right) \sin \left(\psi^{u,x}_{k,n}\right)$, and $\beta_{h,k}= \beta_0 d_{h,k}^{-2}$, where $d_{h,k}$ represents the distance from IOS $k$ to communication device $k$. The device's direction is defined by $\{\psi^{u,x}_{k,n}, \psi^{u,z}_{k,n}\}$, where $\psi^{u,x}_{k,n}$ and $\psi^{u,z}_{k,n}$ are respectively the azimuth and elevation angles of the geometric path connecting IOS $k$ and communication device $k$, and they can be obtained based on the device location or conventional channel estimation methods \cite{Zheng2020Intelligent}.

\begin{figure}[t]
	\centering
	\includegraphics[width=8.7cm]{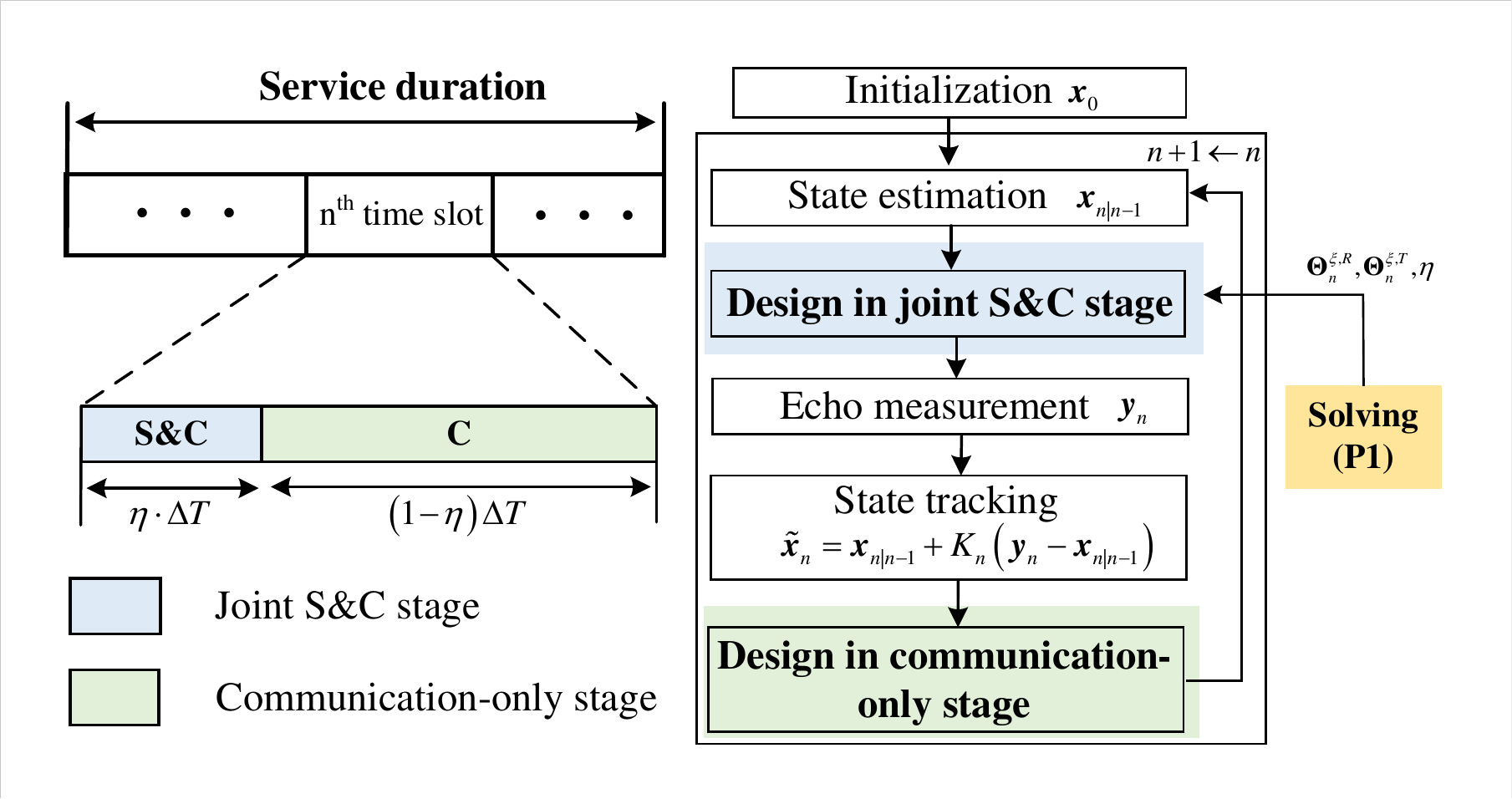}
	\caption{Illustration of the proposed two-stage sensing-assisted protocol.}
	\label{figure2}
\end{figure}
\subsection{Sensing-assisted Communication Framework}
\label{AOAEstimationError}
To effectively balance the S\&C performance and fully exploit the communication improvement brought by sensing, we design a two-stage sensing-assisted communication scheme to maximize the achievable rate of the high-mobility users. Specifically, each time slot is divided into two stages with a time splitting ratio of $\eta$ and $1-\eta$, where both S\&C services are provided in the former stage by splitting the signals incident on the IOS into two different directions, i.e., refracted signals and reflected signals, while the latter stage is solely designed for communication improvement by setting the IOS to be a totally refracting state, as shown in Fig.~{\ref{figure2}}. More specifically, during the joint S\&C stage, parts of the signals incident on the IOS are refracted into the vehicle for information converting while the others are reflected towards the RSU for the state measurements of the vehicles. After obtaining the measurements during the joint S\&C stage, the motion parameters in the current time slot can be estimated more efficiently to further improve beam tracking and communication performance during the communication-only stage. Specifically, for IOS $k$, the incident signal power on each element is generally split into two parts during the $n$th time slot: the power of the refracted and reflected signals with a splitting ratio of $\beta_{k,l,n}^{\xi,{\mathrm{T}}}$ and  $\beta_{k,l,n}^{\xi, {\mathrm{R}}}$, where $\xi \in \{\mathrm{\mathrm{S\&C}}, \mathrm{C}\}$ represents the state of the stages, i.e., the joint S\&C stage and the communication-only stage. Here, $\beta_{k,l,n}^{\xi,{\mathrm{T}}} + \beta_{k,l,n}^{\xi, {\mathrm{R}}} = 1$ and $\beta_{k,l,n}^{\xi,{\mathrm{T}}}, \beta_{k,l,n}^{\xi, {\mathrm{R}}} \in [0,1]$. Then, during the $\xi$ stage, the refraction- and reflection-coefficient matrices of IOS $k$ are given by ${\bm{\Theta}}^{\xi,{\mathrm{T}}}_{k,n} = {\rm{diag}}(\sqrt{\beta^{\xi,{\mathrm{T}}}_{k,1,n}}e^{j \theta^{\xi,{\mathrm{T}}}_{k,1,n}}, ... , \sqrt{\beta^{\xi,{\mathrm{T}}}_{k,L,n}}e^{j \theta^{\xi,{\mathrm{T}}}_{k,L,n}})$ and ${\bm{\Theta}}^{\xi, {\mathrm{R}}}_{k,n} = {\rm{diag}}(\sqrt{\beta^{\xi, {\mathrm{R}}}_{k,1,n}}e^{j \theta^{\xi, {\mathrm{R}}}_{k,1,n}}, ... , \sqrt{\beta^{\xi, {\mathrm{R}}}_{k,L,n}}e^{j \theta^{\xi, {\mathrm{R}}}_{k,L,n}})$, where $\theta^{\xi,{\mathrm{T}}}_{k,l,n}$ and $\theta^{\xi, {\mathrm{R}}}_{k,l,n} \in [0, 2\pi)$, respectively denote the refracting and reflecting phase shifts of the $l$th element of IOS $k$, $l \in {\cal{L}} = \{ 1,\cdots,L\}$.\footnote{The state-of-the-art technology can successfully realize separately controlling  the reflection phase-shift and refraction phase-shift of each element \cite{yang2018design, cai2017high}, and there are some related work investigating how to effectively realize the phase shift control in practice \cite{zhang2018transmission, bao2021programmable}. It has been demonstrated that compared to independent phase shift control, there is only $1\%$ performance loss for the coupled phase shift design mechanism in reference \cite{wang2022coupled}. The IOS can be powered by a dedicated/onboard lithium batter, and its power consumption is negligible compared to the power consumption of vehicles.}

As illustrated in Fig.~\ref{figure2}, during the joint S\&C/communication-only stage, the RSU beamforming and IOS phase shift vectors are jointly designed based on the state estimation/state tracking results, i.e., the channel state information required in our proposed scheme is composed of estimation information (also named prediction information) and measurement information. Specifically, the state evolution model of vehicle $k$ can be given by \cite{liu2020radar}
\begin{equation}
	\left\{ \!\begin{array}{l}
		\varphi_{k,n|n-1}=\varphi_{k,n-1}+d_{k,n-1}^{-1} v_{k,n-1} \Delta T \sin \left(\varphi_{k,n-1}\right)+\omega_{\varphi} \\
		\phi_{k,n|n-1} = \phi_{k,n-1} + d_{k,n-1}^{-1}{{{v_{k,n - 1}}\Delta T}}\tan \left( {{\phi _{k,n - 1}}} \right) +\omega_{\phi} \\
		d_{k,n|n-1}=d_{k,n-1}-v_{k,n-1} \Delta T \cos \left(\varphi_{k,n-1}\right)+\omega_{d} \\
		v_{k,n|n-1}=v_{k,n-1}+\omega_{v}
	\end{array}\right. \!\!\! .
\end{equation}
The covariance matrix of $\omega_{\varphi}$, $\omega_{\phi}$, $\omega_{d}$, and $\omega_{v}$ is ${\bm{Q}}_{\omega} = {\rm{diag}}([\sigma^2_{\omega_{\varphi}}, \sigma^2_{\omega_{\phi}}, \sigma^2_{\omega_{d}}, \sigma^2_{\omega_{v}}])$. Let ${\bm{x}}_{k,n} = [\varphi_{k,n}, \phi_{k,n}, d_{k,n}, v_{k,n}]$. The state evaluation model can be given by ${\bm{x}}_{k,n|n-1}=\bm{g}\left({\bm{x}}_{k,n-1}\right)+\boldsymbol{\omega}$, where $\boldsymbol{\omega} = [\omega_{\varphi}, \omega_{\phi}, \omega_{d}, \omega_{v}]$. Then, Kalman filtering is adopted for beam prediction and tracking. By denoting the state variables as ${\bm{x}}_{k,n|n-1} = [\varphi_{k,n|n-1}, \phi_{k,n|n-1}, d_{k,n|n-1}, v_{k,n|n-1}]$ and the measured variables as ${\bm{y}}_{k,n} = [\hat \varphi_{k,n}, \hat \phi_{k,n}, \hat d_{k,n}, \hat v_{k,n}]$, at the $n$th time slot, the measurement model of vehicle $k$ can be formulated as $\bm{y}_{k,n}={\bm{x}}_{k,n}+\bm{z}_{k,n}$, where $\bm{z}_{k,n} \in \mathcal{C} \mathcal{N}(0, {\bm{Q}}_{z_{k,n}} )$ and ${\bm{Q}}_{z_{k,n}} = {\rm{diag}}([\sigma^2_{z_{\varphi_{k,n}}}, \sigma^2_{z_{\phi_{k,n}}}, \sigma^2_{z_{d_{k,n}}}, \sigma^2_{z_{v_{k,n}}}])$. Overall, the number of channel state information variables required per time slot is $8K$. Accordingly, the prediction MSE matrix can be expressed as 
${\bm{M}}_{k,n \mid n-1} \!=\! {\bm{G}}_{k,n-1} {\bm{M}}_{k,n-1} {\bm{G}}_{k,n-1}^H + {\bm{Q}}_{\omega}$, where $\bm{G}_{k,n-1}=\left.\frac{\partial \bm{g}}{\partial \bm{x}}\right|_{\bm{x}=\hat{\bm{x}}_{k,n-1}}$.
Finally, the state tracking variable is $\tilde {\bm{x}}_{k,n} = [\tilde \varphi_{k,n}, \tilde \phi_{k,n}, \tilde d_{k,n}, \tilde v_{k,n}]$, given by
\begin{equation}\label{StateTracking}
	\tilde{\bm{x}}_{k,n}={\bm{x}}_{k,n \mid n-1}+\bm{K}_{k,n}\left(\bm{y}_{k,n}-{\bm{x}}_{k,n \mid n-1}\right),
\end{equation}
where $\bm{K}_{k,n}=\bm{M}_{k,n \mid n-1} \left(\bm{Q}_z + \bm{M}_{k,n \mid n-1} \right)^{-1}$.

\vspace{-1mm}
\subsection{Radar Measurement Model}
\label{RadarMeasurement}
During the joint S\&C stage, the transmitted ISAC signals at the RSU can be expressed as a weighted sum of ISAC beams, i.e., ${\bm{s}}_n(t) = \sum\nolimits_{k = 1}^K{\bm{f}}_{k,n}^{\mathrm{{S\&C}}}  {s_{k,n}}(t)$, where $s_{k,n}(t)$ is the ISAC signals for communication device $k$. It is assumed that the ISAC signals $\{s_{k,n}(t)\}$ of different users are independent random variables, with zero mean and unit variance, i.e., $\mathbb{E}[s_{k,n}[n]s_{j,n}[n]] = 0$, where $k \ne j$, and $k, j \in {\cal{K}}$. 
To maximize the achievable rate of vehicles, a low-complexity beamforming method is adopted by aligning the transmit beamforming vectors towards vehicles based on the estimation and measurement results. Hence, the transmit beamforming vectors during the joint S\&C stage and the communication-only stage are respectively given by ${\bm{f}}_{k,n}^{\mathrm{{S\&C}}} = \sqrt{\frac{p^{\mathrm{{S\&C}}}_{k,n}}{M_t}} {\bm{a}}^\dag_{\mathrm{RSU}}( \varphi_{k,n|n-1} )$ and ${\bm{f}}_{k,n}^{\mathrm{C}} = \sqrt{\frac{p^{\mathrm{C}}_{k,n}}{M_t}} {\bm{a}}^\dag_{\mathrm{RSU}}(\tilde \varphi_{k,n} )$, 
where $\varphi_{k,n|n-1}$ denotes the estimated angle based on the state evolution model and $\tilde \varphi_{k,n}$ is the state tracking angle (c.f. (\ref{StateTracking})). Similarly, the receive beamforming filter of the ULAs along the $x$- and $y$-axis are respectively given by ${\bm{v}}_{k,n}^{x} =  \sqrt{\frac{1}{M_r}} {\bm{b}}_{\mathrm{RSU}}( \varphi_{k,n|n-1} )$ and ${\bm{v}}_{k,n}^{y} =  \sqrt{\frac{1}{M_r}} {\bm{b}}_{\mathrm{RSU}}( \phi_{k,n|n-1} )$. Overall, the transmit beamforming and receive filter at the RSU are designed based on the estimation and measurement results, respectively. At the $n$th time slot, the ULA antennas along the $x$-axis receives the echoes contributed by the IOSs, expressed as
\vspace{-1mm}
\begin{align}
	{\bm{r}}^x_{n}(t) =& \sum\nolimits_{k = 1}^K  \sum\nolimits_{j = 1}^K  e^{j2\pi \mu_{j,n} t} {\bm{H}}^{\mathrm{UL},x}_{j,n} {\bm{\Theta}}_{j,n}^{\mathrm{S\&C,R}} {\bm{H}}^{\mathrm{DL}}_{j,n} {\bm{f}}_{k,n}^{\mathrm{{S\&C}}} \nonumber \\
	& \times s_{k,n}(t - \tau_{k,n}) + {\bm{z}}_r(t),
	\vspace{-1mm}
\end{align}
where ${\bm{z}}_r(t)$ denotes the noise at the receive ULAs. 
The resolved ranges and velocities in the delay-Doppler domain can be determined by utilizing the standard matched-filtering techniques \cite{liu2020radar}. It is assumed that the vehicles can be distinguished in the delay-Doppler domain. Then, for the receive ULA along the $x$-axis, the output of the matched filter for vehicle $k$ is expressed as 
%\begin{equation}
\vspace{-1mm}
\begin{align}\label{MatchFiltering}
	\widetilde{\bm{r}}^x_{k,n} \!=& {\beta_{G,k,n}} \sqrt{W} ({\bm{v}}^x_{k,n})^{H} \bm{b}_{\mathrm {RSU}}\left(\varphi_{k,n}\right)  \bm{a}^{T}_{\mathrm{IOS}}\left(\phi_{k,n},-\varphi_{k,n}\right) \nonumber \\
	& \times {\bm{\Theta}} ^{{\mathrm{S\&C,R}}}_{k,n}\bm{a}_{\mathrm{IOS}}\left(\phi_{k,n}, -\varphi_{k,n}\right) \bm{a}_{\mathrm {RSU}}^{T}\left(\varphi_{k,n}\right) {\bm{f}}_{k,n}^{\mathrm{{S\&C}}} \nonumber  \\
	& \times {\delta}\left(\tau-\tau_{k,n}, \mu-\mu_{k,n}\right)+\widetilde{\bm{z}}_{r},
	\vspace{-1mm}
\end{align}
%\end{equation}
where  ${\delta}\left(\tau-\tau_{k,n} , \mu-\mu_{k,n}\right)$ is the normalized matched-filtering output function obtained by time and frequency reversing and conjugating its own waveform for the complex transmit ISAC signal $s_{k,n}(t)$. In (\ref{MatchFiltering}), $\widetilde{\bm{z}}_{r}$ denotes the normalized measurement noise, and $W$ denotes the matched-filtering gain, which is equal to the number of symbols used for matched-filtering. The angle $\varphi_{k,n}$ can be readily measured by the maximum likelihood estimation (MLE) \cite{li1993maximum} or super-resolution algorithms like multiple signal classification (MUSIC) \cite{schmidt1986multiple}. Besides, angle $\phi_{k,n}$ can be estimated based on the received echo at the ULA along the $y$-axis in a similar way. As given in Section \ref{AOAEstimationError}, the AoA estimation results $\hat \varphi_{k,n} = \varphi_{k,n} + z_{\varphi_{k,n}}$ and $\hat \phi_{k,n} = \phi_{k,n} + z_{\phi_{k,n}}$, where $z_{\varphi_{k,n}}$ and $z_{\phi_{k,n}}$ represents the angle estimation errors, $ z_{\varphi_{k,n}} \in \mathcal{C} \mathcal{N}(0, \sigma_{z_{\varphi_{k,n}}}^{2} )$ and $ z_{\phi_{k,n}} \in \mathcal{C} \mathcal{N}(0, \sigma_{z_{\phi_{k,n}}}^{2} )$. Generally, the estimation error is inversely proportional to the received echo power at the RSU \cite{Liu2022Survey}, i.e.,
\vspace{-1.5mm}
\begin{equation}\label{ReceivedPowerEquation}
	\sigma_{z_{\varphi_{k,n}}}^{2} \propto ({{\gamma^{{\mathrm{S}},x}_{k,n}{{\sin }^2}\varphi_{k,n} }})^{-1}, \ \sigma_{z_{\phi_{k,n}}}^{2} \propto ({{\gamma^{{\mathrm{S}},y}_{k,n}{{\sin }^2}\phi_{k,n} }})^{-1},
	\vspace{-1.5mm}
\end{equation}
where ${\gamma^{{\mathrm{S}},x}_{k,n}}$ and ${\gamma^{{\mathrm{S}},y}_{k,n}}$ respectively denote the signal-to-noise ratio (SNR) at the receive antennas along the $x$- and $y$-axis after match-filtering. Due to the uncertain estimation information, at the $n$th time slot, the SNR of the received echos at the $x$-axis ULA is given in expectation form, i.e.,
\vspace{-1.5mm}
\begin{align}\label{SensingReceivedPower}
	\gamma^{{\mathrm{S}},x}_{k,n} =& \frac{\eta \Delta T}{\Delta t \sigma_s^2} {\beta^2_{G,k,n}} \mathbb{E}_{\varphi_{k,n|n-1}, \phi_{k,n|n-1}}\bigg[ |\underbrace {({\bm{v}}^x_{k,n})^{H}	{\bm{b}}_{\mathrm{RSU}}\left(\varphi_{k,n}\right)}_{\text{Receive beamforming gain}} \nonumber\\
	&\times \underbrace {\bm{a}^{T}_{\mathrm{IOS}}\left(\phi_{k,n},-\varphi_{k,n}\right) {\bm{\Theta}}^{{\mathrm{S\&C,R}}}_{k,n} \bm{a}_{\mathrm{IOS}}\left(\phi_{k,n},-\varphi_{k,n} \right)}_{\text{IOS beamforming gain}} \nonumber \\
 & \times \underbrace  {{\bm{a}}_{\mathrm{RSU}}^T\left(\varphi_{k,n}\right){\bm{f}}_{k,n}^{{\mathrm{S\&C}}}}_{\text{Transmit beamforming gain}}|^{2} \bigg],  
 \vspace{-1.5mm}
\end{align}
where $\sigma_s^2$ is the noise power at the receive antennas of the RSU, $\Delta t$ is the duration of one symbol, $\frac{\eta \Delta T}{\Delta t}$ represents the number of symbols used for matched-filtering during the joint S\&C stage. In (\ref{SensingReceivedPower}), the SNR at the RSU can be deemed as three MIMO beamforming gains: the transmit beamforming gain of the RSU, the passive beamforming gain of the IOS, and the receive beamforming gain of the RSU. For notational convenience, variables $\{\varphi_{k,n|n-1}, \tilde \varphi_{k,n}\}$ and $\{\phi_{k,n|n-1}, \tilde \phi_{k,n}\}$ in expectation operation $\mathbb{E}[ \cdot]$ are replaced as $\varphi$ and $\phi$, respectively. Similarly,  $\gamma^{{\mathrm{S}},y}_{k,n} = \frac{W}{\sigma_s^2} \mathbb{E}_{\varphi, \phi} [ | ({\bm{v}}^y_{k,n})^{H} {\bm{H}}^{\mathrm{UL},y}_{k,n} {\bm{\Theta}}^{{\mathrm{S\&C,R}}}_{k,n} {\bm{H}}^{\mathrm{DL}}_{k,n} {\bm{f}}_{k,n}^{\mathrm{S\&C}}|^{2} ]$. 

\subsection{Communication Model}
The communication device inside each vehicle mainly receives signals via the RSU-IOS-device link, since other non-line-of-sight (NLOS) links between the RSU and communication devices are practically assumed to be negligible due to severe penetration loss, especially for mmWave signals. During the joint S\&C stage, the signal received at communication device $k$ can be expressed as $
	y^{\mathrm{S\&C}}_{k,n}(t) =   {\bm{h}}^T_k{\bm{\Theta}}_{k,n}^{\mathrm{S\&C,T}}\bm{H}^{\mathrm{DL}}_{k,n} {\bm{f}}_{k,n}^{\mathrm{S\&C}} s_{k,n}(t) +  \sum\nolimits_{j \ne k}^K  {\bm{h}}^T_k{\bm{\Theta}}_{k,n}^{\mathrm{S\&C,T}}\bm{H}^{\mathrm{DL}}_{k,n} {\bm{f}}_{j,n}^{\mathrm{S\&C}} s_{j,n}(t) +  z_k(t)$,
where ${{z}}_k(t)$ denotes the noise at the receive antennas.\footnote{It is assumed that the Doppler effect induced by the vehicle mobility could generally be compensated at the communication receivers, which is generally presented in vehicular communication networks \cite{alieiev2018predictive}} The signal-to-interference-plus-noise ratio (SINR) of user $k$ during the joint S\&C stage is denoted by $\gamma^{\mathrm{S\&C}}_{k,n}$, and the approximate achievable rate during this stage can be given by $R^{\mathrm{S\&C}}_{k,n}$, i.e.,
\begin{align}\label{AchievablerateSC}
	\bar R^{\mathrm{S\&C}}_{k,n} \!=& \mathbb{E}_{\varphi, \phi}\left[ \log _{2}(1+ \gamma^{\mathrm{S\&C}}_{k,n})\right] \\
 \stackrel{(a)}{\leqslant}\! & \log _{2}\!\left(\!1 \!+\!\frac{  \mathbb{E}_{\varphi, \phi}\left[ \left|{\bm{h}}^T_k{\bm{\Theta}}^{\mathrm{S\&C,T}}_{k,n} {\bm{H}}^{\mathrm{DL}}_{k,n} {\bm{f}}_{k,n}^{\mathrm{C}}\right|^{2}\right]}{ \sum\nolimits_{j \ne k}^K \mathbb{E}_{\varphi, \phi}\left[ \left|{\bm{h}}^T_k{\bm{\Theta}}^{\mathrm{S\&C,T}}_{k,n} {\bm{H}}^{\mathrm{DL}}_{k,n} {\bm{f}}_{j,n}^{\mathrm{C}}\right|^{2}\right] \!+\! \sigma_{c}^{2}}\!\right)\!\! \nonumber \\
 \triangleq  & R^{\mathrm{S\&C}}_{k,n}, \nonumber
\end{align}
where ($a$) holds based on Jensen's inequality, i.e., $\mathbb{E}\left[ \log(x) \right] \le \log\left(\mathbb{E}[x] \right)$. During the communication-only stage, the signal received at communication device $k$ can be written as $
	y^{\mathrm{C}}_{k,n}(t) =   {\bm{h}}^T_k{\bm{\Theta}}_{k,n}^{\mathrm{C,T}}\bm{H}^{\mathrm{DL}}_{k,n} {\bm{f}}_{k,n}^{\mathrm{C}} s_{k,n}(t) +  \sum\nolimits_{j \ne k}^K  {\bm{h}}^T_k{\bm{\Theta}}_{k,n}^{\mathrm{C,T}}\bm{H}^{\mathrm{DL}}_{k,n} {\bm{f}}_{j,n}^{\mathrm{S\&C}} s_{j,n}(t) + z_k(t)$,
where ${\bm{f}}_{k,n}^{\mathrm{C}}$ denotes the beamforming vector of the RSU during the communication-only stage at the $n$th time slot. Similarly, the approximate achievable rate is given by $R^{\mathrm{C}}_{k,n}$, i.e.,
\begin{align}\label{AchievablerateC}
\bar R^{\mathrm{C}}_{k,n} =&\mathbb{E}_{\varphi, \phi}\left[ \log _{2}(1+\gamma^{\mathrm{C}}_{k,n})\right] \\
\le& \log _{2}\left(1+\frac{ \mathbb{E}_{\varphi, \phi}\left[\left|{\bm{h}}^T_k{\bm{\Theta}}_{k,n}^{\mathrm{C,T}}{\bm{H}}^{\mathrm{DL}}_{k,n} {\bm{f}}_{k,n}^{\mathrm{C}}\right|^{2}\right]}{ \sum\nolimits_{j \ne k}^K \mathbb{E}_{\varphi, \phi}\left[ \left|{\bm{h}}^T_k{\bm{\Theta}}^{\mathrm{C,T}}_{k,n} {\bm{H}}^{\mathrm{DL}}_{k,n} {\bm{f}}_{j,n}^{\mathrm{C}}\right|^{2}\right] + \sigma_{c}^{2}}\!\right)\! \nonumber \\
\triangleq & R^{\mathrm{C}}_{k,n} , \nonumber
\end{align}
where $\gamma^{\mathrm{C}}_k$ denotes the SINR of communication device $k$ during the communication-only stage. Then, at the $n$th time slot, the approximate achievable rate of communication device $k$ is given by 
\begin{equation}
	R_{k,n} = \eta R_{k,n}^{{\mathrm{S\&C}}} + \left( 1 - \eta  \right) R_{k,n}^{\mathrm{C}}.
\end{equation}

\subsection{Problem Formulation}
At each time slot, we aim to maximize the minimum approximate rate among the communication devices in the considered system by jointly optimizing the beamforming vectors of the RSU, the phase shift matrices of IOSs, and the duration of the joint S\&C stage. The optimization problem is formulated as 
\begin{alignat}{2}
	\label{P1}
	(\rm{P1}): & \begin{array}{*{20}{c}}
		\mathop {\max }\limits_{{\bm{p}}^{\mathrm{S\&C}}_n, {\bm{p}}^{\mathrm{C}}_n, \{{\bm{\Theta}}^{\xi,{\mathrm{R}}}_{k,n}\}, \{{\bm{\Theta}}^{\xi,{\mathrm{T}}}_{k,n}\}, \eta} \quad  \mathop {\min }\limits_{k} \  R_{k,n}
	\end{array} & \\ 
	\mbox{s.t.}\quad
	& \theta^{\xi,{\mathrm{T}}}_{k,l,n}, \theta^{\xi,{\mathrm{R}}}_{k,l,n} \in [0, 2 \pi), \forall l, k, \xi, & \tag{\ref{P1}a}\\
	& \beta^{\xi,{\mathrm{T}}}_{k,l,n}, \beta^{\xi,{\mathrm{R}}}_{k,l,n} \in [0, 1], \beta^{\xi, {\mathrm{T}}}_{k,l,n} + \beta^{\xi, {\mathrm{R}}}_{k,l,n}  = 1,  \forall l, k, \xi, & \tag{\ref{P1}b}\\
	& \sum\nolimits_{k = 1}^K p _{k,n}^{\mathrm{S\&C}} \le P^{\max}, \tag{\ref{P1}c} \\
	& \sum\nolimits_{k = 1}^K p _{k,n}^{\mathrm{C}} \le P^{\max}, \tag{\ref{P1}d} \\
	&  p _{k,n}^{\mathrm{S\&C}}\ge 0, p _{k,n}^{\mathrm{C}} \ge 0, \forall k, \tag{\ref{P1}e} \\
	& \eta \in [0, 1], & \tag{\ref{P1}f}
\end{alignat}
where ${\bm{p}}^{\mathrm{S\&C}}_n = [{{p}}_{1,n}^{\mathrm{S\&C}}, \cdots, {{p}}_{K,n}^{\mathrm{S\&C}}]$ and $ {\bm{p}}^{\mathrm{C}} = [{{p}}_1^{\mathrm{C}}, \cdots, {{p}}_K^{\mathrm{C}}]$. Solving (P1) optimally is non-trivial due to the lack of the closed-form objective function and closely coupled variables. To tackle this issue, we first derive an approximate achievable rate in closed form for single-vehicle cases in Section \ref{SingleVehicle}. Then, an efficient AO algorithm is proposed to solve the problem in multi-vehicle cases in Section \ref{MultiVehicle}.

\section{Single-vehicle Sensing-assisted Communication}
\label{SingleVehicle}
In this section, we consider the single-vehicle setup, i.e., $K=1$, to draw useful insights into the optimal IOS phase shift and S\&C duration ratio design. In this case, (P1) is simplified to (by dropping the communication device index) maximizing $R_n$ at the $n$th time slot, i.e., 
\begin{alignat}{2}
	\label{P2}
	(\rm{P2}): & \begin{array}{*{20}{c}}
		\mathop {\max }\limits_{{\bm{\Theta}}^{\xi, {\mathrm{R}}}_n, {\bm{\Theta}}^{\xi, {\mathrm{T}}}_n, \eta} \quad  R_n
	\end{array} & \\ 
	\mbox{s.t.}\quad
	& \theta^{\xi, {\mathrm{T}}}_{l,n}, \theta^{\xi, {\mathrm{R}}}_{l,n} \in [0, 2 \pi), \forall l, \xi, & \tag{\ref{P2}a}\\
	& \beta^{\xi, {\mathrm{T}}}_{l,n}, \beta^{\xi, {\mathrm{R}}}_{l,n} \in [0, 1], \beta^{\xi, {\mathrm{T}}}_{l,n} + \beta^{\xi, {\mathrm{R}}}_{l,n}  = 1, \forall  l, \xi,  & \tag{\ref{P2}b}\\
	& \eta \in [0, 1]. & \tag{\ref{P2}c}
\end{alignat}
However, it is challenging to solve (P2) optimally since there is no closed-form expression of $R_n$. To tackle this issue, we will first derive an approximate expression of $R_n$ in closed form, based on which, a sufficient and necessary condition for the existence of the joint S\&C stage will be presented to solve (P2) more efficiently.

\subsection{Closed-form Expression of Achievable Rate and Proposed Solution}

In this subsection, the optimal phase shift is given first, based on which, a closed-form expression of $R_n$ is derived.
\begin{thm}\label{OptimalPhaseShift}
	At the optimal solution of (P2), during the joint S\&C stage, the reflecting and refracting phase shifts, $\theta^{\mathrm{R}}_{(l_x-1)L_y+l_y}$ and $\theta^{\mathrm{T}}_{(l_x-1)L_y+l_y}$, of the $(l_x; l_y)$th (the $l_x$th row and the $l_y$th column) element of the IOS is obtained by
	\begin{equation}
		\begin{aligned}
		&\theta^{\mathrm{R}}_{l_x, l_y}=\pi(l_x-1) q^{\mathrm{R}}_{{x}}+\pi(l_y-1) q^{\mathrm{R}}_{{y}}+\theta_{0}, \\
		&\theta^{\mathrm{T}}_{l_x, l_y}=\pi(l_y-1) q^{\mathrm{T}}_{{x}}+\pi(l_y-1) q^{\mathrm{T}}_{{y}}+\theta_{0}, 
	\end{aligned}
	\end{equation}
	where {$\theta_{0}$} is the reference phase at the origin of the coordinates, $q^{\mathrm{R}}_{{x}} =  - 2 \cos( \varphi_{n|n-1})$, $q^{\mathrm{R}}_{{y}} =  2 \cos( \phi_{n|n-1})$, $q^{\mathrm{T}}_{{x}} = - \cos( \varphi_{n|n-1}) + \sin ( \psi^{u,z}_{n}) \cos( \psi^{u,x}_{n})$, and $q^{\mathrm{T}}_{{y}} = \cos( \phi_{n|n-1}) + \sin ( \psi^{u,z}_{n}) \sin( \psi^{u,x}_{n})$.
\end{thm}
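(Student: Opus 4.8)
The plan is to exploit the additive structure of the objective $R_n = \eta R_n^{\mathrm{S\&C}} + (1-\eta)R_n^{\mathrm{C}}$ and to observe that, during the joint S\&C stage, the refracting phases $\theta^{\mathrm{S\&C,T}}$ and the reflecting phases $\theta^{\mathrm{S\&C,R}}$ enter the problem through two disjoint quantities. In the single-vehicle setting the interference sums in (\ref{AchievablerateSC}) and (\ref{AchievablerateC}) vanish, so $R_n^{\mathrm{S\&C}}$ is a monotone function of the signal power $\mathbb{E}_{\varphi,\phi}[\,|{\bm h}^T{\bm\Theta}^{\mathrm{S\&C,T}}_n {\bm H}^{\mathrm{DL}}_n {\bm f}^{\mathrm{C}}_n|^2\,]$, in which the refracting phases appear only through the IOS refraction gain ${\bm h}^T{\bm\Theta}^{\mathrm{S\&C,T}}_n{\bm a}_{\mathrm{IOS}}(\phi,-\varphi)$; meanwhile the reflecting phases appear only in the sensing SNRs $\gamma^{\mathrm{S},x}_n,\gamma^{\mathrm{S},y}_n$ of (\ref{SensingReceivedPower}) through the IOS reflection gain ${\bm a}^T_{\mathrm{IOS}}{\bm\Theta}^{\mathrm{S\&C,R}}_n{\bm a}_{\mathrm{IOS}}$. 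Because the phase values are decoupled from the power-splitting magnitudes $\sqrt{\beta^{\xi,\cdot}}$ and reflection and refraction phases are independently controllable, I would fix an arbitrary feasible splitting and optimize the two phase profiles separately, each of which reduces to a coherent-combining argument.

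For the reflecting phases, I would substitute ${\bm v}^x_n$, ${\bm f}^{\mathrm{S\&C}}_n$ into (\ref{SensingReceivedPower}) so that the transmit and receive beamforming gains factor out, leaving the reflection gain ${\bm a}^T_{\mathrm{IOS}}(\phi,-\varphi){\bm\Theta}^{\mathrm{R}}_n{\bm a}_{\mathrm{IOS}}(\phi,-\varphi) = \sum_{l_x,l_y}\sqrt{\beta^{\mathrm{R}}_{l_x,l_y}}\,e^{j\theta^{\mathrm{R}}_{l_x,l_y}}e^{j2\pi(l_x-1)\cos\varphi}e^{-j2\pi(l_y-1)\cos\phi}$, where the round trip makes the IOS steering vector enter twice and thereby produces the factor $2$. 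By the triangle inequality this magnitude is maximized exactly when every summand carries a common phase $\theta_0$, i.e. $\theta^{\mathrm{R}}_{l_x,l_y}=\theta_0-2\pi(l_x-1)\cos\varphi+2\pi(l_y-1)\cos\phi$, which with the design angles set to the predicted values gives $q^{\mathrm{R}}_x=-2\cos\varphi_{n|n-1}$ and $q^{\mathrm{R}}_y=2\cos\phi_{n|n-1}$ as claimed.

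For the refracting phases, I would expand the refraction gain using ${\bm h}$ and ${\bm a}_{\mathrm{IOS}}(\phi,-\varphi)$ as $\sum_{l_x,l_y}\sqrt{\beta^{\mathrm{T}}_{l_x,l_y}}\,e^{j\theta^{\mathrm{T}}_{l_x,l_y}}e^{-j\pi(l_x-1)\Phi^u}e^{-j\pi(l_y-1)\Omega^u}e^{j\pi(l_x-1)\cos\varphi}e^{-j\pi(l_y-1)\cos\phi}$, with device-side spatial frequencies $\Phi^u\triangleq\sin(\psi^{u,z}_n)\cos(\psi^{u,x}_n)$ and $\Omega^u\triangleq\sin(\psi^{u,z}_n)\sin(\psi^{u,x}_n)$; here the one-way geometry makes the IOS steering vector appear only once. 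The same phase-alignment argument forces $\theta^{\mathrm{T}}_{l_x,l_y}=\theta_0+\pi(l_x-1)(\Phi^u-\cos\varphi)+\pi(l_y-1)(\Omega^u+\cos\phi)$, so that the row index $(l_x-1)$ multiplies $q^{\mathrm{T}}_x=\sin(\psi^{u,z}_n)\cos(\psi^{u,x}_n)-\cos\varphi_{n|n-1}$ and the column index $(l_y-1)$ multiplies $q^{\mathrm{T}}_y=\cos\phi_{n|n-1}+\sin(\psi^{u,z}_n)\sin(\psi^{u,x}_n)$, reproducing the stated expression.

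The main obstacle is justifying that maximizing the reflection gain, i.e. the sensing SNR, is optimal for the full objective, since $\theta^{\mathrm{S\&C,R}}$ influences $R_n$ only indirectly. I would close this by chaining (\ref{ReceivedPowerEquation}), which makes the angle-estimation variances $\sigma^2_{z_{\varphi_n}},\sigma^2_{z_{\phi_n}}$ monotonically decreasing in $\gamma^{\mathrm{S},x}_n,\gamma^{\mathrm{S},y}_n$, with the Kalman update (\ref{StateTracking}): smaller measurement variances shrink the tracking error covariance, sharpening the alignment of ${\bm f}^{\mathrm{C}}_n$ and hence raising the expected signal power in $R_n^{\mathrm{C}}$ of (\ref{AchievablerateC}). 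Thus increasing the reflection gain can only improve $R_n^{\mathrm{C}}$ while leaving $R_n^{\mathrm{S\&C}}$ untouched, making the coherent-combining solution globally optimal for (P2). A secondary point I would verify is that the expectations over the residual angle uncertainty in (\ref{SensingReceivedPower}) and (\ref{AchievablerateSC}) do not shift the optimal steering direction: since the design depends only on the known predicted angles and the residual error is zero-mean and symmetric, steering to the predicted values maximizes the expected gain, so the alignment condition derived above remains optimal in expectation.
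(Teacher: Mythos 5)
Your proposal is correct and follows essentially the same route as the paper's Appendix A: decouple the reflecting and refracting phase profiles (reflection affects only the echo SNR, refraction only the refracted signal power), reduce each to maximizing the magnitude of a sum of unit-modulus terms, and conclude by coherent phase alignment at the predicted angles $\varphi_{n|n-1},\phi_{n|n-1}$ (the paper reaches the same sum via the identity $\mathbf{A}^T{\bm X}\mathbf{B}={\bm x}^T(\mathbf{A}\odot\mathbf{B})$, you via the triangle inequality, and both yield the factor $2$ in $q^{\mathrm{R}}_x,q^{\mathrm{R}}_y$ from the round-trip squaring of the IOS steering vector). Your closing argument — chaining (\ref{ReceivedPowerEquation}) with the Kalman update (\ref{StateTracking}) to show that maximizing the reflection gain can only raise $R^{\mathrm{C}}_n$ while leaving $R^{\mathrm{S\&C}}_n$ unchanged — makes explicit a step the paper's proof leaves implicit (it simply asserts the equivalence of SNR maximization and IOS gain maximization), and you also correctly fix the index typo $\pi(l_y-1)q^{\mathrm{T}}_x$, which should read $\pi(l_x-1)q^{\mathrm{T}}_x$.
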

\begin{proof}
	Please refer to Appendix A.
\end{proof}

According to Lemma \ref{OptimalPhaseShift}, the phase shifts of all IOS elements are designed to align the reflecting beam toward the RSU. Similar to Lemma \ref{OptimalPhaseShift}, the optimal phase gradients $q_{{x}}$ and $q_{{y}}$ during the communication-only stage are given by $q_{{x}} =  -\cos(\tilde \varphi_{n}) + \sin ( \psi^{u,z}_{n}) \cos( \psi^{u,x}_{n})$ and $q_{{y}} =  \cos(\tilde \phi_{n}) +  \sin ( \psi^{u,z}_{n}) \sin( \psi^{u,x}_{n})$.

\begin{thm}\label{EqualPowerSplitting}
	At the optimal solution of (P2), $\beta^{\xi, {\mathrm{R}}*}_{l,n} = \beta^{\xi, {\mathrm{R}}*}_{l',n}$, $\beta^{\xi,{\mathrm{T}}*}_{l,n} = \beta^{\xi,{\mathrm{T}}*}_{l',n}$, where $\xi \in \{\mathrm{S\&C, C}\}$, $l \ne l'$,  and $l, l' \in {\cal{L}}$. 
\end{thm}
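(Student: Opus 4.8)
The plan is to show that, once the optimal phase shifts of Lemma~\ref{OptimalPhaseShift} are substituted, the objective $R_n$ depends on the per-element splitting coefficients only through two scalar ``aggregate gains,'' and that maximizing those gains forces every element to split the incident power identically. I would treat the joint S\&C stage (the nontrivial case) first and dispose of the communication-only stage at the end.

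First I would substitute the aligned phases from Lemma~\ref{OptimalPhaseShift}. Co-phasing makes every summand of the IOS beamforming gain $\bm a^T_{\mathrm{IOS}}\bm\Theta^{\mathrm{S\&C,R}}_{n}\bm a_{\mathrm{IOS}}$ add constructively, so its modulus collapses to $G_S\triangleq\sum_{l=1}^{L}\sqrt{\beta^{\mathrm{S\&C,R}}_{l,n}}$; identically, the aligned refracting phases reduce the RSU--IOS--device gain to $G_C\triangleq\sum_{l=1}^{L}\sqrt{\beta^{\mathrm{S\&C,T}}_{l,n}}=\sum_{l=1}^{L}\sqrt{1-\beta^{\mathrm{S\&C,R}}_{l,n}}$, where the per-element power-conservation constraint $\beta^{\mathrm{S\&C,T}}_{l,n}+\beta^{\mathrm{S\&C,R}}_{l,n}=1$ has been used. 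Setting $K=1$ removes all inter-user interference, so $R^{\mathrm{S\&C}}_n$ becomes a strictly increasing function of $G_C$ alone, while $R^{\mathrm{C}}_n$ sees the reflection coefficients only through the sensing SNRs $\gamma^{\mathrm{S},x}_n,\gamma^{\mathrm{S},y}_n\propto G_S^2$: a larger $G_S$ shrinks the angle-estimation variances in~(\ref{ReceivedPowerEquation}), tightens the Kalman-tracked angle in~(\ref{StateTracking}), and thereby raises the expected communication-only beamforming gain appearing in~(\ref{AchievablerateC}). Consequently $R_n$ is monotonically increasing in both $G_S$ and $G_C$ and depends on $\{\beta^{\mathrm{S\&C,R}}_{l,n}\}$ only through the pair $(G_S,G_C)$.

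Given this reduction, equal splitting follows from a Pareto argument. Let $s^\star=\sum_l\sqrt{\beta^{\mathrm{S\&C,R}*}_{l,n}}$ be the optimal value of $G_S$. Since $R_n$ strictly increases in $G_C$ for fixed $G_S$, the optimizer must maximize $G_C$ subject to $\sum_l\sqrt{\beta^{\mathrm{S\&C,R}}_{l,n}}=s^\star$. Writing $u_l\triangleq\sqrt{\beta^{\mathrm{S\&C,R}}_{l,n}}\in[0,1]$, this is $\max\sum_l\sqrt{1-u_l^2}$ subject to $\sum_l u_l=s^\star$; because $u\mapsto\sqrt{1-u^2}$ is strictly concave on $[0,1]$, Jensen's inequality yields $\sum_l\sqrt{1-u_l^2}\le L\sqrt{1-(s^\star/L)^2}$ with equality iff $u_1=\cdots=u_L$. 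Hence $\beta^{\mathrm{S\&C,R}*}_{l,n}=(s^\star/L)^2$ is constant in $l$, and so is $\beta^{\mathrm{S\&C,T}*}_{l,n}=1-\beta^{\mathrm{S\&C,R}*}_{l,n}$, which proves the claim for $\xi=\mathrm{S\&C}$. For $\xi=\mathrm{C}$ no sensing is performed, so any reflected power is wasted; the same monotonicity in $G_C$ then drives $\beta^{\mathrm{C,R}*}_{l,n}=0$ for all $l$ (the fully refracting state), a degenerate but equal split.

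The step I expect to be the main obstacle is rigorously establishing that $R^{\mathrm{C}}_n$ is monotonically increasing in $G_S$, since this chains the sensing SNR through the error model~(\ref{ReceivedPowerEquation}), the Kalman update~(\ref{StateTracking}), and the expectation over residual angle uncertainty in~(\ref{AchievablerateC}). I would settle it using the closed-form rate expression obtained together with Lemma~\ref{OptimalPhaseShift}, in which the expected communication-only beamforming gain is an explicit decreasing function of the tracking variance and hence increasing in $G_S$. Once the objective is reduced to a function of the pair $(G_S,G_C)$, the remainder is the one-line Jensen argument above, so essentially all of the effort sits in this reduction.
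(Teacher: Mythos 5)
Your proof is correct, but it takes a genuinely different route from the paper's Appendix B. The paper fixes the total reflected \emph{power} $X_R=\sum_{l}\beta^{\xi,\mathrm{R}}_{l,n}$ and applies the Cauchy--Schwarz/AM--GM inequality $\bigl(\sum_{l}\sqrt{\beta^{\xi,\mathrm{R}}_{l,n}}\bigr)^2\le L X_R$, with equality iff all coefficients coincide (and symmetrically for the refraction coefficients), so that equal splitting simultaneously maximizes the echo gain and the refraction gain under the given power split $X_R$ versus $L-X_R$; since moving to equal splitting \emph{changes} (raises) the echo SNR, that argument implicitly relies on $R^{\mathrm{C}}_n$ being non-decreasing in the sensing SNR, i.e., on the content later formalized in Lemma~\ref{VarianceMonotonicity}. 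You instead freeze the optimal reflection \emph{amplitude sum} $G_S=\sum_l\sqrt{\beta^{\mathrm{R}}_{l,n}}$, which pins down the sensing SNR and hence $R^{\mathrm{C}}_n$ exactly, and maximize $G_C$ by Jensen's inequality applied to the strictly concave map $u\mapsto\sqrt{1-u^2}$. This buys you something you apparently did not notice: because $R^{\mathrm{C}}_n$ is held \emph{constant} (not merely monotone) under the constraint $G_S=s^\star$, the sensing-SNR-to-rate monotonicity chain that you flag as ``the main obstacle'' is never needed in your route --- you only need that $R^{\mathrm{C}}_n$ depends on the joint-S\&C-stage coefficients through $G_S$ alone --- so your reduction is logically lighter than both your own assessment of it and the paper's proof. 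Two caveats, shared equally by both proofs and therefore not gaps relative to the paper: both treat the post-co-phasing gains as coherent sums $\bigl(\sum_l\sqrt{\beta_l}\bigr)^2$, ignoring the residual angle-mismatch phases (which in expectation produce the Fej\'er-kernel factors and, strictly speaking, a quadratic form whose maximizer need not be exactly the all-ones direction), and both silently skip the degenerate case $\eta^\ast=0$, in which the joint-S\&C-stage coefficients do not affect the objective and equal splitting is optimal but not uniquely so.
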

\begin{proof}
	Please refer to Appendix B.
\end{proof}

According to Lemma \ref{EqualPowerSplitting}, all IOS elements share the same refracted and reflected power splitting ratio, denoted by $\beta^{\xi, {\mathrm{T}}}$ and $\beta^{\xi,\mathrm{R}}$. In particular, during the communication-only stage, $\beta^{\mathrm{C, R}} = 0$ and $\beta^{\mathrm{C, T}} = 1$. For notational simplification, $\beta^{\mathrm{S\&C, R}}$ and $\beta^{\mathrm{S\&C, T}}$ are respectively replaced by $\beta^{\mathrm{R}}$ and $\beta^{\mathrm{T}}$ in the following analysis. The transmit beamforming gain in (\ref{SensingReceivedPower}) can be expressed as
	\begin{align}
		&\left|{{\bm{a}}^H_{\mathrm{RSU}}\left( \varphi_n\right){\bm{f}}_{n}^{\mathrm{S\&C}}}\right|^2 =\frac{{P^{\max}}}{{{M_t}}}\left| {\sum\limits_{m = 1}^{{M_t}} {{e^{j\pi m\left( {\cos \left(\varphi_{n|n-1} \right) - \cos ( \varphi_n  )} \right)}}} } \right|^2 \nonumber\\
		& \buildrel \Delta \over = {P^{\max}} F_{M_t}\left( {\cos ( \varphi_{n|n-1} ) - \cos ( \varphi_n  )} \right) ,
	\end{align}
where the Fej$\acute{e}$r kernel $F_{M}(x) = \frac{1}{M}\left(\frac{\sin \frac{M \pi x}{2 }}{\sin \frac{\pi x}{2 }}\right)^{2}$ \cite{rust2013convergence}. Similarly, the receive beamforming gain in (\ref{SensingReceivedPower}) can be expressed as $\left|{\bm{v}}_{n}^{H} {{\bm{b}}_{\mathrm{RSU}}\left( \varphi_n\right)}\right|^2  = F_{M_r}( {\cos ( \varphi_{n|n-1} ) - \cos ( \varphi_n  )} ) $.

Based on Lemmas \ref{OptimalPhaseShift} and \ref{EqualPowerSplitting}, the passive beamforming gain achieved by the IOS is given by
%\begin{equation}
\begin{equation}
\begin{aligned}
	\left|\bm{a}^{T}_{\mathrm{IOS}}\left(\phi_n, -\varphi_n\right) {\bm{\Theta}} ^{{\mathrm{S\&C,T}}}_n\bm{a}_{\mathrm{IOS}}\left(\phi_n, -\varphi_n\right)\right|^2   \\
 \buildrel \Delta \over =   \beta^{\mathrm{R}} L_x  L_y F_{L_x}\left( 2\Delta \cos \varphi_n\right) F_{L_y}\left( 2\Delta \cos \phi_n\right) ,
\end{aligned}
\end{equation}
where $\Delta \cos \varphi_n =  \cos \left( \varphi_{n|n-1} \right) - \cos \left( \varphi_n  \right)$ and $\Delta \cos \phi_n =  \cos \left( \phi_{n|n-1} \right) - \cos \left( \phi_n  \right)$. For the receive ULA along the $x$-axis, the SNR of the received echos can be given by
\begin{align}\label{SensingPowerAverage}
	&\gamma^{{\mathrm{S}},x}_n =  \frac{\eta \Delta T \mathbb{E}_{\varphi, \phi}\left[ \left| ({\bm{v}}^x_{k,n})^{H} {\bm{H}}^{\mathrm{UL},x}_{n} {\bm{\Theta}}^{\mathrm{S\&C,T}}_n {\bm{H}}^{\mathrm{DL}}_{n} {\bm{f}}_{n}^{\mathrm{S\&C}}\right|^{2} \right]}{\Delta t \sigma_s^2} \nonumber \\
	= & \beta^{\mathrm{R}}\eta \frac{ L  \Delta T {P^{\max}} {\beta^2_{G,n}}}{\Delta t  {\sigma_s^2} } \mathbb{E}_{\varphi}[ F_{L_x}\left( 2\Delta \cos \varphi_n\right)F_{M_t}\left(\Delta \cos \varphi_n\right)  \nonumber \\
	&\times F_{M_r}\left(\Delta \cos \varphi_n\right)] \mathbb{E}_{\phi}\left[F_{L_y}\left( 2\Delta \cos \phi_n\right)  \right] . 
\end{align}
With the perfect angle information, i.e., $\Delta \cos \varphi_n =0$ and $\Delta \cos \phi_n = 0$, a power gain of order $M_t M_r L^2$ can be achieved. However, the sensing power in (\ref{SensingPowerAverage}) involves the expectation operation, which makes it challenging to optimize the time and power resources in the considered system. This thus motivates the following proposition. 

\begin{Pro}\label{InftyBand}
	When the number of IOS elements is sufficiently large, i.e., $L_x \to \infty$ and $L_y \to \infty$, the SNR expectation of received echoes at the ULAs along the $x$- and $y$-axis can be approximated as 
	\begin{equation}\label{ClosedFormSensingPower}
	\gamma^{{\mathrm{S}},x}_n \!\approx \!\gamma^{{\mathrm{S}},y}_n \!\approx\! \beta^{\mathrm{R}}\eta \frac{  \Delta T {P^{\max}}  \beta_{G,n}^2 L  M_t M_r h(\varphi_n, \sigma^2_{\omega_\varphi}) h(\phi_n, \sigma^2_{\omega_\phi}) }{{\Delta t \sigma_s^2} }  \! .
\end{equation}
where $h(x, y) \buildrel \Delta \over=\! \sum\limits_{i = -\infty}^{\infty} \frac{1}{{\sqrt {2\pi {y}\sin^2({x})}  }}\!\left(\! {{e^{ - \frac{{{{2( {i\pi} )}^2}}}{{{y}}}}} + {e^{ - \frac{{{{2\left( {(i + 1)\pi  } - x \right)}^2}}}{{{y}}}}}} \!\right)\!$.
\end{Pro}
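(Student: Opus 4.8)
The plan is to start from the exact averaged SNR in (\ref{SensingPowerAverage}) which, after inserting the optimal phase shifts and the equal power splitting of Lemmas~\ref{OptimalPhaseShift} and~\ref{EqualPowerSplitting}, factorizes into a product of one-dimensional expectations: an $x$-axis factor $\mathbb{E}_{\varphi}[F_{L_x}(2\Delta\cos\varphi_n)\,F_{M_t}(\Delta\cos\varphi_n)\,F_{M_r}(\Delta\cos\varphi_n)]$ and a $y$-axis factor $\mathbb{E}_{\phi}[F_{L_y}(2\Delta\cos\phi_n)]$. The entire argument hinges on the large-array behaviour of the Fej\'er kernel. Since $F_L(u)=\frac{1}{L}\bigl(\frac{\sin(L\pi u/2)}{\sin(\pi u/2)}\bigr)^2$ has period $2$, attains its peak value $L$ at the even integers $u\in2\mathbb{Z}$, and integrates to $2$ over one period, it converges in the distributional sense to the Dirac comb $2\sum_{k}\delta(u-2k)$ as $L\to\infty$; equivalently $F_{L_x}(2w)\to\sum_{k}\delta(w-k)$, so in the limit the kernel samples the rest of the integrand at the integer points of $w=\Delta\cos\varphi_n$.

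Next I would convert the sampling in $w$ back to the underlying angle error. Writing $\varphi_{n|n-1}=\varphi_n+e$ with zero-mean Gaussian prediction error $e$ of variance $\sigma^2_{\omega_\varphi}$ from the state-evolution model, I have $w(e)=\cos(\varphi_n+e)-\cos\varphi_n$ and $w'(e)=-\sin(\varphi_n+e)$. Applying the multi-root change-of-variables rule to $\delta(w(e)-k)$, I would show that only $k=0$ contributes at leading order: the weight $F_{M_t}(k)F_{M_r}(k)$ equals $M_tM_r$ on the even integers but is only $O(1/(M_tM_r))$ on the odd integers, while $|\Delta\cos\varphi_n|\le2$ rules out every even integer except $k=0$. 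The equation $w(e)=0$ then has the two root families $e=2\pi m$ and $e=-2\varphi_n+2\pi m$, each with Jacobian $|w'(e)|=|\sin\varphi_n|$. Evaluating the Gaussian density of $e$ at these roots and dividing by the Jacobian reproduces precisely the prefactor $\frac{1}{\sqrt{2\pi\sigma^2_{\omega_\varphi}\sin^2\varphi_n}}$ and the two exponential terms of $h$, giving $\mathbb{E}_{\varphi}[\cdots]\to M_tM_r\,h(\varphi_n,\sigma^2_{\omega_\varphi})$.

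The $y$-axis factor $\mathbb{E}_{\phi}[F_{L_y}(2\Delta\cos\phi_n)]$ is handled by the same limit but without the $F_{M_t}F_{M_r}$ weights, yielding $h(\phi_n,\sigma^2_{\omega_\phi})$; substituting both limits into (\ref{SensingPowerAverage}) produces the claimed closed form for $\gamma^{{\mathrm{S}},x}_n$. For $\gamma^{{\mathrm{S}},y}_n$ the only difference is that the receive-beamforming Fej\'er kernel $F_{M_r}$ now attaches to the $\phi$-integral rather than the $\varphi$-integral, so $M_t$ and $M_r$ simply redistribute between the two one-dimensional expectations; their product $M_tM_r$ and the two functions $h(\varphi_n,\cdot)$ and $h(\phi_n,\cdot)$ are unchanged, which is exactly why $\gamma^{{\mathrm{S}},x}_n\approx\gamma^{{\mathrm{S}},y}_n$.

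The hard part will be making the distributional limit rigorous and justifying the interchange of the $L\to\infty$ limit with the expectation. Unlike a classical approximate identity, the Fej\'er kernel does not concentrate at a single point, since its mass is spread over an entire comb, so I expect to control the convergence by Poisson summation of $F_L$ against the smooth Gaussian weight, verifying that the off-origin comb teeth generate the $m\ne0$ tail terms of $h$ rather than spurious mass. A secondary subtlety is the two-branch structure of the nonlinear map $e\mapsto\cos(\varphi_n+e)$, which is what produces the second exponential in $h$; both branches must be retained with the correct Jacobian, and the truncation to the single tooth $k=0$ must be shown to be uniform enough in the small-error, large-$M_t,M_r$ regime for the stated approximation to be valid.
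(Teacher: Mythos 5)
Your proposal is correct and takes essentially the same route as the paper's Appendix~C: there, the Fej\'er-kernel concentration you phrase as a distributional Dirac-comb limit is formalized via an approximate-identity lemma for periodic functions (Lemma~\ref{FejerKernel}) applied to the periodized pushforward density $P_\varphi(y)$ of the Gaussian angle error, whose evaluation at $y=0$ is exactly your multi-root Jacobian computation, with both arccos branches and the $1/\lvert\sin\varphi_n\rvert$ factor producing the two exponential families in $h$. The only distinction is the order of operations---the paper transforms the error to the $\Delta\cos$ variable first and then samples at the kernel peak (with the truncation step $(d)$ playing the role of your suppression of off-origin comb teeth), whereas you sample with the comb first and then pull back through the multi-branch map---which is the same computation carried out in dual order.
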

\begin{proof}
	Please refer to Appendix C.
\end{proof}

Proposition \ref{InftyBand} presents a closed-form rate for the practical case with a large number of IOS elements, which achieves a good approximation and is later verified in Section \ref{simulationS} by the Monte Carlo simulation results. On the other hand, according to the analysis in (\ref{ReceivedPowerEquation}), we have
\begin{equation}
\sigma_{z_{\varphi_{k,n}}}^{2} = \frac{{\sigma _R^2}}{{\gamma^S_n{{\sin }^2}\varphi_n }} =  \frac{A_{\varphi_n}}{\eta \beta^{\mathrm{R}}},
\end{equation}
where $A_{\varphi_n} = \frac{  \Delta t \sigma_s^2 \sigma _R^2 }{{\Delta T {P^{\max}}  \beta_{G,n}^2 L  M_t M_r h(\varphi_n, \sigma^2_{\omega_\varphi}) h(\phi_n, \sigma^2_{\omega_\phi}) {{\sin^2 }}\varphi_n } }$, $\sigma _R^2$ is the variance parameter of the angle estimation method, and $A_{\varphi_n}$ represents the angle variance when $\eta = 1$ and $\beta^{\mathrm{R}} = 1$. Since $d \gg v \Delta T$, the angle variance of the state tracking error in (\ref{StateTracking}) can be given by
\begin{equation}\label{StateTrackingError}
	\sigma _{\tilde \varphi_n} ^2 = {\sigma _{{\omega_{\varphi}}} ^2} - \frac{\sigma _{{\omega_{\varphi}}} ^2}{\sigma _{{\omega_{\varphi}}} ^2 + \frac{A_{\varphi_n}}{\eta \beta^{\mathrm{R}}}} {\omega^2 _{\varphi}} = \frac{\sigma _{{\omega_{\varphi}}} ^2 A_{\varphi_n}}{\sigma _{{\omega_{\varphi}}} ^2 {\eta \beta^{\mathrm{R}}} + {A_{\varphi_n}}}.
\end{equation}
In (\ref{StateTrackingError}), it is observed that the variance of the state tracking error decreases monotonically with the variance of angle estimation $\sigma_{z_{\varphi_{k,n}}}^{2}$. Similarly, we have $\sigma _{\tilde \phi_n} ^2 = \frac{\sigma _{{\omega_{\phi}}} ^2 A_{\phi_n}}{\sigma _{{\omega_{\phi}}} ^2 {\eta \beta^{\mathrm{R}}} + {A_{\phi_n}}}$, where $A_{\phi_n} = \frac{  \Delta t \sigma_s^2 \sigma _R^2 }{{\Delta T {P^{\max}}  \beta_{G,n}^2 L  M_t M_r h(\varphi_n, \sigma^2_{\omega_\varphi}) h(\phi_n, \sigma^2_{\omega_\phi}) {{\sin^2 }}\phi_n } }$.
\begin{Pro}\label{ClosedFormAchievable}
	When the number of IOS elements is sufficiently large, i.e., $L_x \to \infty$ and $L_y \to \infty$, the achievable rate during the joint S\&C stage and the communication-only stage can be respectively approximated as
	\begin{equation}\label{ClosedFormAchievableSC}
			R^{\mathrm{S\&C}}_n \!\approx\! \log_2\left( 1 \!+\! C_0\left(1 - \beta^{\mathrm{R}} \right) h(\varphi_n, \sigma^2_{\omega_\varphi}) h(\phi_n, \sigma^2_{\omega_\phi})  \right)
	\end{equation}
	and
	\begin{equation}\label{ClosedFormAchievalbeC}
			R^{\mathrm{C}}_n \approx \log_2\left( 1 + C_0  h(\varphi_n, \sigma^2_{\tilde \varphi_n}) h(\phi_n, \sigma^2_{\tilde \phi_n})  \right).	
	\end{equation} 
	where $C_0 = \frac{ 4P^{\max} {\beta_{G,n}} \beta_{h}   L M_t }{\sigma_{c}^{2}}$.
\end{Pro}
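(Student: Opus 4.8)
The plan is to establish the two approximations in Proposition~\ref{ClosedFormAchievable} by specializing the general SINR expressions in (\ref{AchievablerateSC}) and (\ref{AchievablerateC}) to the single-vehicle case ($K=1$, so the interference sums vanish and we are in a noise-limited regime), then invoking the passive-beamforming-gain structure already developed above together with Propositions~\ref{InftyBand}. First I would write out the numerator of each SINR as the squared magnitude of the composite channel ${\bm{h}}^T_k {\bm{\Theta}}^{\xi,\mathrm{T}}_{k,n} {\bm{H}}^{\mathrm{DL}}_{k,n} {\bm{f}}^{\mathrm{C}}_{k,n}$, which factorizes (exactly as the sensing SNR did) into the product of a transmit beamforming gain from the RSU, a refraction beamforming gain from the IOS, and the IOS--device channel gain $\beta_{h}$. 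Using Lemma~\ref{OptimalPhaseShift} and Lemma~\ref{EqualPowerSplitting}, the refraction gain carries the factor $\beta^{\mathrm{T}} = 1 - \beta^{\mathrm{R}}$ during the joint S\&C stage and $\beta^{\mathrm{C,T}} = 1$ during the communication-only stage, and the transmit/refraction gains collapse to Fej\'er kernels in the same way as in (\ref{SensingPowerAverage}).

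**Next I would handle the expectation over the angle uncertainty.** The essential observation is that the composite gain has the identical Fej\'er-kernel-in-$\Delta\cos\varphi_n$ and $\Delta\cos\phi_n$ structure whose expectation was computed in the proof of Proposition~\ref{InftyBand}; that computation yielded the $h(\cdot,\cdot)$ function as the large-$L$ limit. The only bookkeeping difference is \emph{which} angle variance feeds into $h$. During the joint S\&C stage the relevant angle error is the prediction error, so the variance arguments are $\sigma^2_{\omega_\varphi}$ and $\sigma^2_{\omega_\phi}$, matching (\ref{ClosedFormAchievableSC}); during the communication-only stage the beamforming is steered using the \emph{tracked} angle $\tilde\varphi_n$, so the variance arguments become the state-tracking variances $\sigma^2_{\tilde\varphi_n}$ and $\sigma^2_{\tilde\phi_n}$ derived in (\ref{StateTrackingError}), matching (\ref{ClosedFormAchievalbeC}). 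Collecting the deterministic prefactors $4 P^{\max} \beta_{G,n} \beta_{h} L M_t / \sigma_c^2$ into the constant $C_0$ then reproduces both displayed expressions, with the factor $1-\beta^{\mathrm{R}}$ appearing only in the joint S\&C rate.

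**The main obstacle will be justifying the large-$L$ limit of the expectation cleanly**, i.e. showing that $\mathbb{E}_\varphi[ F_{L_x}(2\Delta\cos\varphi_n) F_{M_t}(\Delta\cos\varphi_n) ]$ converges to a constant times $h(\varphi_n,\sigma^2)$ as $L_x\to\infty$. Since $\frac{1}{L_x} F_{L_x}$ behaves as a periodic Dirac comb (a Fej\'er kernel concentrating at the integer lattice of its argument) in the limit, the expectation against the Gaussian angle-error density selects the values of the density at the concentration points, which is precisely the series defining $h(x,y)$; I would lean on the corresponding argument already established for Proposition~\ref{InftyBand} rather than re-deriving it, noting only that the refraction phase gradient $q^{\mathrm{T}}_x, q^{\mathrm{T}}_y$ replaces the reflection gradient, which shifts the comb but leaves the limiting series form unchanged. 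A secondary subtlety is that, strictly speaking, (\ref{AchievablerateSC})--(\ref{AchievablerateC}) are Jensen upper bounds rather than equalities; I would note that the approximations are taken on these $R^{\mathrm{S\&C}}_n, R^{\mathrm{C}}_n$ surrogate quantities, so the proposition is a statement about the surrogate, consistent with how $R_n$ is defined and optimized in (P2).
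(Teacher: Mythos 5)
Your proposal is correct and follows essentially the same route as the paper's own proof (Appendix D): use Lemmas~\ref{OptimalPhaseShift} and~\ref{EqualPowerSplitting} to reduce the single-vehicle SINR to a product of Fej\'er-kernel beamforming gains carrying the factor $1-\beta^{\mathrm{R}}$ (or $\beta^{\mathrm{C,T}}=1$ in the communication-only stage), evaluate the angle expectations via the same large-$L$ Fej\'er-kernel limit established for Proposition~\ref{InftyBand}, and substitute the tracking variances $\sigma^2_{\tilde\varphi_n},\sigma^2_{\tilde\phi_n}$ for the prediction variances in the second stage. The paper's proof is simply a terser version of this argument, writing out only the joint stage SNR and deferring the rest to the Appendix~C machinery.
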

\begin{proof}
	Please refer to Appendix D.
\end{proof}

\begin{remark}
	It is worth noting that in Propositions \ref{InftyBand} and \ref{ClosedFormAchievable}, unlike the derived squared-power scaling law in \cite{Wu2019IntelligentReflecting}, the SNR at the RSU receive antennas or the communication device scales with the number of IOS elements, i.e., $L$ only. The main reason is that the IOS can capture an inherent aperture gain by collecting more signal power but cannot achieve perfect beamforming gain under the inaccurate target direction. Note that the SNR is greatly affected by the variances of the estimated and measured angles, i.e., $\{\sigma^2_{\omega_\varphi}, \sigma^2_{\omega_\phi}\}$ and $\{\sigma^2_{\tilde \varphi_n}, \sigma^2_{\tilde \phi_n}\}$. Intuitively, the smaller the variances of these angles, the larger the achievable rate.
\end{remark}

In (\ref{ClosedFormAchievableSC}) and (\ref{ClosedFormAchievalbeC}), the accurate angles $\varphi_n$, and  $\phi_n$ cannot be obtained in advance, which can be practically assumed to be $\varphi_{n|n-1}$ and $\phi_{n|n-1}$ since the angle errors are relatively small, which have a negligible effect on the value of $h(\varphi_n, \sigma^2_{\tilde \varphi_n})$ and $h(\phi_n, \sigma^2_{\tilde \phi_n})$. Then, the average rate can be rewritten as a function about $\eta$ and $\beta^{\mathrm{R}}$, i.e., 
\begin{align}\label{AchievableRateReal}
	&R(\eta ,\beta^{\mathrm{R}} ) \\
	\! =&  \eta\log_2 \left( 1 \!+ C_0\left(1 \!-\! \beta^{\mathrm{R}} \right)  h(\varphi_{n|n-1}, \sigma^2_{\omega_\varphi})  h(\phi_{n|n-1}, \sigma^2_{\omega_\phi}) \right) \nonumber\\
	&+  \left(1-\eta\right)\log_2\left( 1 + C_0  h(\varphi_{n|n-1}, \sigma^2_{\tilde \varphi_n})  h(\phi_{n|n-1}, \sigma^2_{\tilde \phi_n})  \right).\nonumber	
\end{align}

Based on the derived expression of the approximate achievable rate in (\ref{AchievableRateReal}), the optimal solution of (P2) can be obtained via a two-dimension search over $\eta$ and $\beta^{\mathrm{R}}$ within $[0,1]$. To gain more useful insights, performance analysis is further provided to illustrate the benefits between S\&C in the next section.

\subsection{Sensing Effectiveness Analysis}
\label{PerformanceAnalysis}
In this subsection, some practical approximations are adopted to draw useful insights. In $h(x,y)$ (defined in Proposition \ref{InftyBand}), it is easy to verify that the term $ {{e^{ - \frac{{{{2\left( {i\pi } \right)}^2}}}{{y}}}} + {e^{ - \frac{{{{2( {(i + 1)\pi } - \varphi_{n|n-1} )}^2}}}{{{y}}}}}}$ is negligible when $i \ge 1$ or $i \le -1$, and thus, we have
\begin{equation}\label{ApproximationEquation}
	h(x,y) \approx \frac{1}{{\sqrt {2\pi{y} \sin^2({x})}  }} \left({1 + {e^{ - \frac{{{{2\left( {\pi  - {x} } \right)}^2}}}{{y}}}}}  \right) \buildrel \Delta \over= \tilde h(x, y) .
\end{equation}
This approximation is practical since the variance of the estimated angle is generally much less than $2\pi$. As such, the rate during the S\&C and communication-only stages can be respectively expressed as
\begin{equation}
	\tilde R^{\mathrm{S\&C}}_n \!=\! \log_2\!\left( 1 \!+ C_0\left(1 \!- \beta^{\mathrm{R}} \right) \tilde h( \varphi_{n|n-1}, \sigma^2_{\omega_\varphi}) \tilde h(\phi_{n|n-1}, \sigma^2_{\omega_\phi})  \right)
\end{equation}
and
\begin{equation}
		\tilde R^{\mathrm{C}}_n  =  \log_2\left( 1 + C_0  \tilde h(\varphi_{n}, \sigma^2_{\tilde \varphi_n}) \tilde h(\hat \phi_n, \sigma^2_{\tilde \phi_{n}})  \right).	
\end{equation} 
Then, the average rate can be approximately given by $\tilde 	R_n = \eta \tilde R^{\mathrm{S\&C}}_n + \left(1-\eta\right) 	\tilde R^{\mathrm{C}}_n$.	

\begin{thm}\label{VarianceMonotonicity}
	As $\eta \beta^{\mathrm{R}}$ increases, $\tilde R^{\mathrm{C}}_n$ increases monotonically.
\end{thm}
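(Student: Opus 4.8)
The plan is to isolate how $\tilde R^{\mathrm{C}}_n$ depends on the product $\eta \beta^{\mathrm{R}}$ and then reduce the claim to a single-variable monotonicity fact about $\tilde h(x,\cdot)$. Inspecting the definition of $\tilde R^{\mathrm{C}}_n$ (obtained from (\ref{ClosedFormAchievalbeC}) with $h$ replaced by $\tilde h$ from (\ref{ApproximationEquation})), the constant $C_0 > 0$ and the angle arguments $\varphi_n$, $\phi_n$ are fixed; the only quantities varying with $\eta \beta^{\mathrm{R}}$ are the tracking-error variances $\sigma^2_{\tilde \varphi_n}$ and $\sigma^2_{\tilde \phi_n}$. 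From (\ref{StateTrackingError}) each variance has the form $\tfrac{\sigma^2 A}{\sigma^2 \eta \beta^{\mathrm{R}} + A}$ with a fixed positive numerator and a denominator strictly increasing in $\eta \beta^{\mathrm{R}}$, so both $\sigma^2_{\tilde \varphi_n}$ and $\sigma^2_{\tilde \phi_n}$ are strictly decreasing in $\eta \beta^{\mathrm{R}}$. Hence it suffices to show that $\tilde R^{\mathrm{C}}_n$ increases whenever these two variances decrease.

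Next I would establish the key auxiliary fact: for fixed $x$, the map $y \mapsto \tilde h(x,y)$ is strictly decreasing on $y > 0$. Writing $\tilde h(x,y) = c_x \, y^{-1/2}\,(1 + e^{-a/y})$ with $c_x = (2\pi \sin^2 x)^{-1/2} > 0$ and $a = 2(\pi - x)^2 > 0$, differentiation gives $\frac{\partial}{\partial y}\tilde h(x,y) = c_x y^{-3/2} B(y)$, where $B(y) = -\tfrac12 (1 + e^{-a/y}) + \tfrac{a}{y} e^{-a/y}$. Substituting $u = a/y > 0$ turns the sign question into whether $B = -\tfrac12 + e^{-u}(u - \tfrac12)$ is negative, i.e.\ whether $e^{-u}(u - \tfrac12) < \tfrac12$ for all $u > 0$. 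A one-line maximization of $\zeta(u) = e^{-u}(u - \tfrac12)$, via $\zeta'(u) = e^{-u}(\tfrac32 - u) = 0$, shows its maximum is $\zeta(\tfrac32) = e^{-3/2} \approx 0.223 < \tfrac12$. Therefore $B(y) < 0$ and $\frac{\partial}{\partial y}\tilde h(x,y) < 0$ for every $y > 0$, so $\tilde h(x,\cdot)$ is strictly decreasing, unconditionally in the variance.

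Finally I would chain these facts. Since increasing $\eta \beta^{\mathrm{R}}$ drives both $\sigma^2_{\tilde \varphi_n}$ and $\sigma^2_{\tilde \phi_n}$ downward, and $\tilde h(\cdot,\cdot)$ is positive and strictly decreasing in its second argument, both $\tilde h(\varphi_n, \sigma^2_{\tilde \varphi_n})$ and $\tilde h(\phi_n, \sigma^2_{\tilde \phi_n})$ strictly increase. The product of two positive increasing quantities increases, so the argument $1 + C_0 \tilde h(\varphi_n, \sigma^2_{\tilde \varphi_n}) \tilde h(\phi_n, \sigma^2_{\tilde \phi_n})$ of the logarithm increases because $C_0 > 0$, and monotonicity of $\log_2$ yields the claim that $\tilde R^{\mathrm{C}}_n$ increases monotonically with $\eta \beta^{\mathrm{R}}$.

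I expect the only nontrivial step to be the auxiliary monotonicity of $\tilde h(x,\cdot)$: the derivative has no obvious sign term by term, since the $\tfrac{a}{y} e^{-a/y}$ contribution is positive. The clean resolution is the substitution $u = a/y$ followed by the global bound $e^{-u}(u-\tfrac12) \le e^{-3/2} < \tfrac12$; everything else is a routine composition of monotone maps.
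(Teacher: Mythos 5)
Your proof is correct and follows essentially the same route as the paper's: both reduce the claim to single-variable monotonicity of $\tilde h$ with respect to the tracking variance (the paper works with the inverse variance, writing $\tilde h = C_1 f(C_2\, g(\eta\beta^{\mathrm{R}}))$ with $f(z)=\sqrt{z}\left(1+e^{-z}\right)$ and $g$ affine increasing, while you work with the variance $y$ directly), and after your substitution $u=a/y$ the sign condition you must verify is exactly the paper's inequality $\tfrac12 + \left(\tfrac12 - z\right)e^{-z} > 0$. The only difference is cosmetic: you establish $e^{-u}\left(u-\tfrac12\right)\le e^{-3/2}<\tfrac12$ by direct maximization, whereas the paper obtains the same bound by invoking the Lambert-W inequality $ze^z\ge -1/e$.
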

\begin{proof}
	First, it is observed that the monotonicity of the rate $\tilde R^{\mathrm{C}}_n$ and the product value $\eta \beta^{\mathrm{R}}$ can be verified by determining the monotonicity of the function $\tilde h( \varphi_{n|n-1}, \sigma^2_{\omega_\varphi})$ with respect to (w.r.t.) $\eta \beta^{\mathrm{R}}$ (similar for $\tilde h( \phi_{n|n-1}, \sigma^2_{\omega_\phi})$), since $\tilde R^{\mathrm{C}}_n$ is a monotonically increasing function of $\tilde h( \varphi_{n|n-1}, \sigma^2_{\omega_\varphi})$. Note that, with any given $\varphi_{n}$, $\tilde h(\varphi_{n}, \sigma^2_{\tilde \varphi_n}) = \frac{g(\eta \beta^{\mathrm{R}})}{{\sqrt {2\pi \sin^2({\varphi_{n}}) }  }}  \left({1 + {e^{ - {{{{2\left( {\pi  - {\varphi_{n}} } \right)}^2}}} g(\eta \beta^{\mathrm{R}}) }}}  \right)
	 \buildrel \Delta \over=  C_1 f(C_2 g(\eta \beta^{\mathrm{R}}))$, where $g(\eta \beta^{\mathrm{R}}) = \frac{\sigma _{{\omega_{\varphi}}} ^2 {\eta \beta^{\mathrm{R}}} + {A_{\varphi_n}}}{\sigma _{{\omega_{\varphi}}} ^2 A_{\varphi_n}}$, $f(z) = \sqrt z \left( {1 + {e^{ - z}}} \right)$, $C_1 = \frac{1}{\sqrt{4 \pi \sin^2(\varphi_{n}) (\pi - \varphi_{n})^2 }} > 0$, and $C_2 = 2(\pi - \varphi_{n})^2 > 0$. Then, we only need to verify $f(z) $ is a monotonically increasing function of $z$ since $\tilde h(\varphi_{n}, \sigma^2_{\tilde \varphi_n})$ and $g(\eta \beta^{\mathrm{R}})$ are affine transformation of $f(z)$ and $\eta \beta^{\mathrm{R}}$, respectively. The derivative of $f(z)$ is expressed as
	%\begin{equation}
		\begin{align}\label{SensingPowerRelationship}
			f'(z) & \!=\! \frac{1}{{2\sqrt z }}\left( {1 + {e^{ - z}}} \right) - \sqrt z {e^{ - z}} \!=\! \frac{1}{{\sqrt z }}\left( {\frac{1}{2} + \left( {\frac{1}{2} - z} \right){e^{ - z}}} \right) \nonumber \\
			&= \frac{1}{{\sqrt z }}\left( {\frac{1}{2} + \frac{1}{{{e^{\frac{1}{2}}}}}\left( {\frac{1}{2} - z} \right){e^{\frac{1}{2} - z}}} \right) \nonumber \\
			&\stackrel{(b)}{\geqslant} \frac{1}{{\sqrt z }}\left( {\frac{1}{2} - \frac{1}{{{e^{\frac{1}{2}}}}}\frac{1}{e}} \right) > 0,
		\end{align}	
	%\end{equation}
	where ($b$) holds due to the inequality $ze^z \ge -\frac{1}{e}$  \cite{corless1996lambertw}. This thus  completes the proof.
\end{proof}
 
\begin{remark}\label{TradeoffFundamental}
	According to Lemma \ref{VarianceMonotonicity}, $R^{\mathrm{S\&C}} \le R^{\mathrm{C}}$ since  $\sigma^2_{\tilde{\varphi_n}} = \frac{\sigma _{\omega_ {\varphi}} ^2 A_{\varphi_n}}{\sigma _{\omega_ {\varphi}} ^2 {\eta \beta^{\mathrm{R}}} + {A_{\varphi_n}}} \ge \sigma _{\omega_ {\varphi}} ^2$. Moreover, when the reflected power $\beta^{\mathrm{R}}$ and the sensing time ratio $\eta$ increases, the rate during the communication-only stage can be improved due to the decreased angle variances $\sigma^2 _{\hat \varphi _n}$ and $\sigma^2 _{\hat \phi _n}$. However, higher sensing power inevitably decreases the rate during the joint S\&C stage, and longer sensing time reduces the duration of the communication-only stage, which may unfavourably reduce the total achievable rate. Thus, it requires an effective balance for $\beta^{\mathrm{R}}$ and $\eta$ to improve the sensing-assisted communication performance. 
\end{remark}

Remark \ref{TradeoffFundamental} unveils a fundamental trade-off between S\&C  performance. Furthermore, a condition that specifies whether it is necessary and sufficient for the IOS to reflect signals for sensing is derived in the following.
For $x \in [\Delta ,\pi - \Delta]$ with ${ \frac{{2{{ \Delta }^2}}}{y}} \gg 1$, we have ${e^{ - \frac{{{{2( {\pi  - x} )}^2}}}{y}}} \ll 1$, since ${{{2{{( \pi - \phi_n )}^2}}}} \gg \sigma^2_{\omega_\phi}$ holds under general parameter setups. In this case, $\tilde R_n$ can be further approximated as
\begin{align}\label{AchievableRateApproximate}
	\hat R_n =& \eta {\log _2}\left( 1 + C_1\left( {1 - \beta^{\mathrm{R}} }\right)  \right)+ ( {1 - \eta } ) \\
	&\times {\log _2}\!\left( 1 \!+ C_1 \sqrt \frac{(\sigma _{{\omega_{\varphi}}} ^2 {\eta \beta^{\mathrm{R}}} \!+ {A_{\varphi_n}})(\sigma _{{\omega_{\phi}}} ^2 {\eta \beta^{\mathrm{R}}} + {A_{\phi_n}})}{  A_{\varphi_n} A_{\phi_n}}   \right) \!, \nonumber
\end{align}
where $C_1 = \frac{ 2 {P^{\max}\beta_{G,n}} \beta_{h}    L M_t }{\pi  \sin(  \varphi_{n|n-1}) \sin(  \phi_{n|n-1}) \sigma_{\omega_{\varphi}} \sigma_{\omega_{\phi}} \sigma_{c}^{2}}$.
Then, this approximate achievable rate is further analyzed to gain more useful insights for system design. In this case, (P2) is simplified to (P3), i.e.,
\begin{alignat}{2}
	\label{P3}
	(\rm{P3}): \quad & \begin{array}{*{20}{c}}
		\mathop {\max }\limits_{\eta ,\beta^{\mathrm{R}}} \quad  \hat R_n
	\end{array} & \\ 
	\mbox{s.t.}\quad
	& \beta^{\mathrm{R}}\in [0,1], \eta \in [0,1] . & \tag{\ref{P3}a}
\end{alignat}
\begin{theorem}\label{SensingCondition}
	At the optimal solution of (P3),  $\eta^* > 0$ if and only if $\frac{{\sigma _{{\omega _\phi }}^2}}{{{A_{\phi_n} }}} + \frac{{\sigma _{{\omega _\varphi }}^2}}{{{A_{\varphi_n} }}} \!>\! 2$.
\end{theorem}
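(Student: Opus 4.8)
The plan is to reduce the entire question to a directional-derivative analysis at the boundary $\eta=0$, exploiting that $\hat R_n$ degenerates there. First I would note that putting $\eta=0$ in (\ref{AchievableRateApproximate}) sends the argument of the square root to $1$ (since it multiplies $\eta\beta^{\mathrm{R}}$) and annihilates the first summand, so $\hat R_n(0,\beta^{\mathrm{R}})=\log_2(1+C_1)$ for \emph{every} $\beta^{\mathrm{R}}\in[0,1]$. Consequently $\eta^*>0$ holds iff $\max \hat R_n$ strictly exceeds this constant, i.e.\ iff the objective can be raised by entering the region $\eta>0$; this is exactly what a sign analysis of $\partial_\eta \hat R_n$ at $\eta=0$ detects.

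Writing $u=\eta\beta^{\mathrm{R}}$ and $G(u)=(\sigma^2_{\omega_\varphi}u+A_{\varphi_n})(\sigma^2_{\omega_\phi}u+A_{\phi_n})/(A_{\varphi_n}A_{\phi_n})$, I would compute $D(\beta^{\mathrm{R}})\triangleq\partial_\eta\hat R_n|_{\eta=0}$. Direct differentiation yields $D(\beta^{\mathrm{R}})=\log_2(1+C_1(1-\beta^{\mathrm{R}}))-\log_2(1+C_1)+\frac{C_1\beta^{\mathrm{R}}G'(0)}{2(1+C_1)\ln 2}$ with $G'(0)=\frac{\sigma^2_{\omega_\varphi}}{A_{\varphi_n}}+\frac{\sigma^2_{\omega_\phi}}{A_{\phi_n}}$. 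Hence $D(0)=0$ and $D'(0)=\frac{C_1}{2(1+C_1)\ln 2}\big(\frac{\sigma^2_{\omega_\varphi}}{A_{\varphi_n}}+\frac{\sigma^2_{\omega_\phi}}{A_{\phi_n}}-2\big)$, so the sign of $D'(0)$ is precisely the sign of $\frac{\sigma^2_{\omega_\varphi}}{A_{\varphi_n}}+\frac{\sigma^2_{\omega_\phi}}{A_{\phi_n}}-2$. I would also record that only the $\log_2(1+C_1(1-\beta^{\mathrm{R}}))$ term carries curvature in $\beta^{\mathrm{R}}$, giving $D''(\beta^{\mathrm{R}})<0$, so $D$ is concave on $[0,1]$.

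The sufficiency direction is then immediate: if $\frac{\sigma^2_{\omega_\varphi}}{A_{\varphi_n}}+\frac{\sigma^2_{\omega_\phi}}{A_{\phi_n}}>2$ then $D'(0)>0$, and since $D(0)=0$ we have $D(\beta^{\mathrm{R}})>0$ for all small $\beta^{\mathrm{R}}>0$; fixing one such $\beta^{\mathrm{R}}$, the strictly positive $\eta$-slope at $\eta=0$ forces $\hat R_n>\log_2(1+C_1)$ for small $\eta>0$, so no maximizer can have $\eta^*=0$. For necessity I argue by contraposition: if $\frac{\sigma^2_{\omega_\varphi}}{A_{\varphi_n}}+\frac{\sigma^2_{\omega_\phi}}{A_{\phi_n}}\le 2$ then $D'(0)\le 0$, and $D(0)=0$ together with concavity of $D$ gives $D(\beta^{\mathrm{R}})\le 0$ for all $\beta^{\mathrm{R}}\in[0,1]$.

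The main obstacle is to promote ``$\partial_\eta\hat R_n\le 0$ at $\eta=0$'' into ``$\eta=0$ is the global maximizer in $\eta$'', which I would settle by proving $\hat R_n$ is concave in $\eta$ for every fixed $\beta^{\mathrm{R}}$. Writing the second summand as $(1-\eta)\phi(\eta)$ with $\phi(\eta)=\log_2(1+C_1\sqrt{G(\eta\beta^{\mathrm{R}})})$, its second $\eta$-derivative equals $-2\phi'(\eta)+(1-\eta)\phi''(\eta)$, which is $\le 0$ once $\phi'\ge 0$ and $\phi''\le 0$. The crux is $\phi''\le 0$, i.e.\ concavity of $\sqrt{G}$: since $G$ is a quadratic $\alpha u^2+b_1u+c$ with $c=1$, the numerator of $(\sqrt G)''=\frac{2GG''-(G')^2}{4G^{3/2}}$ collapses (the $u^2$ and $u$ terms cancel) to $4\alpha-b_1^2=-\big(\frac{\sigma^2_{\omega_\varphi}}{A_{\varphi_n}}-\frac{\sigma^2_{\omega_\phi}}{A_{\phi_n}}\big)^2\le 0$. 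Thus $\sqrt G$, hence $\phi$ (a concave-increasing function of a concave map), hence $\hat R_n$, is concave in $\eta$; this gives $\partial_\eta\hat R_n(\eta,\beta^{\mathrm{R}})\le D(\beta^{\mathrm{R}})\le 0$ for all $\eta$, so $\hat R_n$ is non-increasing in $\eta$ and $\eta^*=0$, completing the contrapositive. I expect this $\sqrt{\text{quadratic}}$ concavity estimate to be the technical heart of the argument.
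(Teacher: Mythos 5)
Your proposal is correct and follows essentially the same route as the paper's Appendix E: both reduce the question to the sign of $\partial_\eta \hat R_n$ at $\eta=0$ (your $D(\beta^{\mathrm{R}})$ is exactly the paper's $g(\beta^{\mathrm{R}})$, since $D_1\sqrt{D_2D_3}=C_1$ and $\tfrac{D_1(D_2+D_3)}{2\sqrt{D_2D_3}}=\tfrac{C_1}{2}G'(0)$), use concavity of that boundary derivative in $\beta^{\mathrm{R}}$ together with $D(0)=0$, and invoke concavity of $\hat R_n$ in $\eta$ to promote the non-positive boundary slope to global optimality of $\eta=0$. The one place you go beyond the paper is that you actually prove the concavity in $\eta$ — via the identity $2GG''-(G')^2=4\alpha c-b_1^2=-\bigl(\tfrac{\sigma^2_{\omega_\varphi}}{A_{\varphi_n}}-\tfrac{\sigma^2_{\omega_\phi}}{A_{\phi_n}}\bigr)^2\le 0$ for the quadratic under the square root, plus the product rule bound $-2\phi'+(1-\eta)\phi''\le 0$ — whereas the paper merely asserts it as ``not difficult to verify''; you also make explicit the fact that $\hat R_n(0,\beta^{\mathrm{R}})\equiv\log_2(1+C_1)$ is independent of $\beta^{\mathrm{R}}$, which both arguments implicitly need to conclude that any maximizer must leave the boundary.
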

\begin{proof}
	Please refer to Appendix E.
\end{proof}

Theorem \ref{SensingCondition} provides a sufficient and necessary condition on whether the joint S\&C stage is needed. In practice, it tends to be satisfied when the accuracy of estimated angles $\{\varphi_{n|n-1}, \phi_{n|n-1}\}$ is lower or the accuracy of measured angles $\{\hat \varphi_{n}, \hat \phi_{n}\}$ is higher. Intuitively, if $\frac{{\sigma _{{\omega _\phi }}^2}}{{{A_{\phi_n} }}} + \frac{{\sigma _{{\omega _\varphi }}^2}}{{{A_{\varphi_n} }}} \le 2$, the communication performance gain brought by sensing cannot compensate for the performance loss caused by the time and power consumption for sensing during the joint S\&C stage. Thus, when the condition in Theorem \ref{SensingCondition} is not satisfied, $\hat R_n$ tends to be higher if the joint S\&C stage is removed to increase the duration of the communication-only stage. During every frame, it can be dynamically determined whether sensing is needed to improve communication performance based on Theorem \ref{SensingCondition}.

\section{Multi-vehicle Sensing-assisted Communication}
\label{MultiVehicle}
% %%%%%%%%%%%%%%%%%%%%%%%%%%%%%%%%%%%%%%%%%%%%%%%%%%%%%%%%%%%
%% different K eta_k
%%%%%%%%%%%%%%%%%
In this section, we first derive the close-form expression of the communication for multi-vehicle cases and then propose an efficient AO algorithm to solve problem (P1). Then, the interference-limited and noise-limited cases are analyzed to provide a performance bound and a low-complexity algorithm for the considered systems.

\subsection{Closed-form Expression of Achievable Rate and Proposed Solution}
\label{MultiVehicleClosedForm}
Based on the analysis under single-vehicle cases in Section \ref{SingleVehicle}, the closed-form expression of the approximate achievable rate for multi-vehicle cases is derived. First, the interference between different vehicles is analyzed. During the joint S\&C stage, the interference of signal $s_j$ to vehicle $k$ can be given by
\begin{align}\label{Interference_betweenKJ}
	&\mathbb{E}_{\varphi, \phi}\left[ \left|{\bm{h}}^T_k{\bm{\Theta}}^{\mathrm{S\&C,T}}_{k,n} {\bm{H}}^{\mathrm{DL}}_{k,n} {\bm{f}}_{j,n}^{\mathrm{C}}\right|^{2}\right]  \nonumber \\
	 \stackrel{(c)}=&{ P^{\mathrm{S\&C}}_j  {\beta_{G,n}} \beta_{h,k}   L M_t } \mathbb{E}_{\varphi}\left[F_{M_t}\left( \cos(\varphi_{k,n}) - \cos(\varphi_{j,n|n-1})\right)  \right] \nonumber \\
	& \times \mathbb{E}_{\varphi}\left[ F_{L_x}\left( \Delta \cos \varphi_{k,n}\right) \right]  \mathbb{E}_{\phi}\left[F_{L_y}\left( \Delta \cos \phi_{k,n}\right)  \right] ,
\end{align}
where ($c$) holds since the angles $\phi_{k,n}$, $ \varphi_{k,n}$ and $\varphi_{j,n|n-1}$ are independent with each other.
According to Proposition \ref{ClosedFormAchievable}, we have $\mathbb{E}_{\varphi}\left[ F_{L_x}\left( \Delta \cos \varphi_{k,n}\right) \right]  \mathbb{E}_{\phi}\left[F_{L_y}\left( \Delta \cos \phi_{k,n}\right)  \right] = 4h(\varphi_{k,n}, \sigma^2_{\omega_{\varphi}}) h(\phi_{k,n}, \sigma^2_{\omega_{\phi}})$. 
\begin{thm}\label{InterferenceThm}
	The term $\mathbb{E}_{\varphi_{j,n|n-1}}[F_{M_t}( \cos(\varphi_{k,n}) - \cos(\varphi_{j,n|n-1}))]$ in (\ref{Interference_betweenKJ}) can be approximated by 
	\begin{equation}\label{InterferenceTerm}
		\begin{aligned}
		&\mathbb{E}_{\varphi_{j,n|n-1}}\left[F_{M_t}\left( \cos(\varphi_{k,n}) - \cos(\varphi_{j,n|n-1})\right)  \right] \\
			=& I(\varphi_{k,n},  \varphi_{j,n|n-1}, \sigma^2_{\omega_{\varphi}}),
		\end{aligned}
	\end{equation} 
where $I(x, y, z) = \frac{2}{{\sqrt {2\pi {z} \sin^2({x})} }} \sum\limits_{i = -\infty}^{\infty} \bigg( {e^{ - \frac{{{{\left( {2i\pi} + x - y \right)}^2}}}{{2{z}}}}} + {e^{ - \frac{{{{\left( {2(i + 1)\pi  } - x - y\right)}^2}}}{{2{z}}}}} \bigg)$.
\end{thm}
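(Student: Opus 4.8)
The plan is to evaluate this expectation by the same Dirac-comb argument that produced the kernel $h$ in the proof of Proposition \ref{InftyBand} (Appendix C), now applied with two distinct center angles. A useful sanity check guiding the derivation is that the claimed kernel collapses to the earlier one when the two angles agree, $I(x,x,z)=2h(x,z)$; the present statement is therefore the natural generalization to the case where the direction $\varphi_{j,n|n-1}$ toward which $\bm{f}_{j,n}$ is steered and the true direction $\varphi_{k,n}$ of the interfered vehicle are far apart. Concretely, I would treat $x=\varphi_{k,n}$ as a fixed parameter and model $\varphi_{j,n|n-1}$ as a Gaussian variable with variance $z=\sigma^2_{\omega_\varphi}$ (the prediction-error variance) centered at the underlying true angle $\varphi_{j,n}$, which is written as $\varphi_{j,n|n-1}$ in the statement following the paper's convention of replacing unknown true angles by their predictions. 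The expectation then becomes the integral of $F_{M_t}(\cos\varphi_{k,n}-\cos\theta)$ against this Gaussian density.

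The next step exploits the large transmit-array regime. Using $\int_{-1}^{1}F_{M_t}(u)\,du=2$ and the concentration of the Fej\'er kernel at the even integers, I approximate $F_{M_t}(u)\approx 2\sum_{i}\delta(u-2i)$ for large $M_t$. Since $u=\cos\varphi_{k,n}-\cos\theta$ lies in $[\cos\varphi_{k,n}-1,\cos\varphi_{k,n}+1]$, only the $i=0$ mass survives for generic $\varphi_{k,n}\in(0,\pi)$, as the values $u=\pm2$ would force the degenerate boundary angles $\{0,\pi\}$. I would then compose the surviving Dirac mass with the map $\theta\mapsto\cos\varphi_{k,n}-\cos\theta$ through the change-of-variables identity $\delta(\cos\varphi_{k,n}-\cos\theta)=\sum_{\theta^{*}}|\sin\theta^{*}|^{-1}\delta(\theta-\theta^{*})$, where the root set $\{\theta:\cos\theta=\cos\varphi_{k,n}\}=\{\pm\varphi_{k,n}+2i\pi\}_{i\in\mathbb{Z}}$ satisfies $|\sin\theta^{*}|=|\sin\varphi_{k,n}|$ at every root. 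This collapses the integral into a sum of Gaussian densities evaluated at the roots.

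Finally, I would collect the two root families: the branch $\theta^{*}=\varphi_{k,n}+2i\pi$ contributes exponents $(2i\pi+x-y)^2/(2z)$, while the branch $\theta^{*}=-\varphi_{k,n}+2i\pi$, after reindexing $i\to i+1$, contributes $(2(i+1)\pi-x-y)^2/(2z)$, with $x=\varphi_{k,n}$ and $y=\varphi_{j,n|n-1}$. Multiplying through by the Fej\'er weight $2$, the Jacobian $|\sin x|^{-1}$, and the Gaussian prefactor $(2\pi z)^{-1/2}$ assembles precisely $I(\varphi_{k,n},\varphi_{j,n|n-1},\sigma^2_{\omega_\varphi})$. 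I expect the main obstacle to be the rigorous justification of the Dirac-comb step, i.e., discarding the $i\neq0$ peaks and replacing an $O(1/M_t)$-wide peak by a point mass against a slowly varying Gaussian density; note that, unlike the single-vehicle term that yields $h$, no small-angle linearization of $\cos\theta$ is available here because $\varphi_{k,n}$ and $\varphi_{j,n}$ need not be close, but the composition argument requires none, and the approximation is of the same type already validated numerically for Proposition \ref{InftyBand}.
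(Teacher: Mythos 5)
Your proposal is correct and follows essentially the same route as the paper's Appendix F: the Fej\'er kernel $F_{M_t}$ is collapsed to a unit of mass $2$ at zero (the rigorous form of which is the paper's Lemma~\ref{FejerKernel}), and the root set $\theta^{*}=\pm\varphi_{k,n}+2i\pi$ of $\cos\theta=\cos\varphi_{k,n}$, each root carrying the Jacobian $1/|\sin\varphi_{k,n}|$, yields exactly the two Gaussian families and the prefactor $2/\sqrt{2\pi z\sin^2(x)}$. The only difference is presentational: you keep the integral in the angle domain and localize via the Dirac-comb/delta-composition identity, whereas the paper first pushes the Gaussian forward through the cosine map to obtain the density $P_\varphi(y)$ of $y=\cos(\varphi_{k,n})-\cos(\varphi_{j,n|n-1})$ (as in (\ref{ProbablityEquation})) and then evaluates $2P_\varphi(0)$ --- the two computations are equivalent, and your sanity check $I(x,x,z)=2h(x,z)$ confirms the factor bookkeeping.
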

\begin{proof}
	Please refer to Appendix F. 
\end{proof}
	
In Lemma \ref{InterferenceThm}, the interference term $I(x, y, z)$ can be simplified in a similar way as given in Section \ref{PerformanceAnalysis}, i.e., $I(x, y, z) \approx \frac{1}{{\sqrt {2\pi {z} \sin^2({x})} }}\left( {{e^{ - \frac{{{{\left( x - y \right)}^2}}}{{2{z}}}}} + {e^{ - \frac{{{{\left( {2\pi  } - x - y\right)}^2}}}{{2{z}}}}}} \right) \buildrel \Delta \over= \tilde I(x, y, z)$. Similar to (\ref{ClosedFormAchievableSC}) and (\ref{ClosedFormAchievalbeC}), the angle $\varphi_{k,n}$ in (\ref{InterferenceTerm}) can be practically approximated by $\varphi_{k,n|n-1}$. Interestingly, if $\varphi_{j,n} = -\varphi_{k,n}$, the approximate interference between vehicles $j$ and $k$ is equal to each other, i.e., $\tilde I(\varphi_{j,n}, \varphi_{k,n|n-1},  \sigma^2_{\omega_{\varphi}}) = \tilde I(\varphi_{k,n},  \varphi_{j,n|n-1}, \sigma^2_{\omega_{\varphi}})$.
\begin{Pro}\label{ClosedFormAchievableMultiX}
	When the number of IOS elements is sufficiently large, i.e., $L_x \to \infty$ and $L_y \to \infty$, the achievable rate during the joint S\&C stage and the communication-only stage can be respectively approximated as
	\begin{align}\label{ClosedFormAchievableMultiSC}
		&\tilde R^{\mathrm{S\&C}}_{k,n} = {\log _2}\bigg( 1 + \\ 
		&\frac{{4\left( {1 - {\beta_k ^{\mathrm{R}}}} \right)p_k^{\mathrm{S\&C}}{M_t}}}{{4\left( {1 - {\beta_k ^{\mathrm{R}}}} \right)\sum\nolimits_{j \ne k}^K {p_j^{\mathrm{S\&C}}\tilde I(\varphi_{k,n|n-1}, \varphi_{j,n|n-1}, \sigma^2_{\omega_{\varphi}}) }  + \sigma^{\mathrm{S\&C}}_{k,n}}} \Bigg) \nonumber
	\end{align}
	and
	\begin{align}\label{ClosedFormAchievalbeMultiC}
		&\tilde R^{\mathrm{C}}_{k,n}  \\
		=& {\log _2}\!\left(\! {1 \!+\! \frac{{4 p_k^{\mathrm{C}}{M_t}}}{{4\sum\nolimits_{j \ne k}^K {p_j^{\mathrm{C}} \tilde  I(\varphi_{k,n|n-1},  \varphi_{j,n|n-1}, \sigma^2_{{\tilde \varphi_{j,n}}}) } \! +\! \sigma^{\mathrm{C}}_{k,n} }}} \!\right)\!. \nonumber	
	\end{align} 
where $\sigma^{\mathrm{S\&C}}_{k,n} = \frac{{\sigma _c^2}}{{{\beta _{G,k,n}}{\beta _{h,k}}L \tilde h( \varphi_{k,n|n-1}, \sigma^2_{\omega_{\varphi}}) \tilde h( \phi_{k,n|n-1}, \sigma^2_{\omega _{\phi}})}}$, $\sigma^{\mathrm{C}}_{k,n} = \frac{{\sigma _c^2}}{{{\beta _{G,k,n}}{\beta _{h,k}}L \tilde h( \varphi_{k,n|n-1}, \sigma^2_{\tilde \varphi_{k,n}}) \tilde h( \phi_{k,n|n-1}, \sigma^2_{\tilde \phi_n})}}$, $\sigma _{\tilde \varphi_{k,n}} ^2 = \frac{\sigma _{{\omega_{\varphi}}} ^2 A_{\varphi_{k,n}}}{\sigma _{{\omega_{\varphi}}} ^2 {p^{\mathrm{S\&C}}_k \eta \beta_k^{\mathrm{R}}} + {A_{\varphi_{k,n}}}}$, $\sigma _{\tilde \phi_{k,n}} ^2 \!\!=\!\! \frac{\sigma _{{\omega_{\phi}}} ^2 A_{\phi_{k,n}}}{\sigma _{{\omega_{\phi}}} ^2 {p^{\mathrm{S\&C}}_k \eta \beta_k^{\mathrm{R}}} + {A_{\phi_{k,n}}}}$, $A_{\varphi_{k,n}} \!=\! \frac{  \Delta t \sigma_s^2 \sigma _R^2 }{{\Delta T \beta_{G,k,n}^2 L  M_t M_r h(\varphi_{k,n}, \sigma^2_{\omega_{\varphi}}) h(\phi_{k,n}, \sigma^2_{\omega_{\phi}}) {{\sin^2 }}\varphi_{k,n} } }$, and $A_{\phi_{k,n}} \!=\! \frac{  \Delta t \sigma_s^2 \sigma _R^2 }{{\Delta T   \beta_{G,k,n}^2 L  M_t M_r h(\varphi_{k,n}, \sigma^2_{\omega_{\varphi}}) h(\phi_{k,n}, \sigma^2_{\omega_{\phi}}) {{\sin^2 }}\phi_{k,n} } }$.
\end{Pro}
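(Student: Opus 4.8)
The plan is to exploit that, thanks to the Jensen bound in (\ref{AchievablerateSC}) and (\ref{AchievablerateC}), each approximate rate is already a logarithm of a ratio of expectations, so the task reduces to separately evaluating the desired-signal power (numerator) and the inter-vehicle interference plus noise (denominator) for each stage in the regime $L_x,L_y\to\infty$, reusing the single-vehicle machinery of Propositions \ref{InftyBand} and \ref{ClosedFormAchievable} together with the interference identity of Lemma \ref{InterferenceThm}. First I would treat the joint S\&C stage.

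For the signal power, Lemmas \ref{OptimalPhaseShift} and \ref{EqualPowerSplitting} let me write the refracting phase shifts of IOS $k$ as aligned toward device $k$ with the common ratio $\beta_k^{\mathrm{T}}=1-\beta_k^{\mathrm{R}}$, while $\bm{f}_{k,n}^{\mathrm{S\&C}}$ points at the estimated angle $\varphi_{k,n|n-1}$. Factorizing $|\bm{h}_k^T\bm{\Theta}^{\mathrm{S\&C,T}}_{k,n}\bm{H}^{\mathrm{DL}}_{k,n}\bm{f}_{k,n}^{\mathrm{S\&C}}|^2$ into a transmit-array term $F_{M_t}(\Delta\cos\varphi_{k,n})$ and an IOS-aperture term $F_{L_x}(\Delta\cos\varphi_{k,n})F_{L_y}(\Delta\cos\phi_{k,n})$ exactly as in the single-vehicle signal, I let $L_x,L_y\to\infty$; the IOS kernels concentrate at $\Delta\cos\varphi_{k,n}=\Delta\cos\phi_{k,n}=0$, and since the transmit beamformer shares the same pointing angle $\varphi_{k,n|n-1}$, the transmit kernel collapses to $F_{M_t}(0)=M_t$ while the residual expectation gives $4M_t h(\varphi_{k,n},\sigma^2_{\omega_\varphi})h(\phi_{k,n},\sigma^2_{\omega_\phi})$ by Proposition \ref{ClosedFormAchievable}. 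Dividing through by the common gain $\beta_{G,k,n}\beta_{h,k}L\,h(\varphi_{k,n},\cdot)h(\phi_{k,n},\cdot)$ leaves the numerator $4(1-\beta_k^{\mathrm{R}})p_k^{\mathrm{S\&C}}M_t$.

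For the denominator I invoke (\ref{Interference_betweenKJ}): because $\bm{f}_{j,n}^{\mathrm{C}}$ targets vehicle $j$ whereas IOS $k$ refracts toward $k$, the transmit-array kernel now depends on the \emph{separation} $\cos\varphi_{k,n}-\cos\varphi_{j,n|n-1}$, and the angle $\varphi_{j,n|n-1}$ is independent of the error angle $\varphi_{k,n|n-1}$ governing the IOS kernels, which is precisely the factorization justified in step ($c$). Hence the expectation splits: the IOS part again yields $4h(\varphi_{k,n},\cdot)h(\phi_{k,n},\cdot)$ by Proposition \ref{ClosedFormAchievable}, while the transmit part is $\mathbb{E}_{\varphi_{j,n|n-1}}[F_{M_t}(\cos\varphi_{k,n}-\cos\varphi_{j,n|n-1})]=I(\varphi_{k,n},\varphi_{j,n|n-1},\sigma^2_{\omega_\varphi})$ from Lemma \ref{InterferenceThm}, carrying no beamforming gain $M_t$ because the beams are mismatched. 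Dividing by the same common gain and applying the practical approximations $I\approx\tilde I$, $h\approx\tilde h$, $\varphi_{k,n}\approx\varphi_{k,n|n-1}$ produces the interference sum $4(1-\beta_k^{\mathrm{R}})\sum_{j\ne k}p_j^{\mathrm{S\&C}}\tilde I(\varphi_{k,n|n-1},\varphi_{j,n|n-1},\sigma^2_{\omega_\varphi})$ and the residual noise $\sigma^{\mathrm{S\&C}}_{k,n}=\sigma_c^2/(\beta_{G,k,n}\beta_{h,k}L\,\tilde h\,\tilde h)$, establishing (\ref{ClosedFormAchievableMultiSC}).

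The communication-only stage then follows by the identical argument with two substitutions: the total-refraction state sets $\beta_k^{\mathrm{C,T}}=1$, removing the $(1-\beta_k^{\mathrm{R}})$ factors, and the transmit beamformer uses the \emph{tracked} angle $\tilde\varphi_{k,n}$, so the governing variances become the Kalman state-tracking variances $\sigma^2_{\tilde\varphi_{k,n}},\sigma^2_{\tilde\phi_{k,n}}$ of (\ref{StateTrackingError}), which replace $\sigma^2_{\omega_\varphi},\sigma^2_{\omega_\phi}$ throughout to give (\ref{ClosedFormAchievalbeMultiC}) and $\sigma^{\mathrm{C}}_{k,n}$. I expect the main obstacle to lie in the interference term rather than the signal term: one must rigorously justify the factorization of the expectation via the independence of the separation and error angles, and carry out the Gaussian-sum evaluation of $\mathbb{E}[F_{M_t}(\cdot)]$ underlying Lemma \ref{InterferenceThm}, with the crucial conceptual subtlety being that the transmit array contributes its full array gain $M_t$ to the desired signal but only the averaged sidelobe level $\tilde I$ (with no $M_t$ gain) to the interference.
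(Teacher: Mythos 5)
Your proposal is correct and follows essentially the same route as the paper: the paper's own proof simply combines Proposition \ref{ClosedFormAchievable} (for the desired-signal and noise terms) with Lemma \ref{InterferenceThm} (for the cross-beam interference terms) and plugs them into the Jensen-bounded rates (\ref{AchievablerateSC}) and (\ref{AchievablerateC}), which is precisely your decomposition. Your write-up is in fact more explicit than the paper's one-sentence proof—in particular in justifying the independence-based factorization of the interference expectation and in noting that the interference carries only the averaged kernel $\tilde I$ rather than the full gain $M_t$—but these are elaborations of the same argument, not a different one.
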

\begin{proof}
	Combing the results in Proposition \ref{ClosedFormAchievable} and Lemma \ref{InterferenceThm}, the approximate achievable rate of communication device $k$ can be obtained by plugging the terms in (\ref{ClosedFormAchievableSC}) and (\ref{InterferenceTerm}) into (\ref{AchievablerateSC}) and (\ref{AchievablerateC}), thereby obtaining the expressions in (\ref{ClosedFormAchievableMultiSC}) and (\ref{ClosedFormAchievalbeMultiC}).
\end{proof}

Proposition \ref{ClosedFormAchievableMultiX} provides a closed-form approximate achievable rate, based on which, (P1) can be rewritten as 
\begin{alignat}{2}
	\label{P4}
	(\rm{P4}): \quad & \begin{array}{*{20}{c}}
		\mathop {\max }\limits_{{\bm p}^{\mathrm{S\&C}}_k, {\bm p}^{\mathrm{C}}_k, {\bm \beta}^{\mathrm{R}}, \eta} \quad   R
	\end{array} & \\ 
	\mbox{s.t.}\quad
	& (\ref{P1}c)-(\ref{P1}f), \nonumber \\
	& \eta \tilde R^{\mathrm{S\&C}}_{k,n} +  \left( {1 - \eta } \right) \tilde R^{\mathrm{C}}_{k,n} \ge R , \forall k, & \tag{\ref{P4}a} \\
	& \beta _k^{\mathrm{R}} \in [0, 1], \forall k, \tag{\ref{P4}b} 
\end{alignat}
where ${\bm \beta}^{\mathrm{R}} = [{\beta}_1^{\mathrm{R}}, \cdots, {\beta}_K^{\mathrm{R}}]$. However, it is still challenging to solve (P1) due to the non-convex constraints and the closely coupled variables in (\ref{P4}a). According to Lemma \ref{VarianceMonotonicity}, the rate during the communication-only stage depends mainly on the sensing performance, i.e., the variances $\sigma^2_{\tilde \varphi_{k,n}}$ and $\sigma^2_{\tilde \phi_{k,n}}$. Thus, a low-complexity power allocation method during the joint S\&C stage is adopted, where the sensing performance of different vehicles is designed to be equal, thereby simplifying the solution to this problem. Specifically, let $\frac{A_{\varphi_{k,n}}}{p^{\mathrm{S\&C}}_k} = \frac{A_{\varphi_{j,n}}}{p^{\mathrm{S\&C}}_j}$ and $\beta^{\mathrm{R}}_k = \beta^{\mathrm{R}}_j$, $\forall k,j \in {\cal{K}}$, then $\sigma^2_{\tilde \varphi_{k,n}} = \sigma^2_{\tilde \varphi_{j,n}}$ and $\sigma^2_{\tilde \phi_{k,n}} = \sigma^2_{\tilde \phi_{j,n}}$. In this case, an AO algorithm is proposed to solve (P4) by partitioning the remaining variables into three blocks and iteratively optimizing them, i.e., transmit power ${\bm p}^{\mathrm{C}}$, time splitting ratio $\eta$, and reflecting power ratio ${\beta}^{\mathrm{R}}_k$. 

\subsubsection{Transmit Power}
With any given $\eta$ and ${\bm \beta}^{\mathrm{R}}$, the successive convex approximation technique is adopted to handle (\ref{P4}a). A lower bound of $\tilde R^{\mathrm{C}}_{k,n}$ can be given as follows:
\begin{align}
	& \tilde R^{\mathrm{C}}_{k,n} \!\ge\! -  {\log_2}\!\left(\!4\sum\nolimits_{j \ne k}^K {p_j^{\mathrm{C}, (r)}\tilde I(\varphi_{k,n|n-1}, \varphi_{j,n|n-1}, \sigma^2_{{\tilde \varphi_j}}) } \!+\! \sigma^{\mathrm{C}}_{k,n}\! \!\right) \nonumber \\
	&  + \!{\log _2} \!\bigg( \! 4\sum\nolimits_{j \ne k}^K {p_j^{\mathrm{C}}\tilde I(\varphi_{k,n|n-1}, \varphi_{j,n|n-1}, \sigma^2_{{\tilde \varphi_j}}) }   \!+ \! \sigma^{\mathrm{C}}_{k,n} \! + {{4p_k^{\mathrm{C}}{M_t}}}  \!\bigg)  \nonumber \\
	&- \sum\nolimits_{j \ne k}^K D^{\mathrm{C}}_{k,j} (p_j^{\mathrm{C}}-p_j^{\mathrm{C}, (r)}) \buildrel \Delta \over= \underline R^{\mathrm{C}}_{k,n},	
\end{align}
where $D^{\mathrm{C}}_{k,j} = \frac{4\tilde I(\varphi_{k,n|n-1}, \varphi_{j,n|n-1}, \sigma^2_{{\tilde \varphi_j}}) \ln 2}{4\sum\nolimits_{j \ne k}^K{p_j^{\mathrm{C}, (r)}\tilde I(\varphi_{k,n|n-1}, \varphi_{j,n|n-1}, \sigma^2_{{\tilde \varphi_j}}) } + \sigma^{\mathrm{C}}_{k,n}}$. Then, with any given local point $\{p_k^{\mathrm{C}, (r)}\}$, the transmit power can be updated by solving (P4.1) as follows.
\begin{alignat}{2}
	\label{P4.1}
	(\rm{P4.1}): \quad & \begin{array}{*{20}{c}}
		\mathop {\max }\limits_{{\bm p}^{\mathrm{C}}} \quad   R
	\end{array} & \\ 
	\mbox{s.t.}\quad
	& (\ref{P1}c)-(\ref{P1}f), (\ref{P4}b), \nonumber \\
	& \eta \tilde R^{\mathrm{S\&C}}_{k,n} +  \left( {1 - \eta } \right) \underline R^{\mathrm{C}}_{k,n} \ge R , \forall k. & \tag{\ref{P4.1}a} 
\end{alignat}
(P4.1) is a convex optimization problem about ${\bm{p}}^{\mathrm{C}}$, which can be optimally solved by standard CVX tools.
\subsubsection{Sensing Time}
With any given ${\bm p}^{\mathrm{C}}$, and ${\bm \beta}^{\mathrm{R}}$, the optimal solution of $\eta$ can be obtained by one-dimension search within $[0, 1]$.
\subsubsection{Reflecting Power}
Similarly, with any given $\eta$ and ${\bm p}^{\mathrm{C}}$, the optimal solution of $\beta _k^{\mathrm{R}}$ can be obtained by one-dimension search within $[0, 1]$.

The objective function of (P4) is non-increasing over each iteration during applying the AO method and the objective of (P4) is upper bounded due to the limited transmit power $P_{\max}$ \cite{Wu2018JointTrajectory}. Thus, by alternately solving three block variables, it is guaranteed to converge. To gain more design insights, some typical cases of the considered system are further investigated in the next section.

The complexity of the proposed AO algorithm for general multi-vehicle cases can be analyzed as follows. In the inner layer, the main complexity is caused by the computation of ${\bm{p}}^C$, i.e., ${\cal{O}}\left((K + 1)^{3.5}\right)$ \cite{zhang2019securing}, where $K + 1$ is the number of variables in problem (\rm{P4.1}). Thus, the complexity of the proposed algorithm for the inner layer is ${\cal{O}}\left((K + 1)^{3.5}\right)$, and its overall complexity depends on the number of iterations required for reaching convergence in the outer layer.

\subsection{Interference-Limited and Noise-Limited Cases}
\label{LowCOmplexity}
In this subsection, two typical scenarios for V2X networks are investigated to draw more useful insights, including interference-limited cases and noise-limited cases. 

\subsubsection{Interference Limited (IL)}
When some vehicles are relatively close or the transmit power of the RSU is relatively large, co-channel interference is the main bottleneck of communication performance. In this case, it is practically assumed that the noise power is negligible. To draw useful insights, we present the analysis under a high transmit power region, where the sensing results are assumed to be relatively accurate due to the high SNR of echo signals, i.e., $ \sigma^2_{z_{\varphi_{k,n}}} \to 0$. In this case, the optimal reflecting power ratio $\beta^{{\mathrm{R}}*}_k > 0$ and the SINR can be approximately expressed by the signal-to-interference-ratio (SIR) at the communication device, then the achievable rate during two stages can be respectively approximated as
\begin{equation}\label{ClosedFormAchievableInterferenceLImited}
	R  ^{\mathrm{IL, S\&C}}_{k,n} = \!{\log _2}\!\left( {1 + \frac{{p_k^{\mathrm{S\&C}}{M_t}}}{{\sum\nolimits_{j \ne k}^K {p_j^{\mathrm{S\&C}}\tilde I(\varphi_{j,n|n-1}, \varphi_{k,n|n-1}, \sigma^2_{\omega_{\varphi}}) }}}} \right)
\end{equation}
and
\begin{equation}\label{ClosedFormAchievableCInterferenceLImited}
	R^{\mathrm{IL,C}}_{k,n} = {\log _2}\left( {1 + \frac{{ p_k^{\mathrm{C}}{M_t}}}{{\sum\nolimits_{j \ne k}^K {p_j^{\mathrm{C}}   F_{M_t}\left( \cos(\hat \varphi_{k,n}) - \cos(\hat \varphi_{j})\right) } }}} \right).	
\end{equation} 
Then, (P4) can be simplified to (P4.2), given by
\begin{alignat}{2}
	\label{P4.2}
	(\rm{P4.2}): \quad & \begin{array}{*{20}{c}}
		\mathop {\max }\limits_{{\bm p}^{\mathrm{S\&C}}_k, {\bm p}^{\mathrm{C}}_k, {\bm \beta}^{\mathrm{R}}_k, \eta} \quad   R
	\end{array} & \\ 
	\mbox{s.t.}\quad
	& (\ref{P1}c)-(\ref{P1}f), (\ref{P4}b), \nonumber \\
	& \eta R  ^{\mathrm{IL, S\&C}}_{k,n} +  \left( {1 - \eta } \right) R^{\mathrm{IL,C}}_{k,n}  \ge R , \forall k. & \tag{\ref{P4.2}a} 
\end{alignat}

\begin{Pro}\label{UpperBound}
	When the maximum transmit power $P^{\max}$ is larger than a saturation point, an upper bound of the rate of (P4.2) can be given by
	\begin{equation}
	\bar R_n \!=\! {\log _2}\!\left(\! {1 \!+\! \frac{M_t}{{\mathop {\min }\limits_k \sum\nolimits_{j \ne k}^K {{F_{M_t}\left(\cos(\hat \varphi_{k,n}), \cos(\hat \varphi_{j,n})\right)}} }}	} \!\right)\!.
\end{equation}
\end{Pro}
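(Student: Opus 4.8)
The plan is to work in the saturated interference-limited regime identified in the statement. First I would note that once $P^{\max}$ exceeds the saturation point, the additive noise in Proposition~\ref{ClosedFormAchievableMultiX} is dominated by the co-channel interference and, because the echo SNR is then high, the measured angles may be treated as exact; the two per-stage rates therefore collapse to the SIR forms (\ref{ClosedFormAchievableInterferenceLImited}) and (\ref{ClosedFormAchievableCInterferenceLImited}), and the objective of (P4.2) becomes the max--min of the convex combination $\eta R^{\mathrm{IL,S\&C}}_{k,n}+(1-\eta)R^{\mathrm{IL,C}}_{k,n}$. Since every SIR is invariant to a common scaling of all transmit powers, the sum-power constraints (\ref{P1}c)--(\ref{P1}d) become inactive and the problem reduces to a pure power-ratio SIR-balancing problem. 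A useful preliminary reduction is the weighted arithmetic--geometric-mean inequality $\eta\log_2(1+x)+(1-\eta)\log_2(1+y)\le\log_2(1+\eta x+(1-\eta)y)$, which linearizes the combined rate so that it suffices to upper bound $\min_k[\eta\gamma^{\mathrm{S\&C}}_{k,n}+(1-\eta)\gamma^{\mathrm{C}}_{k,n}]$ by $M_t/\min_k S_k$, where $S_k=\sum_{j\ne k}F_{M_t}(\cos\hat\varphi_{k,n}-\cos\hat\varphi_{j,n})$.

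The core of the argument is the bound for the communication-only stage, which produces $\bar R_n$. The key structural fact is that the interference coefficients $F_{M_t}(\cos\hat\varphi_{k,n}-\cos\hat\varphi_{j,n})$ are symmetric in $(k,j)$, because the Fej\'er kernel is even. I would argue by contradiction: if every user had SIR strictly larger than $M_t/\min_k S_k$, then $\sum_{j\ne k}F_{M_t}(\cos\hat\varphi_{k,n}-\cos\hat\varphi_{j,n})\,p^{\mathrm{C}}_j<p^{\mathrm{C}}_k\min_k S_k$ for every $k$; summing over $k$ and exchanging the roles of $k$ and $j$ by symmetry gives $\sum_k p^{\mathrm{C}}_k S_k<(\min_k S_k)\sum_k p^{\mathrm{C}}_k$, which contradicts $S_k\ge\min_k S_k$. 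Hence, for any power allocation at least one user has SIR no larger than $M_t/\min_k S_k$, so $\min_k R^{\mathrm{IL,C}}_{k,n}\le\bar R_n$. This is exactly the Perron--Frobenius fact that the balanced SIR $M_t/\rho(\mathbf F)$ of the nonnegative coupling matrix $\mathbf F$ is bounded above by $M_t$ divided by its minimum row sum.

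The remaining and hardest step is to transfer this single-stage ceiling to the two-stage objective. The difficulty is that a convex combination of two SIR-balanced systems can exceed the minimum of their individual max--min values whenever the worst-case users differ across stages, and the joint S\&C stage carries the distinct, asymmetric coupling $\tilde I(\varphi_{k,n|n-1},\varphi_{j,n|n-1},\sigma^2_{\omega_\varphi})$ of Lemma~\ref{InterferenceThm}, whose prefactor $1/\sqrt{\sin^2\varphi_{k,n|n-1}}$ breaks the row/column exchange used above. My plan is to exploit the saturated regime, in which Lemma~\ref{VarianceMonotonicity} makes the communication-only stage the accuracy-dominant one: I would show that the coarser prediction-based kernel $\tilde I$ spreads interference at least as much as the sharp measured-angle kernel $F_{M_t}$, so that its minimum row sum is no smaller and $\min_k R^{\mathrm{IL,S\&C}}_{k,n}\le\bar R_n$ as well, and then identify a single user that is simultaneously binding for both stages, which makes the convex combination stay below $\bar R_n$. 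Establishing this cross-stage domination of the two interference kernels, and hence the existence of a common worst user, is the crux of the proof; everything else follows from the elementary symmetry/averaging argument above.
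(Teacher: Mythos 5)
Your middle step---the bound on the communication-only stage---is correct and in fact tighter than the paper's own treatment. Where the paper simply asserts that the optimal solution of $\max\min_k R^{\mathrm{IL,C}}_{k,n}$ balances all SIRs, solves the balanced system, and then bounds the common SIR by $M_t/\min_k S_k$ using the interchange $\sum_k\sum_{j\ne k}p^{\mathrm{C}}_jF_{k,j}=\sum_j p^{\mathrm{C}}_j\sum_{k\ne j}F_{j,k}\ge(\min_k S_k)\sum_j p^{\mathrm{C}}_j$, your contradiction argument proves the same ceiling for \emph{every} power allocation without the unproven balancing claim, using exactly the same symmetry of the Fej\'er kernel. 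Your scaling-invariance remark also reproduces the content of the paper's auxiliary Lemma~\ref{PowerCeiling}. Up to that point you and the paper are essentially aligned.

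The genuine gap is in the transfer to the two-stage objective, and it is precisely the step you yourself flag as ``the crux'' and leave unproven. The paper closes it in one line: in the saturated regime it invokes the \emph{per-user} inequality $R^{\mathrm{IL,S\&C}}_{k,n}\le R^{\mathrm{IL,C}}_{k,n}$ (the communication-only stage is the accuracy-dominant stage, in the spirit of Lemma~\ref{VarianceMonotonicity} and Remark~\ref{TradeoffFundamental}), so that $\eta R^{\mathrm{IL,S\&C}}_{k,n}+(1-\eta)R^{\mathrm{IL,C}}_{k,n}\le R^{\mathrm{IL,C}}_{k,n}$ holds for each $k$ separately; taking $\min_k$ and applying the single-stage ceiling finishes the proof, with no need for a common worst user. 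Your alternative plan would not close the gap: (i) the conjectured row-sum domination of $\tilde I$ over $F_{M_t}$ is false in general, since when two predicted angles nearly coincide $F_{M_t}$ sits near its peak value $M_t$ while $\tilde I$ is $O\bigl(1/\sqrt{\sigma^2_{\omega_\varphi}}\bigr)$ independent of $M_t$, so the domination can go in either direction; and (ii) even if both stages individually satisfied the ceiling, that does not bound the minimum of the convex combination---with stage rates $a=(0,10)$ and $b=(10,0)$ and $\eta=1/2$ the combined minimum is $5$ although each stage's minimum is $0$---and nothing in a row-sum comparison forces the two stages to share a binding user. So the proposal is incomplete at its admitted crux, and the route sketched for completing it would fail; what is missing is the per-user cross-stage comparison $R^{\mathrm{IL,S\&C}}_{k,n}\le R^{\mathrm{IL,C}}_{k,n}$ (or some substitute for it) on which the paper's argument rests.
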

\begin{proof}
	Please refer to Appendix G.
\end{proof}

Proposition \ref{UpperBound} illustrates that under high transmit power regions, there exists a performance ceiling regardless of $P^{\max}$. Intuitively, as the transmit power increases, the received signal power and the corresponding interference will increase by the same scale. Thus, the SIR of the communication device keeps constant when the transmit power increases to a certain saturation point. In this case, the upper bound of the communication performance is affected by the minimum interference of the communication devices. An AO method similar to the proposed algorithm in Section \ref{MultiVehicleClosedForm} can be adopted to iteratively optimize the transmit power, the reflecting power ratio, and the time splitting ratio, and the details are omitted for brevity. 

\subsubsection{Noise Limited (NL)}
\label{SectionNLCase}
When the number of antennas $M_t$ increases, the interference between different users decreases, and it can be ignored when the number of antennas $M_t$ is sufficiently large \cite{ngo2015massive}. Also, when the distance between vehicles is relatively large, the potential interference among vehicles is negligible. In this case, the achievable rate during the joint S\&C stage and the communication-only stage can be respectively approximated as
\begin{equation}\label{ClosedFormAchievableMultiVehiclesSC}
	R^{\mathrm{NL,S\&C}}_{k,n} \!\!=\! \log_2\!\left(\! 1 \!+\! C_2 P^{\mathrm{S\&C}}_k (1 \!-\! \beta^{\mathrm{R}} ) h(\varphi_{k,n}, \sigma^2_{\omega_{\varphi}}\!) h(\phi_{k,n}, \sigma^2_{\omega_{\phi}})  \!\right)
\end{equation}
and
\begin{equation}\label{ClosedFormAchievalbeMultiVehiclesC}
	R^{\mathrm{NL,C}}_{k,n} \!= \log_2\!\left(\! 1 \!+ C_2 P^{\mathrm{C}}_k  h(\varphi_{k,n}, \sigma^2_{\tilde \varphi_{k,n}}) h(\phi_{k,n}, \sigma^2_{\tilde \varphi_{k,n}})  \!\right).	
\end{equation} 
where $C_2 = \frac{ 4 {\beta_{G,k,n}} \beta_{h,k}   L M_t }{\sigma_{c}^{2}}$. If $\eta^* = 0$ and $\beta^{{\mathrm{R}}*}_k = 0$, the problem is reduced to a power control problem, which can be solved by the water-filling algorithm \cite{boyd2004convex}. Then, if the obtained transmit power ${\bm{p}}^{\mathrm{S\&C}}$ and ${\bm{p}}^{\mathrm{C}}$ via the water-filling algorithm do not make the condition in Theorem \ref{SensingCondition} satisfied, i.e., $\frac{{\sigma _{{\omega _\phi }}^2}}{{{A_{\phi_{k,n}} }}} + \frac{{\sigma _{{\omega _\varphi }}^2}}{{{A_{\varphi_{k,n}} }}} \le 2, \forall k \in {\cal{K}}$, ${\bm{p}}^{\mathrm{S\&C}}$ and ${\bm{p}}^{\mathrm{C}}$ are the optimal solution under noise-limited cases. Otherwise, (P4) can be rewritten as 
\begin{alignat}{2}
	\label{P4.3}
	(\rm{P4.3}):  & \begin{array}{*{20}{c}}
		\mathop {\max }\limits_{{\bm{p}}^{\mathrm{S\&C}}, {\bm{p}}^{\mathrm{C}}, {\bm{\beta^{\mathrm{R}}}},\eta} \quad   R
	\end{array} & \\ 
	\mbox{s.t.}\quad
	& (\ref{P1}c)-(\ref{P1}f), (\ref{P4}b) \nonumber \\
	& \eta R  ^{\mathrm{NL, S\&C}}_{k,n} +  \left( {1 - \eta } \right) R^{\mathrm{NL,C}}_{k,n}  \ge R , \forall k. & \tag{\ref{P4.3}a} 
\end{alignat}
With any given $\eta$ and ${\bm \beta}^{\mathrm{R}}$, the optimization of transmit power ${\bm p}^{\mathrm{S\&C}}$ and ${\bm p}^{\mathrm{C}}$ is a convex problem, and it can be solved by the interior-point method \cite{boyd2004convex}. Then, the AO algorithm proposed in Section IV-A can be extended to solve (P4.3) by iteratively optimizing the transmit power, the time splitting ratio, and the reflecting power in a similar way.

\section{Simulation Results}
\label{simulationS}

In this section, simulation results are provided for characterizing the performance of the proposed sensing scheme and for gaining insight into the design and implementation of IOS-assisted ISAC systems. The main system parameters are listed in Table \ref{tab1}. The coordinate of the RSU is (0 m, 0 m, 20 m), the vehicle drives from (-100 m, 20 m, 0 m) to (100 m, 20 m, 0 m), and the vehicle's speed is 20 m/s. We compare the proposed schemes with three benchmarks: 
\begin{itemize}
	\item {\bf{Refraction optimization}}: The reflecting phase shifts ${\bm{\Theta}}^{\xi, {\mathrm{R}}}_{n}$ are random, with the optimal $\eta$, $\beta^{\mathrm{R}}_k$, and ${\bm{\Theta}}^{\xi, {\mathrm{T}}}_{k,n}$.
	\item {\bf{Prediction}}: The beamforming and phase shifts are designed only based on the state estimation, i.e., $\eta$ = 0. 
	\item {\bf{Beam Training}}: The RSU transmits multiple training beams sequentially, and then the vehicle sends the index of the beam with the maximum received power to the RSU.  
\end{itemize}
\begin{table}[h]
	\footnotesize
	\centering
	\caption{System Parameters}
	\label{tab1}
	\begin{IEEEeqnarraybox}[\IEEEeqnarraystrutmode\IEEEeqnarraystrutsizeadd{2pt}{1pt}]{v/c/v/c/v/c/v/c/v/c/v/c/v}
		\IEEEeqnarrayrulerow\\
		&\mbox{Parameter}&&\mbox{Value}&&\mbox{Parameter}&&\mbox{Value}&&\mbox{Parameter}&&\mbox{Value}&\\
		\IEEEeqnarraydblrulerow\\
		\IEEEeqnarrayseprow[3pt]\\
		&K&&5&&N&&200&&M_t&&8&\IEEEeqnarraystrutsize{0pt}{0pt}\\
		\IEEEeqnarrayseprow[3pt]\\
		\IEEEeqnarrayrulerow\\
		\IEEEeqnarrayseprow[3pt]\\  
		&M_r&& 8 &&L_x&&80&&L_y&&80&\IEEEeqnarraystrutsize{0pt}{0pt}\\
		\IEEEeqnarrayseprow[3pt]\\
		\IEEEeqnarrayrulerow\\
		\IEEEeqnarrayseprow[3pt]\\  
		&\sigma_R^2&& 10^{-1} &&P^{\max}&&0.1 \; \mbox{W}&&\beta_0&& -30 \; \mbox{dB} &\IEEEeqnarraystrutsize{0pt}{0pt}\\
		\IEEEeqnarrayseprow[3pt]\\
		\IEEEeqnarrayrulerow\\
		\IEEEeqnarrayseprow[3pt]\\
		&\sigma_s^2&&-70 \; \mbox{dBm}&&\sigma^2_{\omega_\phi} \ \sigma^2_{\omega_\varphi}&&0.1 &&\Delta t&&0.1 \;  \mu\mbox{s} &\IEEEeqnarraystrutsize{0pt}{0pt}\\
		\IEEEeqnarrayseprow[3pt]\\
		\IEEEeqnarrayrulerow\\
		\IEEEeqnarrayseprow[3pt]\\
		&\sigma_c^2&& -70 \; \mbox{dBm} &&f_c && 30 \;  \mbox{GHz} && \Delta T && 0.02 \;  \mbox{s} &\IEEEeqnarraystrutsize{0pt}{0pt}\\
		\IEEEeqnarrayseprow[3pt]\\
		\IEEEeqnarrayrulerow
	\end{IEEEeqnarraybox}  
\end{table} 
\begin{figure*}[t]
	\centering
	\setlength{\abovecaptionskip}{0.cm}
	\subfigure[SNR of echo signals in Proposition \ref{InftyBand}.]
	{	
		\label{figure4a}
		\includegraphics[width=5.3cm]{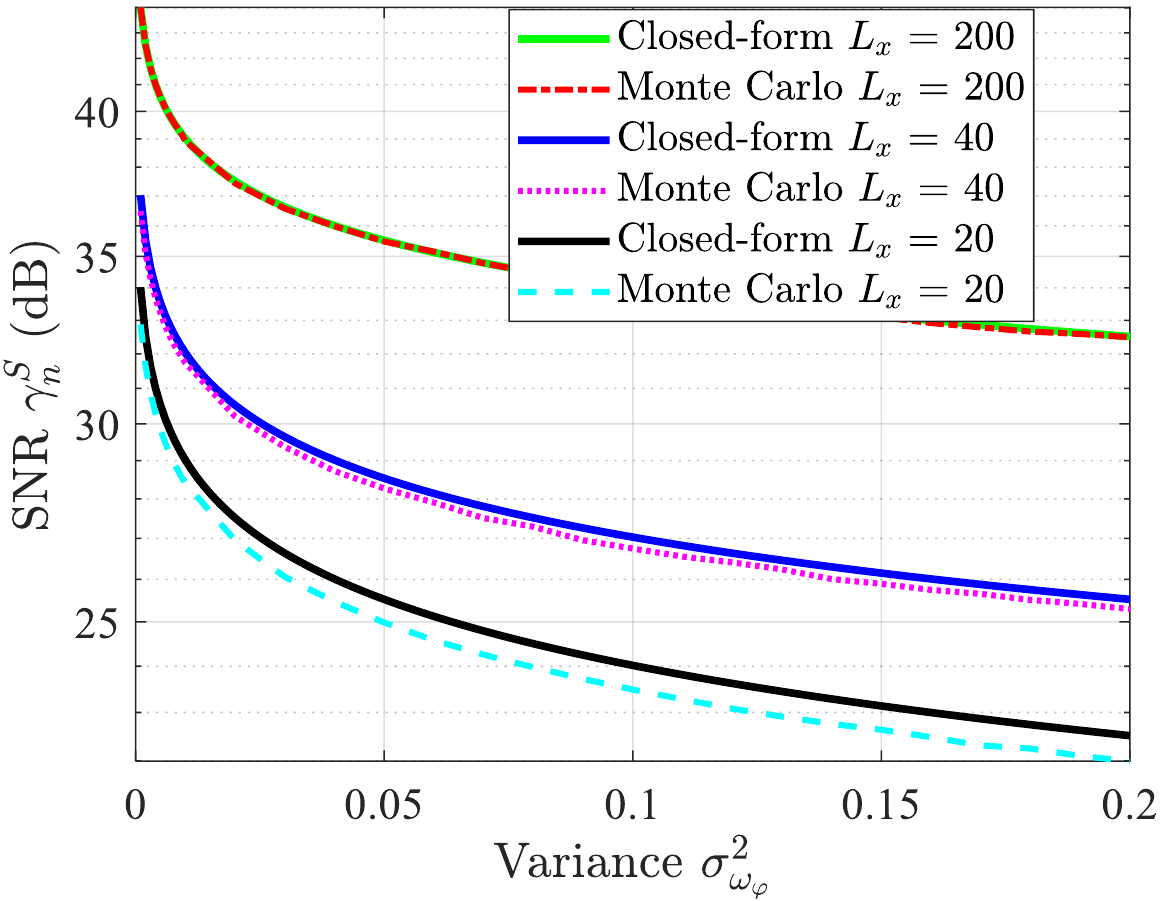}
	}
	\subfigure[Interference term in Lemma \ref{InterferenceThm}.]
	{	
		\label{figure4b}
		\includegraphics[width=5.1cm]{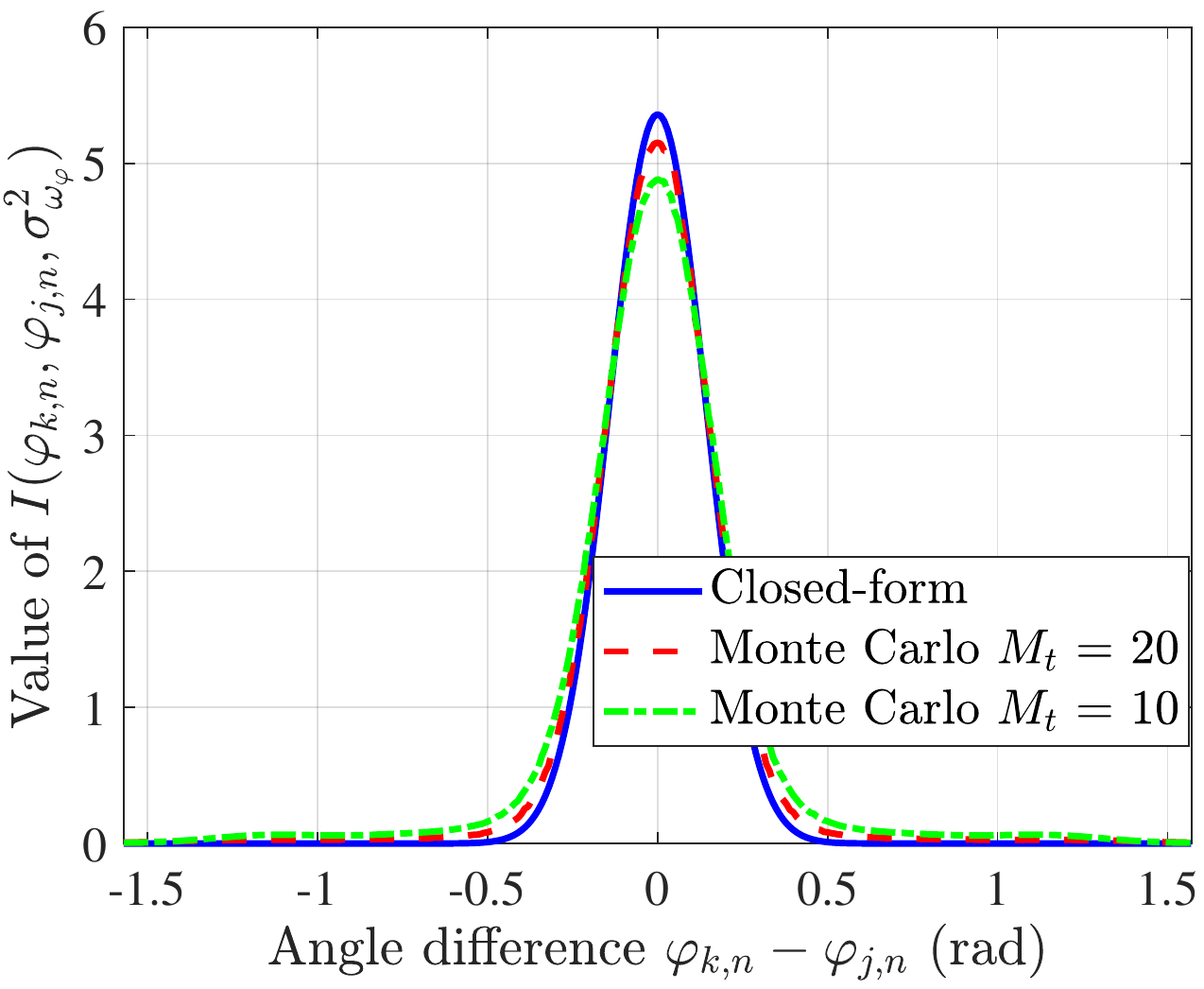}
	}
	\subfigure[Achievable rate in Proposition \ref{ClosedFormAchievable}.]
	{	
		\label{figure4c}
		\includegraphics[width=5.3cm]{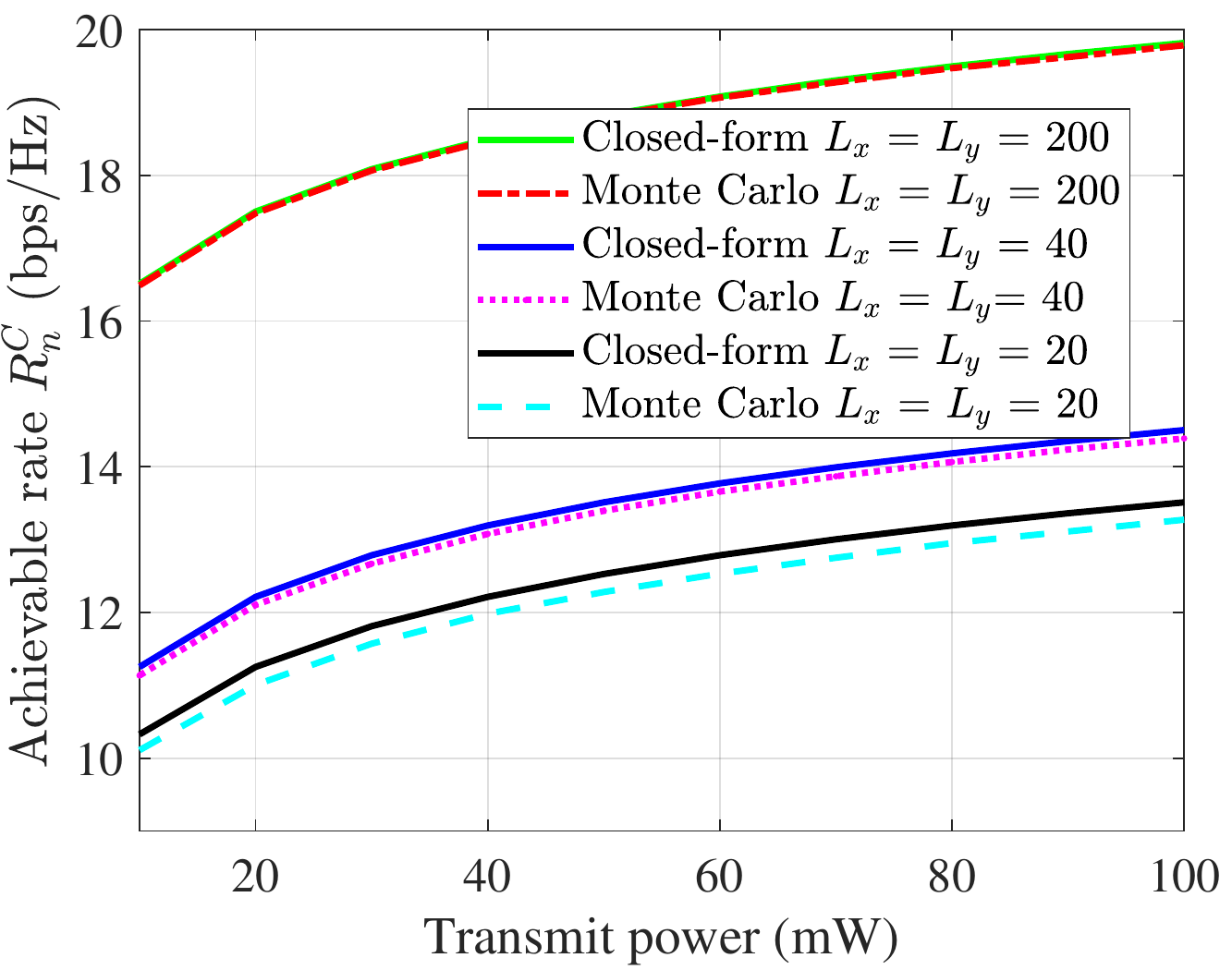}
	}		
	\caption{Evaluation of the closed-form expressions in this work.}
	\label{figure4}
\end{figure*}

\subsection{Evaluation of Closed-form Expression}
Fig.~\ref{figure4} illustrates the effectiveness of the derived closed-form expression in this work, with $L_y = 1$. Specifically, in Fig.~\ref{figure4a}, the SNR $\gamma^{{\mathrm{S}},x}_n$ of Monte Carlo simulation with $L_x = 40$ approaches to the closed-form expression of the SNR derived in Proposition \ref{InftyBand}, which confirms the goodness of this approximation. In Fig.~\ref{figure4b}, the approximate interference under different numbers of transmit antennas is evaluated with $\sigma^2_{\omega_{\varphi}} = 0.02$. It can be seen that the interference decreases as the angle difference increases and the interference is negligible when the angle difference is larger than $0.25$ rad in this case. Fig.~\ref{figure4c}, with $\sigma^2_{\omega_{\varphi}} = 0.02$, the closed-form expression of the approximate achievable rate in Proposition \ref{ClosedFormAchievable} is evaluated under different transmit power, and the Monte Carlo simulation results also confirm the validity of the approximation since the achievable rate of Monte Carlo simulations with $L_x = 40$ approaches to the closed-form expression of the achievable rate derived in Proposition \ref{ClosedFormAchievable}.

\begin{figure}[t]
	\centering
	\includegraphics[width=7cm]{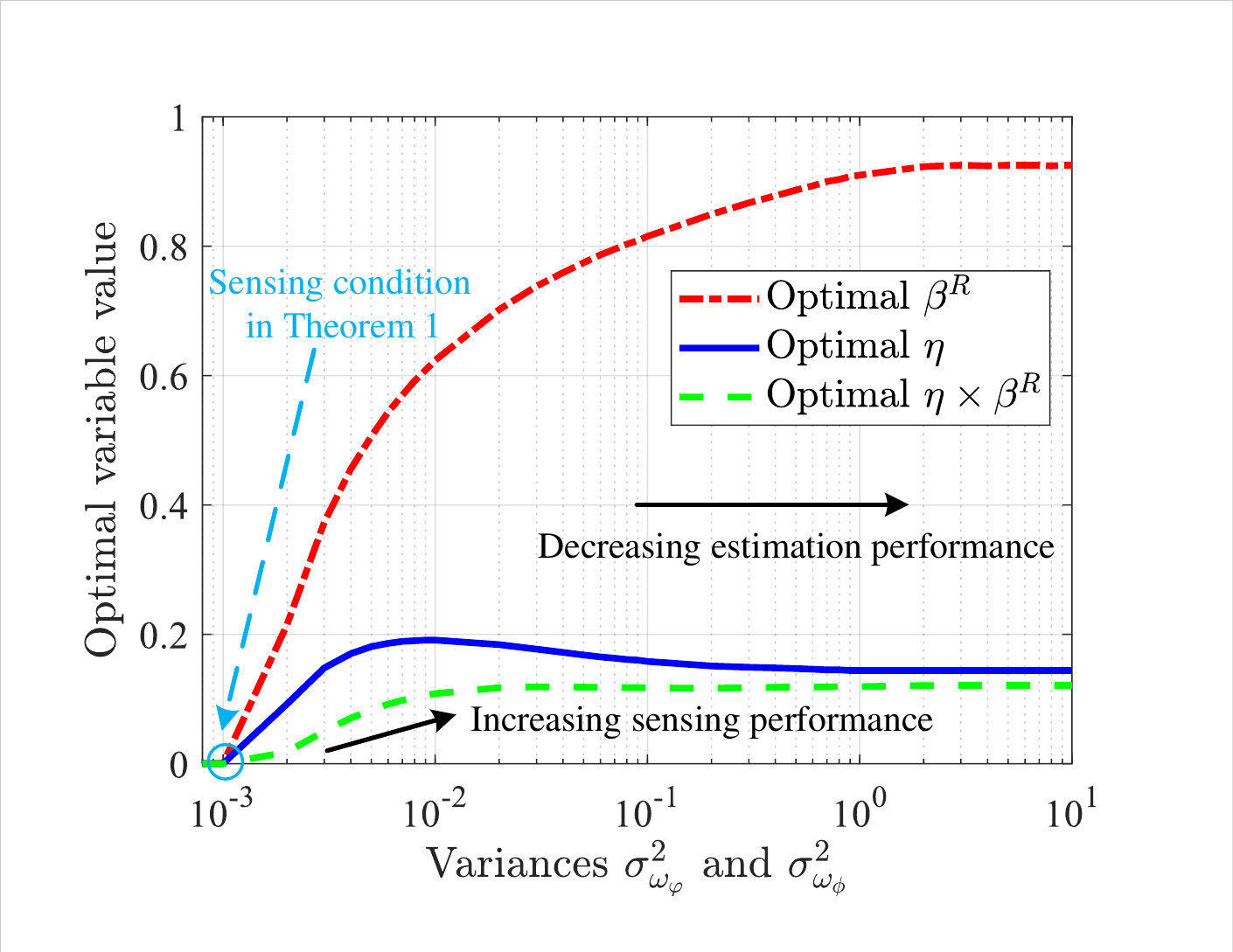}
	\vspace{-2mm}
	\caption{Optimal variable values for different estimation variances.}
	\label{figure6aa}
\end{figure}
\begin{figure}[t]
	\centering
	\includegraphics[width=7cm]{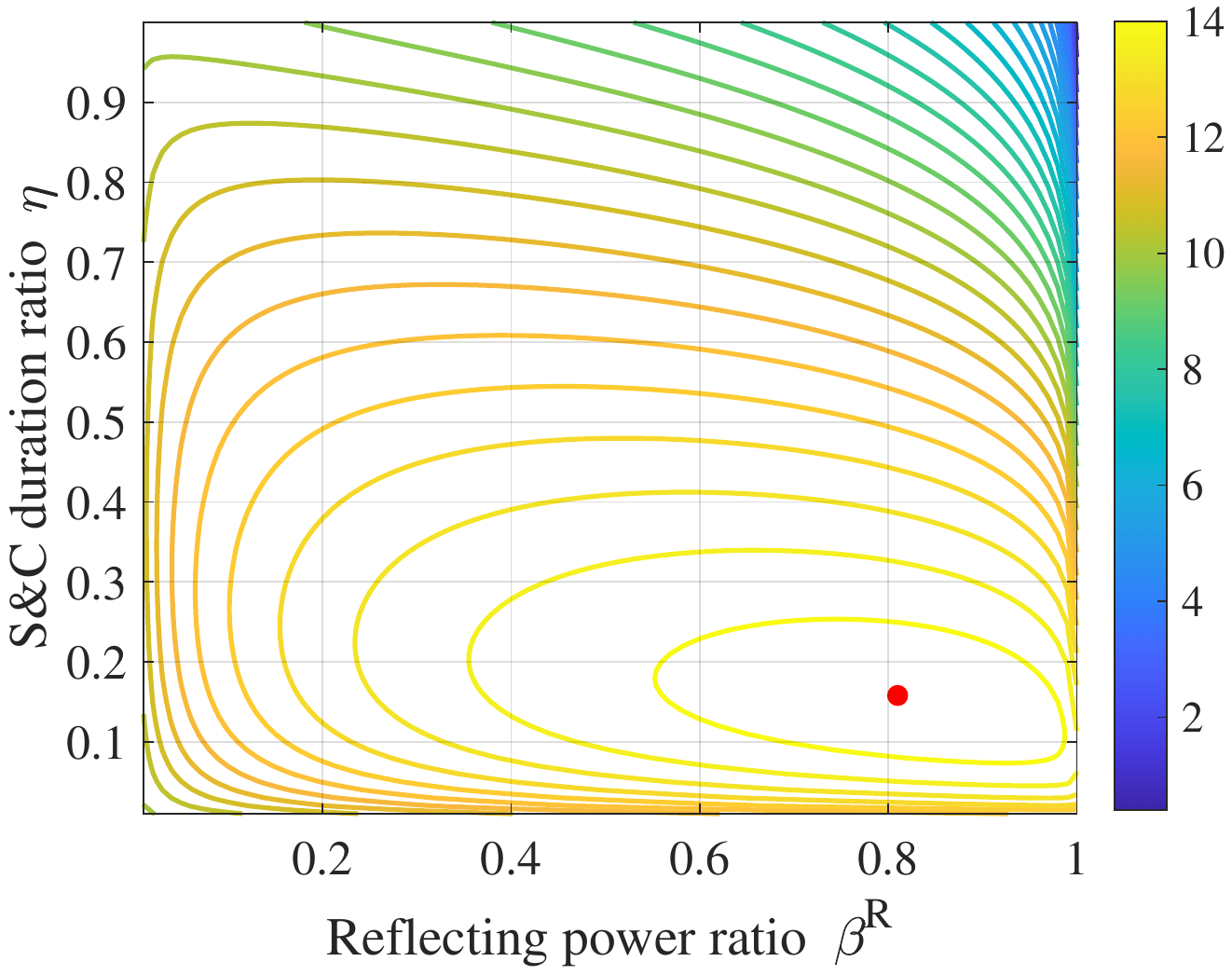}
	\caption{Impact of $\beta^{\mathrm{R}}$ and $\eta$ on the achievable rate.}
	\label{figure6b}
\end{figure}

To show the impact of discrete phase shifts, we quantize the obtained continuous IOS phase shifts to the nearest values, as adopted in reference \cite{Wu2020Beamforming}. Here, let $b$ denote the number of bits, which represents the resolution levels of IOS. Then, the $l$-th discrete phase shift, denoted by $\hat{\nu}_l$, can be derived from $\hat{\nu}_l=\underset{\nu \in \mathcal{B}}{\arg \min }\left|\nu -\theta_l\right|
$
where $\mathcal{B}=\left\{0,2 \pi / 2^b, \ldots, 2 \pi\left(2^b-1\right) / 2^b\right\}$, and $\theta_l$ represents the $l$-th IOS phase shift obtained by solving the resource allocation problem for the proposed sensing-assisted communication scheme. Fig. \ref{figure5} shows the power of echo signals versus estimation variance with $L_y = 1$, where both $b=1$ and $b=3$ for the discrete phase shift scheme are considered at the IOS. It can be observed that a 3-bit phase shifter achieves much better performance than a 1-bit phase shifter. Specifically, with $L_x = 200$, there is a power loss of around $0.3 / 6 \mathrm{~dB}$ from the $3 / 1$-bit phase shifter. Also, the performance loss with $L_x = 200$ is larger than that with $L_x = 20$, which is consistent with the analytical result in \cite{Wu2020Beamforming}. Thus, the performance loss is negligible for the quantization scheme with $b = 3$, and the continuous phase shift designs could provide a simpler way to explore the performance upper bounds of the sensing-assisted communication scheme.

\begin{figure}[t]
	\centering
	\includegraphics[width=7cm]{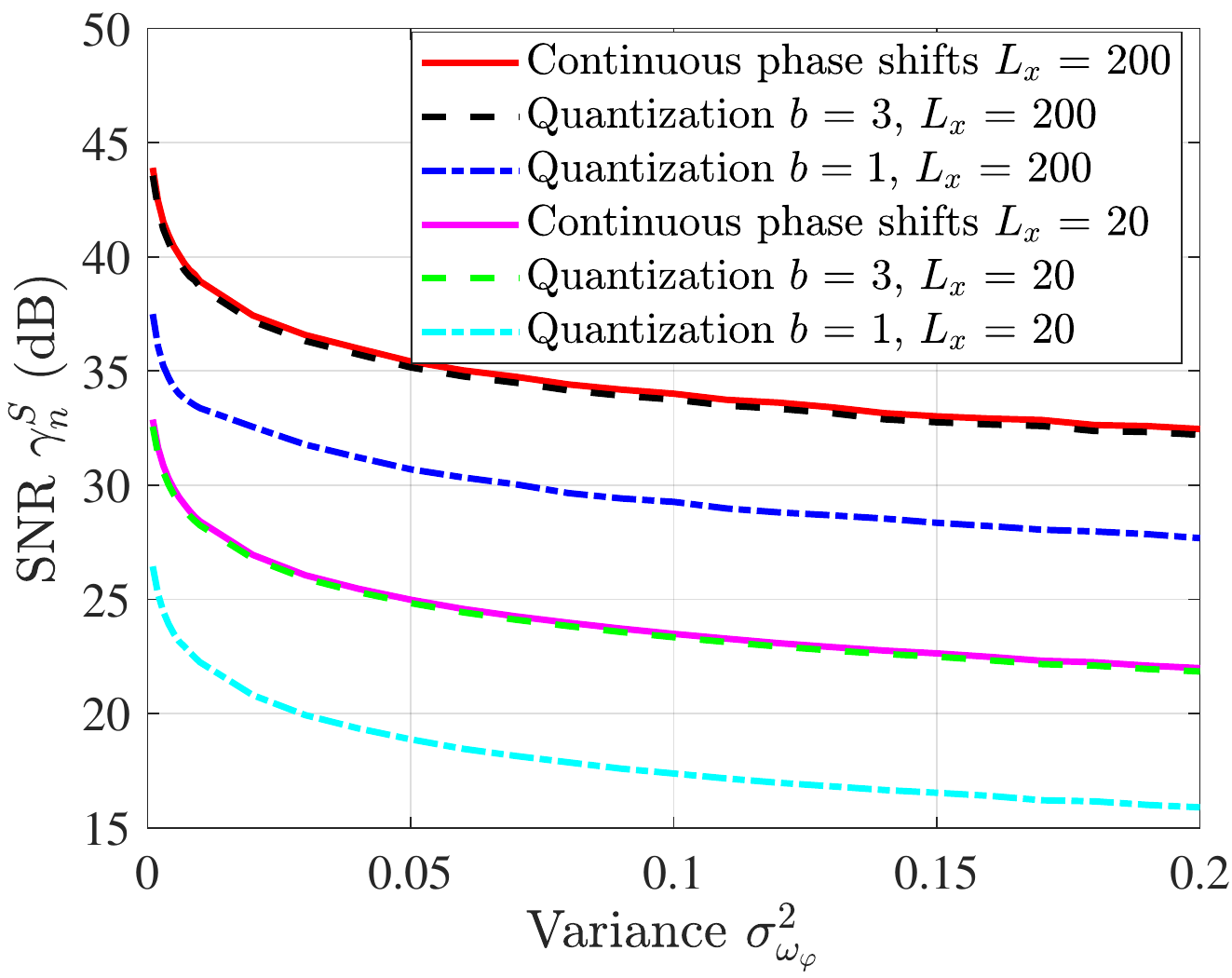}
	\vspace{-2mm}
	\caption{Echo power comparison between continues and quantization scheme.}
	\label{figure5}
\end{figure}

\subsection{Power and Time Allocation}
In Fig.~\ref{figure6aa}, when the variances of estimated angles $\sigma^2_{\omega_\varphi}$ and $\sigma^2_{\omega_\phi}$ are larger than $10^{-3}$, the necessary condition of the existence of the joint S\&C stage is satisfied (i.e., $\eta^* > 0$), which confirms the analysis in Theorem \ref{SensingCondition}. As $\sigma^2_{\omega_\varphi}$ and $\sigma^2_{\omega_\phi}$ increase, the optimal reflecting power $\beta^{{\mathrm{R}}*}$ and the optimal product value $\eta \beta^{\mathrm{R}}$ monotonically increase. Intuitively, at lower estimation accuracy, the sensing results can more efficiently improve the achievable rate during the communication-only stage, and thus the average communication performance can be enhanced by properly allocating the time and power resources. 

Moreover, Fig.~\ref{figure6b} shows the rate of the proposed scheme under different reflecting power $\beta^{\mathrm{R}}$ and time splitting ratio  $\eta$. Specifically, the red dot in Fig.~\ref{figure6b} represents the optimal solution of (P2), i.e., $\eta^* = 0.16$ and $\beta^{{\mathrm{R}}*} = 0.81$. It can be seen that with any given $\beta^{\mathrm{R}}$, the rate tends to first increase and then decreases with the increase of the S\&C stage ratio $\eta$. The main reason is that an appropriate increase in the sensing duration can reduce the variance of parameter measurements and improve the beamforming gain, however, the communication performance gain brought by sensing cannot compensate for the performance loss caused by excessive sensing time and power consumption during the joint S\&C stage, especially for ratio $\eta$ approaching one.

\subsection{Achievable Rate Comparison For Single-Vehicle Cases}

In Fig.~\ref{figure8}, it can be seen that the achievable rate of the prediction and refraction schemes fluctuates greatly since the state estimation error of the vehicle location is relatively large, resulting in a
significant performance loss due to the misalignment of the transmit beam of the RSU and the refracted beam of the IOS. The main reason is that when the number of IOS elements or RSU antennas is large, a slight angular deviation would bring a relatively large beamforming performance loss at the RSU or the IOS. The beam training scheme provides a relatively stable rate, but its communication performance is about 20\% lower than that of the proposed scheme due to the high overhead of the beam pattern training as well as the quantization error of the beam direction. The proposed method provides higher throughput and stable data transmission for the communication device. The main reason is that both the state estimation and measurement results are exploited to improve the communication performance, and the optimal ratios of sensing time and reflecting power are taken to achieve a better balance between S\&C performance, thereby leading to an effective communication improvement benefited from sensing. Compared to the perfect baseline with accurate location information, the performance loss of the proposed scheme mainly originates from the estimation and measurement noises and the extra power consumption used to improve the sensing performance.
\begin{figure}[t]
	\centering
	\includegraphics[width=7cm]{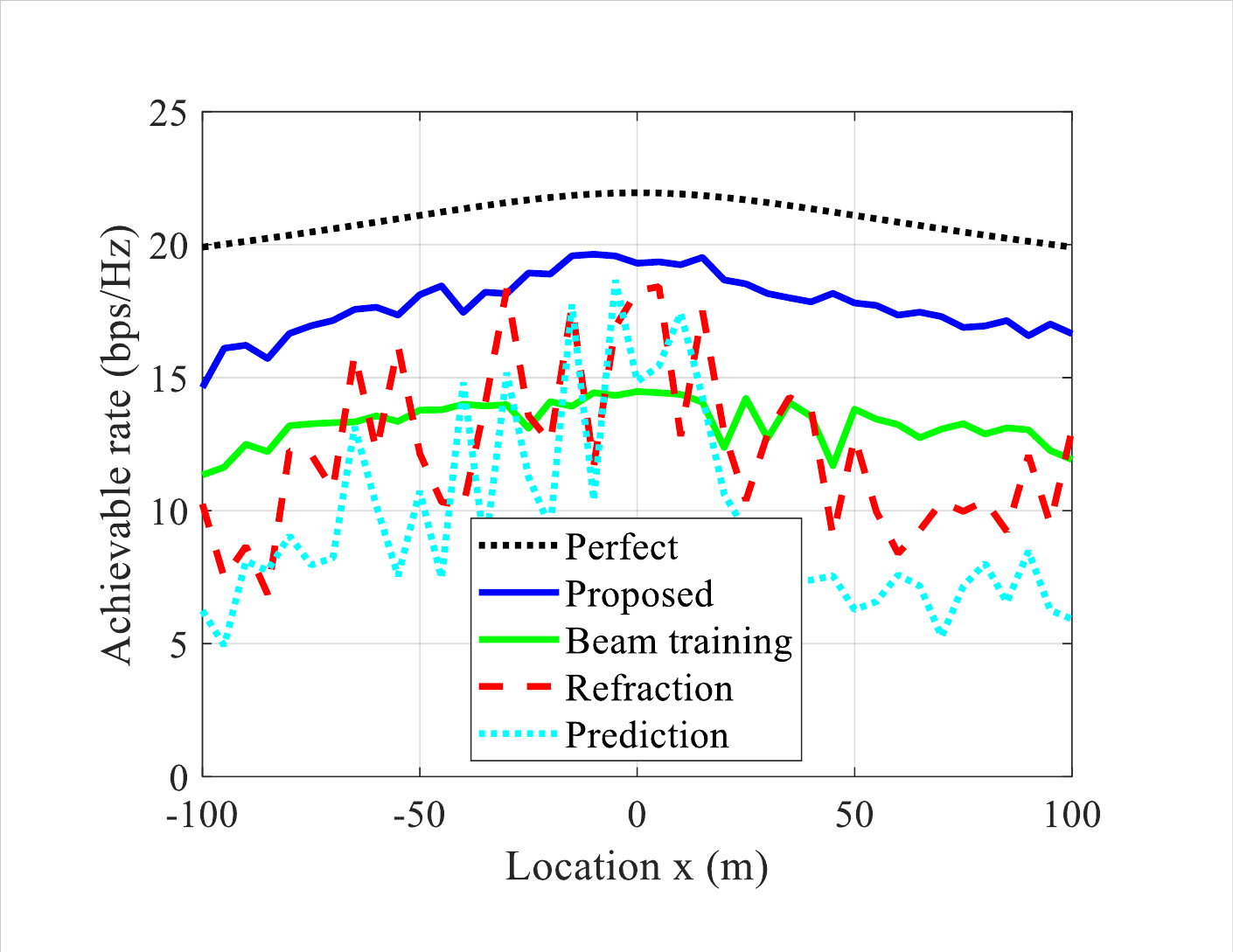}
	\vspace{-2mm}
	\caption{Comparison of the instant achievable rate versus the vehicle location.}
	\label{figure8}
\end{figure}
\begin{figure}[t]
	\centering
	\includegraphics[width=7cm]{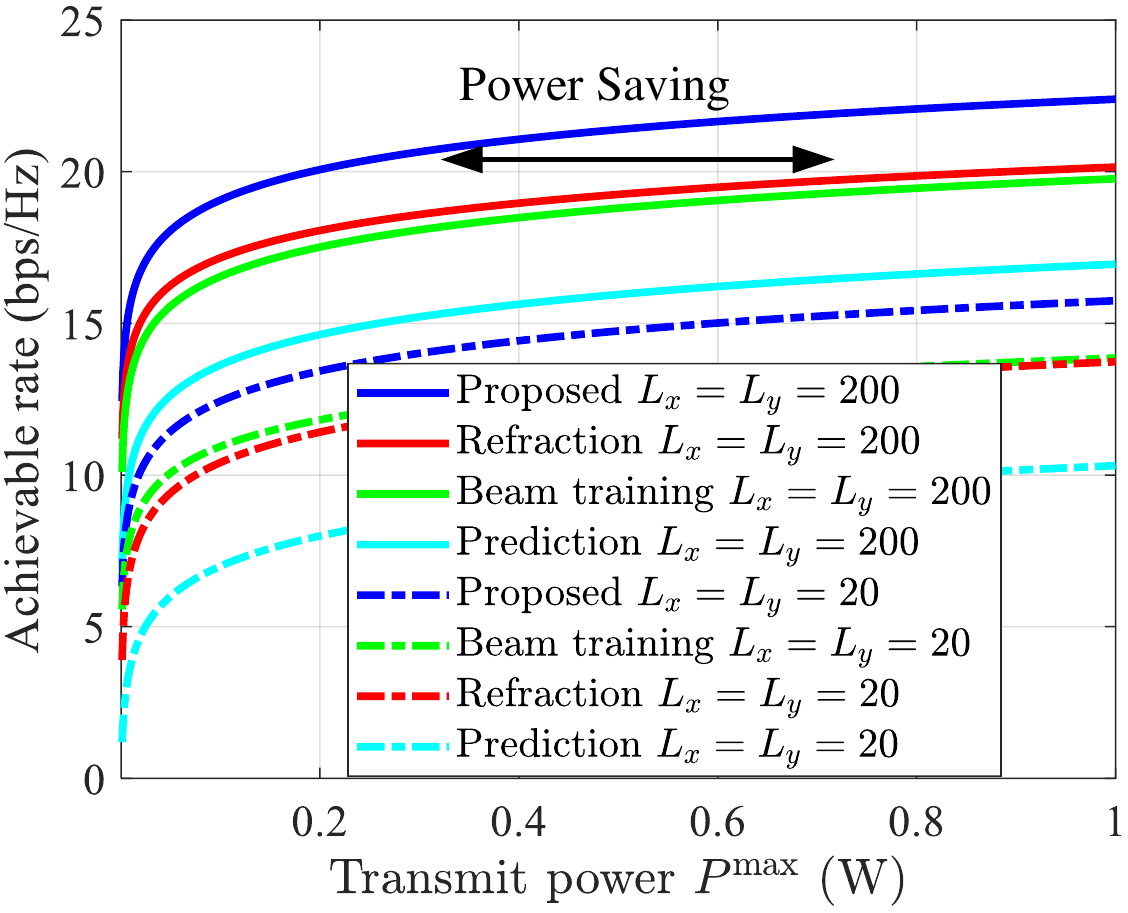}
	\vspace{-2mm}
	\caption{Average achievable rate of the whole period under different transmit power for single vehicle cases.}
	\label{figure9a}
\end{figure}

In Fig.~\ref{figure9a}, the average rate during the whole period is compared under different transmit power and different numbers of IOS elements for single-vehicle cases. Specifically, the rate of the proposed scheme is significantly improved compared to the benchmark schemes, mainly because the optimal power- and time-splitting ratios can effectively enlarge the sensing-assisted communication performance improvement. Moreover, to achieve the same rate, the proposed scheme can significantly reduce the transmit power requirements. For instance, to achieve a 20 bps/Hz rate, the transmit power of the proposed scheme can be reduced by at least four times compared with the benchmark schemes. The main reason is that proper power and time allocation for sensing can effectively improve the ability of beam alignment even in lower transmit power regions, thereby providing a larger beamforming gain during the communication-only stage. Therefore, deploying IOS on vehicles can effectively reduce the requirements of transmit power, and also improve both S\&C coverages of the RSU. 

To evaluate the impact of vehicle speed, the average rate during the whole period is compared under different vehicle speed. To ensure that the channel state can be assumed  approximately constant within a time slot, we set the length of the time slots according to the vehicle speeds, i.e., $\Delta T \propto  \frac{1}{v_{speed}}$. It can be seen from Fig.~\ref{figure9c} that the proposed scheme, refraction optimization scheme, and prediction scheme achieve worse communication performance with an increased velocity. The main reason is that within a shorter time slot, the remaining communication for pure communication will be reduced under the same sensing slot length. On the other hand, shorter time slot lengths lead to more limited sensing resources, which thus further reduces the average rate of the whole trajectory. 

\begin{figure}[t]
	\centering
	\includegraphics[width=7cm]{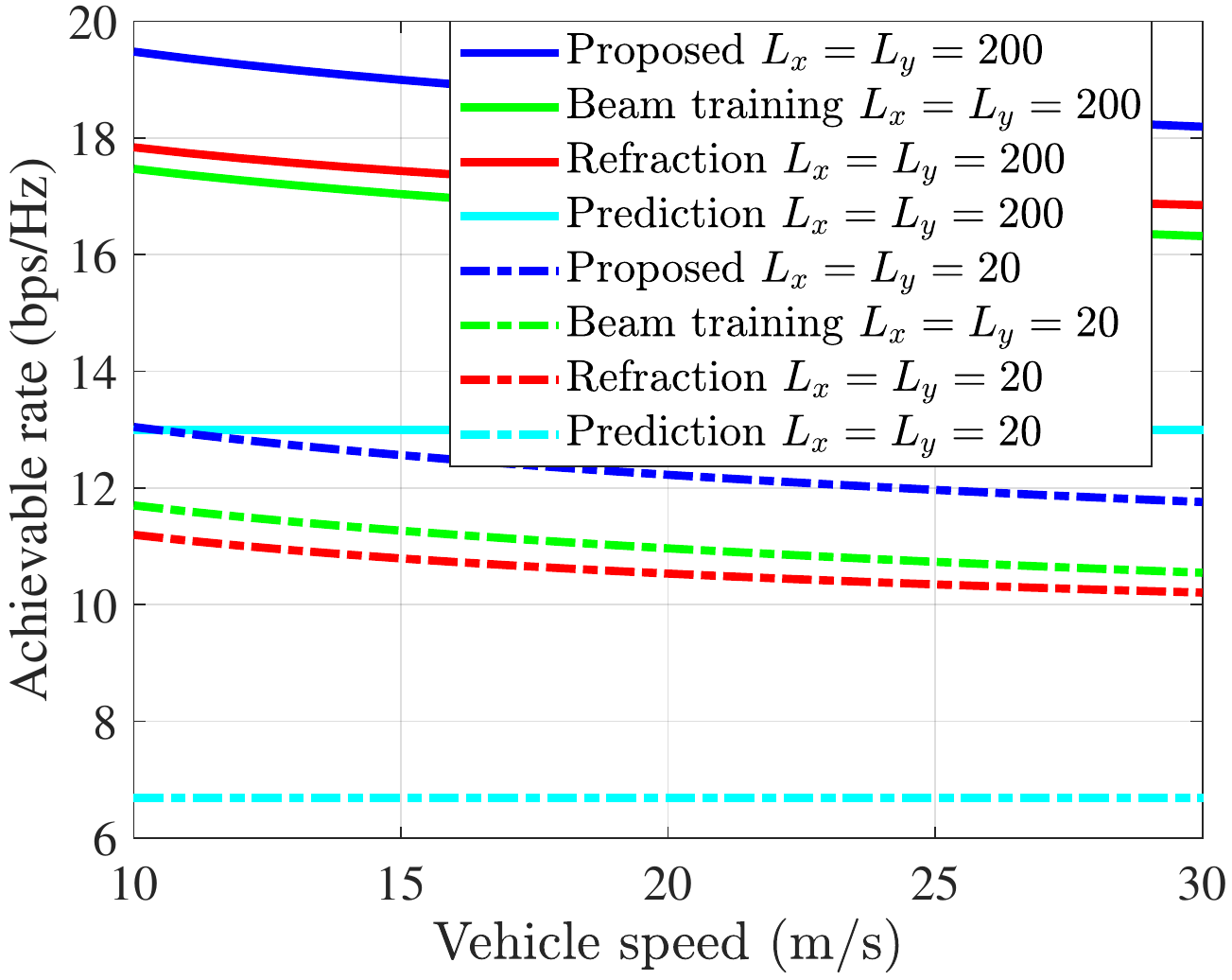}
	\vspace{-2mm}
	\caption{Communication performance comparison versus vehicle speed.}
	\label{figure9c}
\end{figure}

\begin{figure}[t]
	\centering
	\includegraphics[width=7cm]{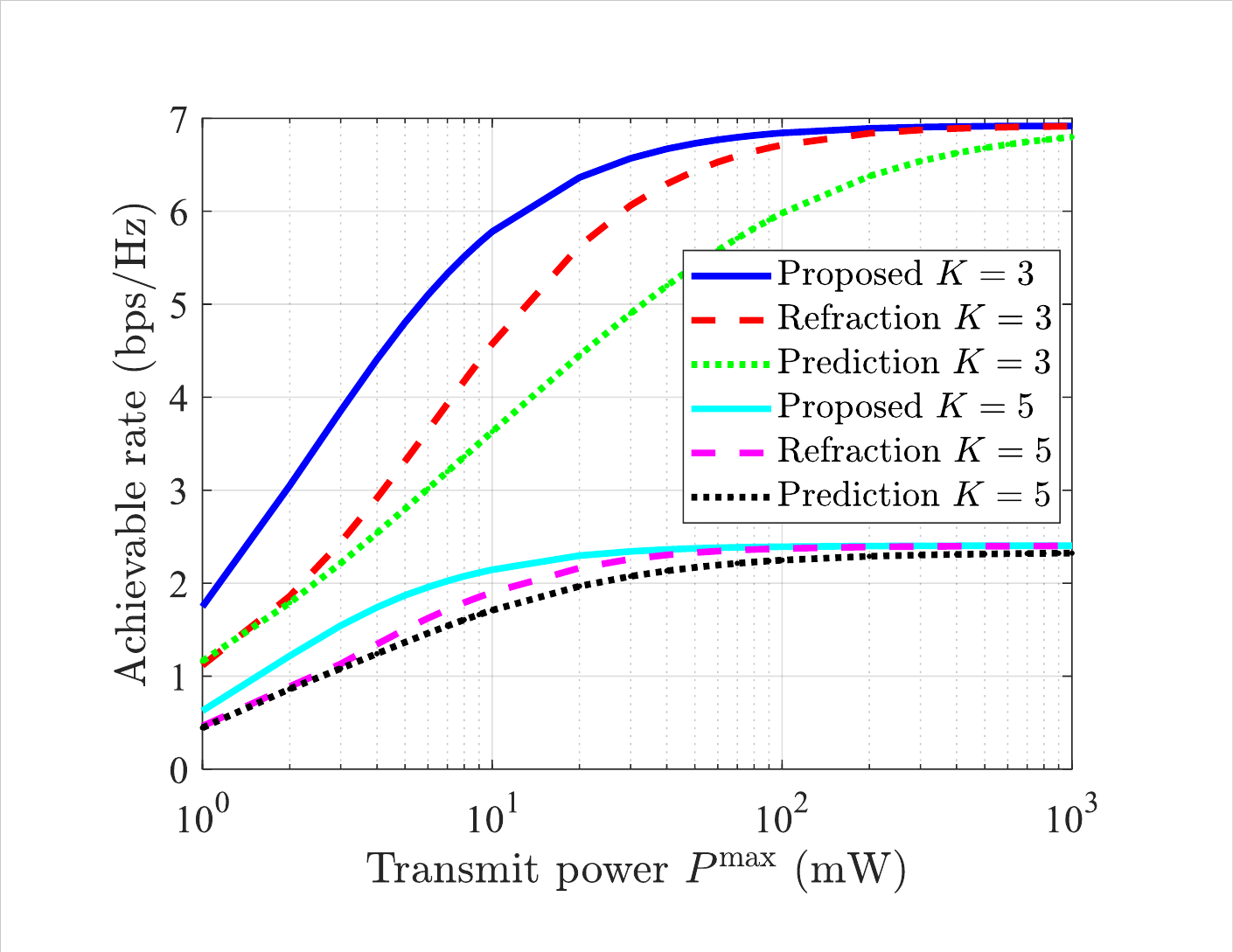}
	\vspace{-2mm}
	\caption{Average rate under different transmit power for multi-vehicle cases.}
	\label{figure9b}
\end{figure}
\begin{figure}[t]
	\centering
	\includegraphics[width=7cm]{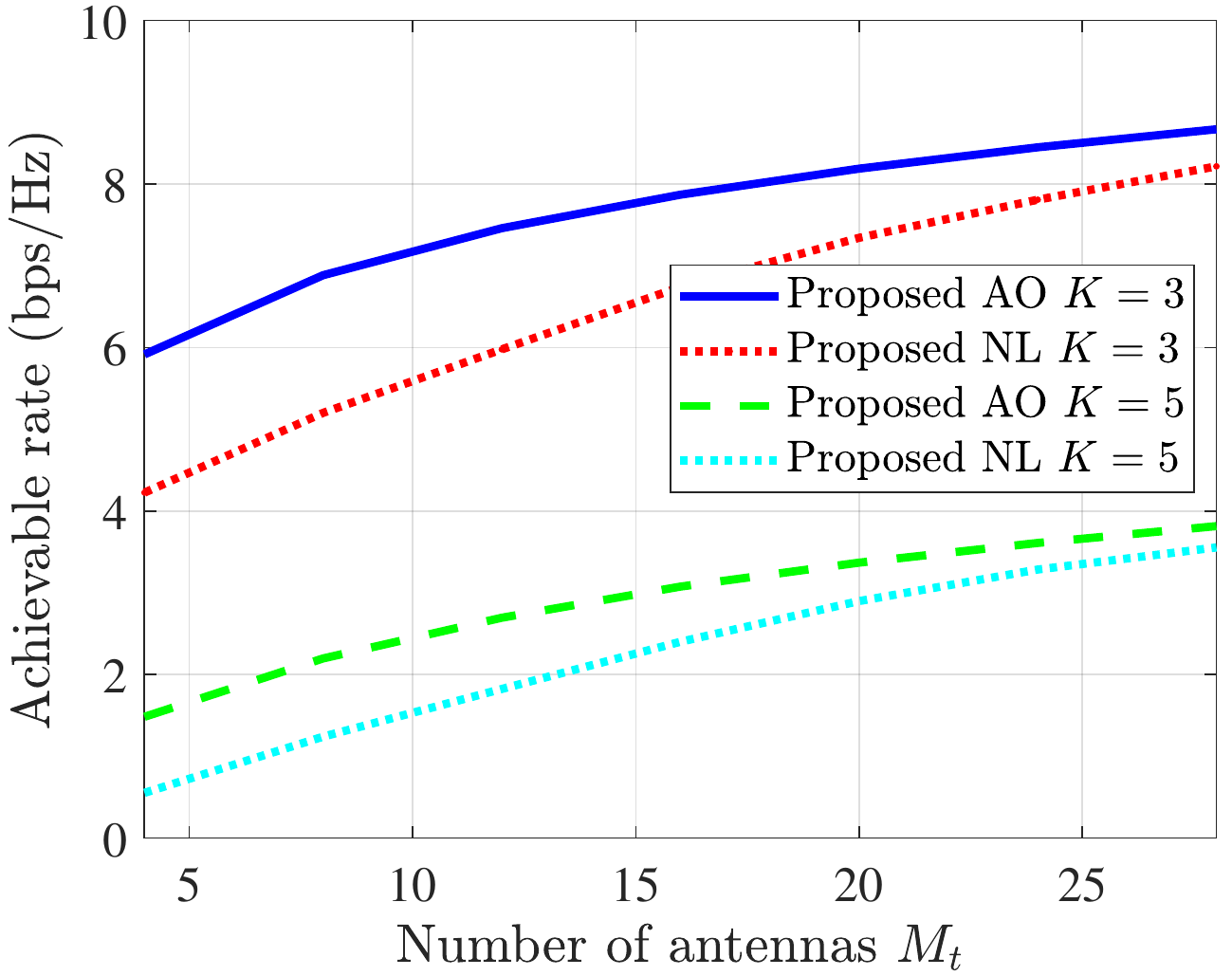}
	\vspace{-2mm}
	\caption{Comparison of the average rate under different the number of antennas.}
	\label{figure10}
\end{figure}

\subsection{Achievable Rate Comparison For Multi-Vehicle Cases}
In Fig.~\ref{figure9b}, the rate of the proposed scheme is compared with the benchmark schemes for multi-vehicle cases. The rate with $K$ = 5 is greatly reduced as compared to that with $K = 3$ due to more severe interference between the vehicles. It is worth noting that as the transmit power $P^{\max}$ increases to a saturation point, the rate of (P4) will not increase, which confirms the analysis in Proposition \ref{UpperBound}. For instance, with $K = 3$, the rate reaches the performance ceiling when the transmit power $P^{\max}$ is greater than 0.02 W, which is referred to as the transmit power saturation value. Moreover, it can be seen that the transmit power saturation value with $K = 3$ is larger than that with $K = 5$. The main reason is that the potential interference tends to be larger under more vehicles, and the corresponding noise can be ignored under smaller transmit power regions. 

Fig.~\ref{figure10} shows the performance comparison between the proposed AO algorithm in Section \ref{MultiVehicleClosedForm} (refer to "Proposed AO") and the proposed algorithm for NL cases in Section \ref{SectionNLCase} (refer to "Proposed NL"). Specifically, as the number of antennas increases, the performance gap between these two algorithms decreases. Intuitively, the beamwidth tends to be narrower and the potential interference between vehicles would be reduced as the number of antennas increases. On the other hand, the performance ceiling will be mainly affected by noise and the impact of interference would be relatively negligible when the number of antennas is sufficiently large.

\vspace{-2mm}
\section{Conclusions and Future Works}
\label{Conclusion}
In this paper, we investigated a new type of IOS-enabled sensing-assisted communication system and proposed a two-stage ISAC protocol to fulfil a more efficient balance and synergy between S\&C. The transmit beamforming, the IOS phase shift, and the duration of the joint S\&C stage were jointly optimized to maximize the minimum rate among the vehicles. A closed-form expression of the rate under uncertain locations was derived to significantly facilitate resource allocation of the considered systems. Furthermore, a sufficient and necessary condition for the existence of the joint S\&C stage is presented to further simplify the problem. Moreover, we propose a AO algorithm by exploiting the characteristics of performance limitation. The numerical results validated the efficiency of our design over the benchmark schemes and also confirmed the benefits of the sensing-assisted communication framework. The more general cases considering the effects caused by imperfectly compensated Doppler arbitrarily are worthwhile future works.

\vspace{-2mm}
\normalsize 
\section*{Appendix A: \textsc{Proof of Lemma \ref{OptimalPhaseShift}}}
It can be readily proved that the phase shift optimization problem for maximizing the SNR of the echo signals in (\ref{SensingReceivedPower}) is equivalent to the problem of the IOS beamforming gain maximization at each time slot since $\gamma^{{\mathrm{S}},x}_{k,n} = \frac{\eta \Delta T}{\Delta t \sigma_s^2} {\beta^2_{G,k,n}} | ({\bm{v}}^x_{k,n})^{H}	{\bm{b}}_{\mathrm{RSU}}(\varphi_{k,n}) |^2  |\bm{a}^{T}_{\mathrm{IOS}}(\phi_{k,n},-\varphi_{k,n})$  ${\bm{\Theta}}^{{\mathrm{S\&C,R}}}_{k,n} \bm{a}_{\mathrm{IOS}}(\phi_{k,n},-\varphi_{k,n} )|^2  |{{\bm{a}}_{\mathrm{RSU}}^T(\varphi_{k,n}){\bm{f}}_{k,n}^{\mathrm{S\&C}}}|^{2} $. 
Then, the IOS beamforming gain maximization problem is expressed as
\begin{equation}
	\mathop {\max }\limits_{{\bm{\Theta}}^{\xi, {\mathrm{R}}}_{n}} \  |{\bm{a}^{T}_{\mathrm{IOS}}\left(\phi_{n},-\varphi_{n}\right) {\bm{\Theta}}^{\xi, {\mathrm{R}}}_{n} \bm{a}_{\mathrm{IOS}}\left(\phi_{n},-\varphi_{n} \right)}|^2.
\end{equation}
We have ${\bm{a}^{T}_{\mathrm{IOS}}(\phi_{n},-\varphi_{n}) {\bm{\Theta}}^{\xi, {\mathrm{R}}}_{n} \bm{a}_{\mathrm{IOS}}(\phi_{n},-\varphi_{n} )} = {\bm{\theta}}^{\xi, {\mathrm{R}}}_{n} {\bm{a}}_{\mathrm{IOS}}\left(\phi_{n},-\varphi_{n}\right) \odot {\bm{a}}_{\mathrm{IOS}}\left(\phi_{n},-\varphi_{n}\right)$ according to $\mathbf{A}^T {\bm{X}} \mathbf{B}={\bm{x}}^T(\mathbf{A} \odot \mathbf{B})$ with ${\bm{X}} = {\rm{diag}}({\bm{x}})$, where ${\bm{\theta}}^{\xi, {\mathrm{R}}}_{n} = [\theta^{\xi, {\mathrm{R}}}_{1,n},\cdots,\theta^{\xi, {\mathrm{R}}}_{L,n}]$ and $\odot$ denotes the Hadamard product.
The received power at the RSU is maximized when the echo reflected from the IOS is directed towards the RSU, i.e., $\theta^{\mathrm{R}}_{(l_x-1)L_y+l_y}= - 2 \pi(l_x-1)  \cos(\varphi_{n|n-1})+ 2 \pi(l_y-1) \cos(\phi_{n|n-1})+\theta_{0}$, where {$\theta_{0}$} is the reference phase at the origin of the coordinates. Similarly, for the phase shift optimization problem of maximizing the SNR of the communication device, the optimal phase shift can be given by  $\theta^{\mathrm{T}}_{(l_x-1)L_y+l_y}=\pi(l_y-1) (\sin ( \psi^{u,z}_{n}) \cos( \psi^{u,x}_{n})- \cos(\varphi_{n|n-1}) )+\pi(l_y-1)(\cos(\phi_{n|n-1}) + \sin ( \psi^{u,z}_{n}) \sin( \psi^{u,x}_{n}))+\theta_{0}$. Thus, the proof is completed.

\vspace{-2mm}
\normalsize 
\section*{Appendix B: \textsc{Proof of Lemma \ref{EqualPowerSplitting}}}
\vspace{-1mm}
Without loss of generality, at the optimal solution of (P2), the total reflected power is assumed to be $\sum\nolimits_{l = 1}^L { {\beta _{l,n}^{\xi, {\mathrm{R}}}} } = X_R$. According to Cauchy-Buniakowsky-Schwartz inequality, the total received power at the RSU satisfies the following conditions:
\vspace{-2mm}
\begin{equation}\label{IOSemelemtsEquality}
	\begin{aligned}
		&{\left( {\sum\nolimits_{l = 1}^L {\sqrt {\beta _{l,n}^{\xi, {\mathrm{R}}}} } } \right)^2} \\
		=& \sum\nolimits_{l = 1}^L {\beta _{l,n}^{\xi, {\mathrm{R}}}}  + 2\sum\nolimits_{l = 1}^L {\sum\nolimits_{l' \ne l}^L {\sqrt {\beta _{l,n}^{\xi, {\mathrm{R}}}\beta _{l',n}^{\xi, {\mathrm{R}}}} } } \\
		\le &  \sum\nolimits_{l = 1}^L {\beta _{l,n}^{\xi, {\mathrm{R}}}}  + \sum\nolimits_{l = 1}^L \sum\nolimits_{l' \ne l}^L \left( {\beta _{l,n}^{\xi, {\mathrm{R}}} + \beta _{l',n}^{\xi, {\mathrm{R}}}} \right) \\
		=& L\sum\nolimits_{l = 1}^L {\beta _{l,n}^{\xi, {\mathrm{R}}}}   = L X_R,
		\vspace{-2mm}
	\end{aligned}
\end{equation}
where the equality in (\ref{IOSemelemtsEquality}) holds if and only if $\beta_l^{\xi, {\mathrm{R}}} = \beta_{l'}^{\xi, {\mathrm{R}}}$, $\forall l, l' \in {\cal{L}}$. Thus, with any given total reflected power, the SNR of the echo signals is maximized when $\beta_l^{\xi, {\mathrm{R}}} = \beta_{l'}^{\xi, {\mathrm{R}}}$. Similarly, for power ratio of the refracted signals through the IOS, $\beta_l^{\xi, {\mathrm{T}}} = \beta_{l'}^{\xi, {\mathrm{T}}}$, $\forall l, l' \in {\cal{L}}$.

\vspace{-2mm}
\normalsize 
\section*{Appendix C: \textsc{Proof of Proposition \ref{InftyBand}}}
Before proving Proposition \ref{InftyBand}, we first introduce a Lemma to facilitate the derivation of the closed-form SNR.
\begin{thm}\label{FejerKernel}
	Let $\sigma_{M}(x)=\frac{1}{2 a} \int_{-a}^{a} g(x+u) \tilde F_{M}(u)(u) d u$, where $\tilde F_{M}(u) = \frac{1}{M}\left(\frac{\sin \frac{N \pi u}{2a}}{\sin \frac{\pi u}{2a}}\right)^2$. If $g(x)$ is a real valued, continuous function with period $2a$, $\sigma_M(x)$ converges uniformly to $g(x)$ when $M \to \infty$, i.e.,
	\vspace{-2mm}
	\begin{equation}
		\frac{1}{2 a} \int_{-a}^{a} g(x+u) \tilde F_{M}(u) d u-g(x) \frac{1}{2 a} \int_{-a}^{a} \tilde F_{M}(u) d u = 0.
		\vspace{-2mm}
	\end{equation}
\end{thm}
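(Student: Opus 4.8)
The plan is to recognize the claim as the classical Fej\'er approximation theorem and to prove it via the \emph{good kernel} (approximate identity) method. Since the normalization $\frac{1}{2a}\int_{-a}^{a}\tilde F_{M}(u)\,du = 1$ will be established below, the asserted identity is equivalent to showing that $\sigma_M(x) - g(x) \to 0$ uniformly in $x$ as $M\to\infty$. The whole argument rests on three structural properties of the kernel $\tilde F_M$, which I would establish first, before combining them with the uniform continuity of $g$.

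First I would verify the three kernel properties, treating $N=M$ so that $\tilde F_M$ is the standard Fej\'er kernel after rescaling. (i) \emph{Non-negativity}: $\tilde F_M(u)\ge 0$ is immediate, since it is $\frac{1}{M}$ times a square. (ii) \emph{Normalization}: substituting $t=\pi u/a$ maps $[-a,a]$ onto $[-\pi,\pi]$ and turns $\tilde F_M$ into $\frac{1}{M}\bigl(\sin(Mt/2)/\sin(t/2)\bigr)^2$, whose average over $[-\pi,\pi]$ is the well-known value $1$; hence $\frac{1}{2a}\int_{-a}^{a}\tilde F_M(u)\,du=1$. (iii) \emph{Concentration}: for any fixed $\delta\in(0,a)$ and $\delta\le |u|\le a$, monotonicity of $\sin^2(\pi u/2a)$ on $[0,a]$ gives $\sin^2(\pi u/2a)\ge \sin^2(\pi\delta/2a)>0$, so $\tilde F_M(u)\le \frac{1}{M\sin^2(\pi\delta/2a)}$, which forces $\int_{\delta\le|u|\le a}\tilde F_M(u)\,du\to 0$ as $M\to\infty$.

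With these in hand, I would use property (ii) to write $\sigma_M(x)-g(x)=\frac{1}{2a}\int_{-a}^{a}\bigl[g(x+u)-g(x)\bigr]\tilde F_M(u)\,du$, and then split the integral at $|u|=\delta$. On $|u|<\delta$ the uniform continuity of $g$ (guaranteed because $g$ is continuous and $2a$-periodic, hence uniformly continuous on $\mathbb{R}$) gives $|g(x+u)-g(x)|<\varepsilon$ once $\delta$ is small enough, and properties (i)--(ii) bound that piece by $\varepsilon$ uniformly in $x$. On $\delta\le|u|\le a$ the boundedness of $g$ (say $|g|\le B$) together with property (iii) bounds that piece by $\frac{B}{a}\int_{\delta\le|u|\le a}\tilde F_M(u)\,du$, which can be made smaller than $\varepsilon$ for all large $M$. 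Combining the two estimates yields $\sup_x|\sigma_M(x)-g(x)|\le 2\varepsilon$ for large $M$, which is precisely the uniform convergence claimed. The hard part will be the concentration estimate (iii): one must isolate the $\frac{1}{M}$ decay away from the origin while the central peak at $u=0$ grows without bound, and it is exactly this tension---mass escaping to a single point while the total mass stays fixed at one---that makes $\tilde F_M$ an approximate identity and drives the entire proof.
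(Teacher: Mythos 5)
Your proof is correct. Note, however, that the paper does not actually prove this lemma at all: its ``proof'' is a one-line citation to Theorem 3.6 of the reference on Fej\'er-kernel convergence, so your argument is the first self-contained one here. What you give is the classical good-kernel (approximate identity) proof: non-negativity, unit normalization after the rescaling $t=\pi u/a$, and the $O(1/M)$ decay of the kernel away from the origin, combined with uniform continuity of the periodic function $g$ and a split of the integral at $|u|=\delta$. All three kernel properties check out (in particular your monotonicity argument for $\sin^2(\pi u/2a)$ on $[0,a]$ is valid since the argument stays in $[0,\pi/2]$), and your reading of the lemma's sloppily stated conclusion --- an exact identity with no limit symbol --- as uniform convergence of $\sigma_M$ to $g$ as $M\to\infty$ is the correct interpretation; you also correctly resolve the statement's typo $N$ versus $M$. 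The one fact you import without proof is the unit average of the standard Fej\'er kernel over $[-\pi,\pi]$, which is standard (it follows from $\frac{1}{M}\bigl(\frac{\sin(Mt/2)}{\sin(t/2)}\bigr)^2=\sum_{|j|<M}\bigl(1-\frac{|j|}{M}\bigr)e^{ijt}$ and term-by-term integration) and is a far smaller debt than the paper's wholesale citation. Your approach buys transparency that matters for this paper specifically: the localization property (iii) is exactly the mechanism exploited in Appendix C, where the Fej\'er factors collapse the expectation integrals onto the point $y=0$, so making it explicit clarifies why the closed-form SNR approximation of Proposition 1 works.
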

\begin{proof}
	The detailed proof can be refereed to Theorem 3.6 in \cite{rust2013convergence}.
\end{proof}

According to Lemma \ref{FejerKernel}, we have $\lim _{M \rightarrow \infty} \frac{1}{2 a} \int_{-a}^a g(x+u) \tilde F_{M}(u) d u \rightarrow g(x)$, and thus, it follows that 
\vspace{-2mm}
\begin{equation}\label{EquationLemmaFejer}
	\lim _{M \rightarrow \infty} \frac{1}{2 a} \int_{-a}^a g(u) \tilde F_{M}(u) d u \rightarrow g(0).
	\vspace{-2mm}
\end{equation}

Furthermore, let $y = 2\cos \left( {{ \varphi_{n|n-1}}} \right) - 2\cos \left( \varphi_n  \right)$, where ${ \varphi_{n|n-1}} = { \varphi_{n}} + \omega_\varphi$ and $\omega_\varphi \in \mathcal{C} \mathcal{N}(0, {\sigma^2_{\omega_\varphi}} )$, then the probability density function (PDF) of $y$ is given in (\ref{ProbablityEquation}), as shown at the top of the next page.
\begin{figure*}
\begin{align}\label{ProbablityEquation}
	\vspace{-2mm}
	P_\varphi(y) &= \sum\limits_{i = -\infty}^{\infty} {G}{\left( {2(i + 1)\pi  -  {\arccos \left( {{{\frac{y}{2}}} + \cos \left( {{\varphi_n}} \right)} \right) - {\varphi_n}}} \right)^\prime } - {G}{\left( {2i\pi  + \arccos \left( {{\frac{y}{2}} + \cos \left( {{\varphi_n}} \right)} \right) - {\varphi_n}} \right)^\prime }  \nonumber \\
	&= \sum\limits_{i = -\infty}^{\infty} \frac{1}{{\sqrt {1 - {{\left( {{\frac{y}{2}} + \cos \left( {{\varphi_n}} \right)} \right)}^2}} }}\frac{1}{2{\sqrt {2\pi } {\sigma_{\omega_\varphi}}}}\left( {{e^{ - \frac{{{{\left( {2(i + 1)\pi  -  {\arccos \left( {{\frac{y}{2}} + \cos \left( {{\varphi_n}} \right)} \right)}  - {\varphi_n}} \right)}^2}}}{{2{\sigma^2_{\omega_\varphi}}}}}} + {e^{ - \frac{{{{\left( {2i\pi  + \arccos \left( {y + \cos \left( {{\varphi_n}} \right)} \right)} - \varphi_n \right)}^2}}}{{2{\sigma^2_{\omega_\varphi}}}}}}} \right),
\end{align}
\end{figure*}
In (\ref{ProbablityEquation}), ${G}(\cdot)$ is the cumulative distribution function (CDF) of Gaussian function. Define \[H(y) = \left\{ {\begin{array}{*{20}{c}}
		{{F_{{M_t}}}\left( {\frac{y}{2}} \right){F_{{M_r}}}\left( {\frac{y}{2}} \right){P_\varphi }(y),}&{y \in {\cal{Y}}}\\
		0&{{\rm{otherwise}}}
\end{array}} \right.,\]
where $ {\cal{Y}} = \left[ { - 1 - \cos {\varphi _n},1 - \cos {\varphi _n}} \right]$, and $\tilde{H}(y)$ is real valued, continue function with period $2$. Let $\tilde{H}(y) = \sum\limits_{i = -\infty}^{\infty} H(y - 2i)$, Then, it follows that
\begin{equation}\label{EquationProbability}
\begin{aligned}
	&\mathbb{E}_{\varphi}\left[ F_{L_x}( 2\Delta \cos \varphi_n)F_{M_t}(\Delta \cos \varphi_n) F_{M_r}(\Delta \cos \varphi_n)\right]	\\
	= &	\int\limits_{ - 2 - 2\cos {\varphi_n}}^{2 - 2\cos {\varphi_n}} {  \left( F_{L_x}\left( y\right)F_{M_t}\left(\frac{y}{2}\right) F_{M_r}\left(\frac{y}{2}\right)\right) P_\varphi(y)dy} \\
	\stackrel{(d)}\approx & \int\limits_{ - 1 - \cos {\varphi_n}}^{1 - \cos {\varphi_n}} { \tilde{H}(y)   \tilde F_{L_x}\left( y\right)dy} \\
	\stackrel{(e)}= & 2 \tilde{H}\left(0\right) = 2 M_t M_r P_\varphi(0),
\end{aligned}
\end{equation} 
where $F_{M}(x) = \frac{1}{M}\left(\frac{\sin \frac{M \pi x}{2 }}{\sin \frac{\pi x}{2 }}\right)^{2}$ and $P_\varphi(0) \!=\!  \sum\limits_{i = -\infty}^{\infty} \!\frac{1}{2{\sqrt {2\pi {\sigma^2_{\omega_\varphi}} \sin^2({\varphi_n})} }}\! \left({{e^{ - \frac{{{{\left( {2i\pi} \right)}^2}}}{{2{\sigma^2_{\omega_\varphi}}}}}} \!+\! {e^{ - \frac{{{{\left( {2(i + 1)\pi  } - 2\varphi_n \right)}^2}}}{{2{\sigma^2_{\omega_\varphi}}}}}}} \!\right)\!$. In (\ref{EquationProbability}), ($d$) holds since $H(y) \approx 0$ for $y \notin [-1 - \cos(\varphi_n), 1- \cos(\varphi_n)]$, and ($e$) holds based on (\ref{EquationLemmaFejer}), i.e., let $a = 1$ in Lemma \ref{FejerKernel}. Similarly, $\mathbb{E}_{ \phi}\left[ F_{L_y}\left( 2\Delta \cos \phi_n\right)\right] \!\approx\!\!  \sum\limits_{i = -\infty}^{\infty} \!\frac{1}{{\sqrt {2\pi {\sigma^2_{\omega_\phi}} \sin^2({\phi_n})} }}\left( {{e^{ - \frac{{{{\left( {2i\pi} \right)}^2}}}{{2{\sigma^2_{\omega_\phi}}}}}} \!+\! {e^{ - \frac{{{{\left( {2(i + 1)\pi  } - 2\phi_n \right)}^2}}}{{2{\sigma^2_{\omega_\phi}}}}}}} \!\right)\!$, and thus the expectation of the received echo's power can be given by 
\begin{equation}
	\gamma^{{\mathrm{S}},x}_n = \beta^{\mathrm{R}}\eta \frac{  \Delta T {P^{\max}}  \beta_{G,n}^2  M_t M_r L  h(\varphi_n, \sigma^2_{\omega_\varphi}) h(\phi_n, \sigma^2_{\omega_\phi}) }{{\sigma_s^2} } ,
\end{equation}
where 
$h(x, y) \!=\!  \frac{1}{{\sqrt {2\pi {y}\sin^2({x})} }} \sum\limits_{i = -\infty}^{\infty} \!\bigg(\!{e^{ - \frac{{{{2( {i\pi} )}^2}}}{{{y}}}}} + {e^{ - \frac{{{{2\left( {(i + 1)\pi  } - x \right)}^2}}}{{{y}}}}}\!\bigg)\!$. Similarly, it can be readily proved that $\gamma^{{\mathrm{S}},y}_n \approx \gamma^{{\mathrm{S}},x}_n$, and thus completes the proof.

\section*{Appendix D: \textsc{Proof of Proposition \ref{ClosedFormAchievable}}}
According to Lemmas \ref{OptimalPhaseShift} and \ref{EqualPowerSplitting}, the SNR at the communication device during the joint S\&C stage can be given by
\begin{align}\label{AchievableRateCommunication}
	\gamma^{\mathrm{S\&C}}_n \!= & (1-\!\beta^{\mathrm{R}})\frac{ P^{\max} {\beta_{G,n}} \beta_{h} L M_t }{{\sigma_{c}^{2}}}\mathbb{E}_{\phi}\left[F_{L_y}\left( \Delta \cos \phi_n\right)  \right]  \\
	& \times \mathbb{E}_{\varphi}\left[ F_{L_x}\left( \Delta \cos \varphi_n\right)F_{M_t}\left(\Delta \cos \varphi_n\right) \right]  \nonumber \\
	=& \left(1 - \beta^{\mathrm{R}} \right)\frac{ 4P^{\max}  {\beta_{G,n}} \beta_{h}   L M_t }{\sigma_{c}^{2}} h(\varphi_n, \sigma^2_{\omega_\varphi}) h(\phi_n, \sigma^2_{\omega_\phi}). \nonumber
\end{align}
During the communication-only stage, all the signal will be refracted to the direction of the device for communication improvement, i.e., $\beta^{\mathrm{C,R}} = 0$ and $\beta^{\mathrm{C,T}} = 1$.
The proof is similar to that of Proposition \ref{InftyBand}, for which the details are omitted for brevity.

\vspace{-1mm}
\section*{Appendix E: \textsc{Proof of Theorem \ref{SensingCondition}}}
\vspace{-1mm}
\label{EqualProblemP1}
It is not difficult to verify that with any given $\eta$, $\hat R_n$ defined in (\ref{AchievableRateApproximate}) is concave about $\beta$. Similarly, with any given $\beta^{\mathrm{R}}$, $\hat R_n$ is concave about $\eta$. Then, $\hat R$ can be given as a function w.r.t. $\eta$ and $\beta^{\mathrm{R}}$, i.e.,
\vspace{-1mm}
\begin{align}
	\hat R_n(\eta ,\beta^{\mathrm{R}} ) \! =& \eta {\log _2}\left( {1 +  C_1 {\left(1-\beta ^{\mathrm{R}}\right)}} \right) \\
	&+ ( {1 \!- \eta } ){\log _2}\!\left(\! {1 \!+\! D_1\sqrt {(\eta \beta _k^{\mathrm{R}} +\! D_2)(\eta \beta _k^{\mathrm{R}} +\! D_3)} } \!\right)\!	, \nonumber
	\vspace{-1mm}
\end{align}
where $D_1 = {{C_1}\sqrt {\frac{{\sigma _{{\omega _\varphi }}^2\sigma _{{\omega _\phi }}^2}}{{{A_{\varphi_n} }{A_{\phi_n} }}}} }$, $D_2 = \frac{{{A_{\varphi_n} }}}{{\sigma _{{\omega _\varphi }}^2}}$, and $D_3 = \frac{{{A_{\phi_n} }}}{{\sigma _{{\omega _\phi }}^2}}$. Then, the partial derivative of $\hat R_n(\eta ,\beta^{\mathrm{R}} )$ w.r.t. $\eta$ can be given by $\frac{\partial \hat R_n(\eta ,\beta^{\mathrm{R}} )}{\partial \eta} 
=  {\log _2}\left( 1 + { C_1 {\left(1-\beta ^{\mathrm{R}}\right)}} \right)+ \frac{{D_1\left( {1 - \eta } \right)\left( {2{{\left( {{\beta ^{\mathrm{R}}}} \right)}^2}\eta  + \left( {{D_2} + {D_3}} \right){\beta ^{\mathrm{R}}}} \right)}}{{2\sqrt {\left( {\eta {\beta ^{\mathrm{R}}} + {D_2}} \right)\left( {\eta {\beta ^{\mathrm{R}}} + {D_3}} \right)} \left( {1 + D_1\sqrt {\left( {\eta {\beta ^{\mathrm{R}}} + {D_2}} \right)\left( {\eta {\beta ^{\mathrm{R}}} + {D_3}} \right)} } \right)\ln 2}} - {\log _2}\left( {1 + D_1\sqrt {\eta {\beta ^{\mathrm{R}}} + {D_2}} \sqrt {\eta {\beta ^{\mathrm{R}}} + {D_3}} } \right) $.
With any given $\beta^{\mathrm{R}}$, if $ \left. {\frac{\partial \hat R_n(\eta ,\beta^{\mathrm{R}} )}{\partial \eta}} \right|_{\eta=0} \le 0$ always holds, we have $\hat R_n(\eta ,\beta^{\mathrm{R}} ) \le \hat R_n(0 ,\beta^{\mathrm{R}} )$, and $\eta^* = 0$ at the optimal solution of (P3). In the following, it will be proved $ \left. {\frac{\partial \hat R_n(\eta ,\beta^{\mathrm{R}} )}{\partial \eta}} \right|_{\eta=0} \le 0$ always holds if and only if $\frac{{\sigma _{{\omega _\phi }}^2}}{{{A_{\phi_n} }}} + \frac{{\sigma _{{\omega _\varphi }}^2}}{{{A_{\varphi_n} }}} \le 2$. When $\eta = 0$, the partial derivative of $\hat R_n(\eta ,\beta^{\mathrm{R}} )$ w.r.t. $\eta$ is given by
\vspace{-1mm}
\begin{equation} 
	\begin{aligned}
		&\left. {\frac{\partial \hat R_n(\eta ,\beta^{\mathrm{R}} )}{\partial \eta}} \right|_{\eta=0} \\
		= & {\log _2}\left( {1 + C_1 {\left(1-\beta ^{\mathrm{R}}\right)}} \right) - {\log _2}\left( {1 + D_1\sqrt {{D_2}{D_3}} } \right)  \\
		&+ \frac{{D_1\left( {{D_2} + {D_3}} \right){\beta ^{\mathrm{R}}}}}{{2\sqrt {{D_2}{D_3}} \left( {1 + D_1\sqrt {{D_2}{D_3}} } \right)\ln 2}} \buildrel \Delta \over = g(\beta ^{\mathrm{R}}).	
	\end{aligned}
\end{equation}
It can be verified that $g(\beta ^{\mathrm{R}})$ is concave about $\beta^{\mathrm{R}}$. If $g'(0) \le 0$, $g(\beta ^{\mathrm{R}}) \le g(0) = 0$, i.e., $ \left. {\frac{\partial \hat R(\eta ,\beta^{\mathrm{R}} )}{\partial \eta}} \right|_{\eta=0} \le 0$ always holds, where
$	g'(\beta ^{\mathrm{R}}) = \frac{1}{{\ln 2}}\left(\frac{{ - {C_1}}}{{1+ C_1 {\left(1-\beta ^{\mathrm{R}}\right)}}} + \frac{{D_1\left( {{D_2} + {D_3}} \right)}}{{2\sqrt {{D_2}{D_3}} \left( {1 + D_1\sqrt {{D_2}{D_3}} } \right)}} \right)$. Therefore, when $g'(0) = \frac{1}{{\ln 2}}\left(\frac{{ - {C_1}}}{1+C_1} + \frac{{D_1\left( {{D_2} + {D_3}} \right)}}{{2\sqrt {{D_2}{D_3}} \left( {1 + D_1\sqrt {{D_2}{D_3}} } \right)}}\right) \le 0$, we have $\frac{{\sigma _{{\omega _\phi }}^2}}{{{A_{\phi_n} }}} + \frac{{\sigma _{{\omega _\varphi }}^2}}{{{A_{\varphi_n} }}} \le 2$, and thus $\eta^* = 0$ at the optimal solution of (P3) since $ \left. {\frac{\partial \hat R(\eta ,\beta^{\mathrm{R}} )}{\partial \eta}} \right|_{\eta=0} \le 0$ always hold in this case. Otherwise, there always exist $\beta^{\mathrm{R}}$ making $g(\beta ^{\mathrm{R}}) > 0$, i.e., $\hat R(\eta ,\beta^{\mathrm{R}} ) > \hat R(0 ,\beta^{\mathrm{R}} )$ for $\eta > 0$. In this case, at the optimal solution of (P3), $\eta^* > 0$. Thus, the proof is completed.

\vspace{-1mm}
\section*{Appendix F: \textsc{Proof of Lemma \ref{InterferenceThm}}}
\vspace{-1mm}
Similar to the proof of Proposition \ref{InftyBand}, let $y =  \cos(\varphi_{k,n}) - \cos(\varphi_{j,n|n-1})$. When $M_t \to \infty$, it follows that
\begin{align}
&\mathbb{E}_{\varphi_{j,n|n-1}}\left[F_{M_t}\left( \cos(\varphi_{k,n}) -  \cos(\varphi_{j,n|n-1})\right)  \right] \\
=& \int\nolimits_{ - 1 - \cos {\varphi_{k,n}}}^{1 - \cos {\varphi_{k,n}}} {  \left( F_{L_x}\left( y\right)F_{M_t}\left({y}\right) F_{M_r}\left({y}\right)\right) P_\varphi(y)dy} \nonumber\\
=& \frac{1}{{\sqrt {2\pi {\sigma^2_{\omega_{\varphi}}} \sin^2({\varphi_{k,n}})} }} \sum\limits_{i = -\infty}^{\infty} \bigg( {e^{ - \frac{{{{\left( {2i\pi} + \varphi_{k,n} - \varphi_{j,n|n-1} \right)}^2}}}{{2{\sigma^2_{\omega_{\varphi}}}}}}} \nonumber \\
&+ {e^{ - \frac{{{{\left( {2(i + 1)\pi  } - \varphi_{k,n} - \varphi_{j,n|n-1}\right)}^2}}}{{2{\sigma^2_{\omega_{\varphi}}}}}}}\bigg),\nonumber
\vspace{-2mm}
\end{align}
and thus completes the proof.

\vspace{-2mm}
\normalsize 
\section*{Appendix G: \textsc{Proof of Proposition \ref{UpperBound}}}

Before proving Proposition \ref{UpperBound}, we first introduce a Lemma to facilitate its derivation. 

\begin{thm}\label{PowerCeiling}
	When the maximum transmit power $P^{\max}$ is larger than a saturation point, the optimal communication of (P4) will not increase when $P^{\max}$ increases.
\end{thm}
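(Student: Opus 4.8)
The plan is to exploit two structural features of the interference-limited regime: once the echo SNR is high enough the stage rates in (\ref{ClosedFormAchievableInterferenceLImited}) and (\ref{ClosedFormAchievableCInterferenceLImited}) are pure signal-to-interference expressions, and such expressions are invariant under a common scaling of the transmit powers. I would begin by recording the easy monotone direction: the budget $P^{\max}$ enters (P4.2) only through the constraints (\ref{P1}c) and (\ref{P1}d), so increasing $P^{\max}$ enlarges the feasible set and the optimal value $R^*(P^{\max})$ is non-decreasing. It therefore suffices to show that $R^*(P^{\max})$ stops growing once $P^{\max}$ exceeds a finite threshold $P_0$.

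Next I would make the scale-invariance precise. In (\ref{ClosedFormAchievableInterferenceLImited}) the factor $(1-\beta_k^{\mathrm{R}})$ cancels between numerator and denominator as the noise vanishes, so $R^{\mathrm{IL,S\&C}}_{k,n}$ depends on $\{p_j^{\mathrm{S\&C}}\}$ only through the ratios $p_k^{\mathrm{S\&C}}/\sum_{j\neq k}p_j^{\mathrm{S\&C}}$ together with the fixed kernels $\tilde I(\cdot)$; likewise $R^{\mathrm{IL,C}}_{k,n}$ in (\ref{ClosedFormAchievableCInterferenceLImited}) depends on $\{p_j^{\mathrm{C}}\}$ only through $p_k^{\mathrm{C}}/\sum_{j\neq k}p_j^{\mathrm{C}}$ and the deterministic $F_{M_t}(\cdot)$. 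Hence replacing $\{p_j^{\mathrm{S\&C}}\}\to\{c_1 p_j^{\mathrm{S\&C}}\}$ and $\{p_j^{\mathrm{C}}\}\to\{c_2 p_j^{\mathrm{C}}\}$ for any $c_1,c_2>0$ leaves every $R_{k,n}=\eta R^{\mathrm{IL,S\&C}}_{k,n}+(1-\eta)R^{\mathrm{IL,C}}_{k,n}$ unchanged, so the power optimization collapses to an optimization over the normalized ratio vectors, whose feasible simplex is independent of $P^{\max}$.

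The only residual dependence on the absolute power scale enters through the sensing accuracy: by Proposition \ref{ClosedFormAchievableMultiX} the tracking variances $\sigma^2_{\tilde\varphi_{k,n}}$ and $\sigma^2_{\tilde\phi_{k,n}}$ decrease monotonically toward zero as $p_k^{\mathrm{S\&C}}\eta\beta_k^{\mathrm{R}}$ grows (cf.\ Lemma \ref{VarianceMonotonicity}), which is exactly the condition under which (\ref{ClosedFormAchievableCInterferenceLImited}) replaces the random kernel by the deterministic $F_{M_t}$ evaluated at the true angles. I would fix the max-min-optimal ratio vector and argue that there is a finite $P_0$ such that for every $P^{\max}\ge P_0$ the optimizer can hold all $K$ sensing powers above the level at which these variances fall below any prescribed tolerance; beyond $P_0$ the sensing contribution is already saturated and scaling the powers further only shrinks the variances harmlessly.

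Putting the pieces together, for $P^{\max}\ge P_0$ the objective of (P4.2) is a function of the scale-free ratio vectors alone, so $R^*(cP_0)=R^*(P_0)$ for all $c\ge 1$, and the optimal rate does not increase past the saturation point, which proves the lemma. The step I expect to be the main obstacle is establishing that this threshold $P_0$ is finite and simultaneously attainable for all users, i.e.\ cleanly decoupling the absolute-scale dependence, which saturates because the echo SNR and hence the estimation accuracy can be driven arbitrarily high, from the ratio dependence, which carries all the remaining degrees of freedom and is genuinely scale-invariant.
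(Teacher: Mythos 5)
Your proposal is correct and follows essentially the same route as the paper: monotonicity of the optimal value via feasible-set inclusion, followed by scale-invariance of the SIR expressions in (P4.2) once noise and sensing error are negligible, which pins the optimal value beyond the saturation point. The only differences are presentational — the paper realizes the scale-invariance by scaling an optimal solution from $P^{\max}$ down to $\rho P^{\max}$ rather than quotienting out to the simplex of power ratios as you do, and you treat the finite-power saturation of the tracking variances more explicitly, whereas the paper simply declares (P4) equivalent to (P4.2) once $P^{\max}$ exceeds $\tilde P^{\max}$.
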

\begin{proof}
	First, it can be readily proved that the optimal achievable rate is a non-decreasing function about $P^{\max}$, since the optimal solution under $P^{\max}$ always satisfies all the constraints of (P4) under $P^{\max} + \Delta P$ with $\Delta P > 0$. When the transmit power is sufficiently large, (P4) is equivalent to (P4.2) since the noise is negligible. Without loss of generality, the saturation value of the transmit power is denoted by $\tilde P^{\max}$ when (P4) is equivalent to (P4.2). Under a given transmit power $P^{\max} > \tilde P^{\max}$, the optimal solution of (P4.2) is denoted by $\{{\bm{p}}^{S\&C*}, {\bm{p}}^{C*}, {\bm{\beta}}^{{\mathrm{R}}*}, \eta^*\}$. With another transmit power $\rho P^{\max}$ larger than the saturation point, where $\rho \le 1$, we can construct another solution $\{\rho{{\bm{p}}^{S\&C*}}, \rho{\bm{p}}^{C*}, {\bm{\beta}}^{{\mathrm{R}}*}, \eta^*\}$, the corresponding SIR in (\ref{ClosedFormAchievableInterferenceLImited}) and (\ref{ClosedFormAchievableCInterferenceLImited}) is equal to that of the solution $\{{\bm{p}}^{S\&C*}, {\bm{p}}^{C*}, {\bm{\beta}}^{{\mathrm{R}}*}, \eta^*\}$. Thus, when the maximum transmit power $P^{\max}$ is larger than a saturation point, the optimal communication of (P4) will not increase when $P^{\max}$ increases.
\end{proof}

Under the high transmit power region, $\eta R  ^{\mathrm{IL, S\&C}}_{k,n} +  \left( {1 - \eta } \right) R^{\mathrm{IL,C}}_{k,n} \le R^{\mathrm{IL,C}}_{k,n}$ due to $R^{\mathrm{IL, S\&C}}_k \le R^{\mathrm{IL, C}}_k$. Then, we further find the upper bound of $\mathop {\max }\limits_{k} \ R^{\mathrm{IL,C}}_{k,n}$. It can be readily proved that, at the optimal solution of $\mathop {\max }\limits_{k} \ R^{\mathrm{IL,C}}_{k,n}$, the optimal SIR of each communication device is equal, i.e., $\gamma^{\mathrm{C}}_k = \gamma^{\mathrm{C}}_j$. Then, it follows that
\vspace{-1mm}
\begin{equation}\label{EquationSloving}
	\frac{{p_k^{\mathrm{C}} M_t}}{{\sum\nolimits_{j \ne k}^K {p_j^{\mathrm{C}}{F_{k,j}}} }} = \frac{P^{\max} M_t}{{\sum\nolimits_{k = 1}^K {\sum\nolimits_{j \ne k}^K {p_j^{\mathrm{C}}{F_{k,j}}} } }}, \forall k \in {\cal{K}}, 
	\vspace{-1mm}
\end{equation}
and
\vspace{-1mm}
\begin{equation}
	{\sum\nolimits_{k = 1}^K {p_j^{\mathrm{C}}}}  = P^{\max}.
	\vspace{-1mm}
\end{equation}
Then, the optimal solution of $\mathop {\max }\limits_{k} \ R^{\mathrm{IL,C}}_{k,n}$ is given by
\vspace{-1mm}
\begin{equation}
	\bar R = {\log _2}\left( {1 + \frac{p^{\max} M_t}{{\sum\nolimits_{k = 1}^K {\sum\nolimits_{j \ne k}^K {p_j^{\mathrm{C}}{F_{M_t}\left(\cos(\hat \varphi_{k,n}), \cos(\hat \varphi_{j,n})\right)}} } }}} \right), 
	\vspace{-1mm}
\end{equation}
where $p_j^{\mathrm{C}}$ is obtained by solving the equations in (\ref{EquationSloving}). Then, we have 
\vspace{-1mm}
\begin{equation}
	\begin{aligned}
	&\frac{p^{\max} M_t}{{\sum\nolimits_{k = 1}^K {\sum\nolimits_{j \ne k}^K {p_j^{\mathrm{C}}{F_{M_t}\left(\cos(\hat \varphi_{k,n}), \cos(\hat \varphi_{j,n})\right)}} } }} \\
	\le& \frac{p^{\max} M_t}{{\mathop {\min }\limits_k \sum\nolimits_{j \ne k}^K {{F_{M_t}\left(\cos(\hat \varphi_{k,n}), \cos(\hat \varphi_{j,n})\right)}} }\sum\nolimits_{k = 1}^K p_k^{\mathrm{C}}} 	\\
	=& \frac{ M_t}{{\mathop {\min }\limits_k \sum\nolimits_{j \ne k}^K {{F_{M_t}\left(\cos(\hat \varphi_{k,n}), \cos(\hat \varphi_{j,n})\right)}} }},
	\vspace{-1mm}
	\end{aligned}
\end{equation}
and thus completes the proof.

\footnotesize  	
\bibliography{mybibfile}
\bibliographystyle{IEEEtran}

\end{document}